\theoremstyle{plain}
\theoremstyle{plain}
\newtheorem{assumption}{\protect\assumptionname}[section]
\theoremstyle{plain}
\newtheorem{prop}{\protect\propositionname}
\theoremstyle{plain}
\newtheorem{lem}{\protect\lemmaname}[section]
\theoremstyle{plain}
\newtheorem{remark}{Remark}[section]
\newtheorem{definition}{Definition}
\providecommand{\assumptionname}{Assumption}
\providecommand{\lemmaname}{Lemma}
\providecommand{\propositionname}{Proposition}
\providecommand{\theoremname}{Theorem}
\numberwithin{equation}{section}
\begin{document}
	
	\title{Identification and Estimation of A Rational Inattention Discrete Choice Model with Bayesian Persuasion}
	\author{Moyu Liao\thanks{First draft: 03/22/2019. This version: 09/16/2020. I would like to thank Marc Henry, Sun Jae Jun, Peter Newberry, Karl Schurter, Jia Xiang, Zhiyuan Chen and conference attenders at the 2019 Econometric Society Asian Meeting for useful comments. }}
	\affil{Pennsylvania State University}
	\maketitle
	
	\begin{abstract}
		This paper studies the semi-parametric identification and estimation of a rational inattention model with Bayesian persuasion. The identification requires the observation of a cross-section of market-level outcomes. The empirical content of the model can be characterized by three moment conditions. A two-step estimation procedure is proposed to avoid computation complexity in the structural model. In the empirical application, I study the persuasion effect of Fox News in the 2000 presidential election. Welfare analysis shows that persuasion will not influence voters with high school education but will generate higher dispersion in the welfare of voters with a partial college education and decrease the dispersion in the welfare of voters with a bachelors degree. 
	\end{abstract}
	
	\pagebreak

	\section{Introduction}
	In many applications of discrete choice models, econometricians usually assume the decision maker has the following random utility from choosing item $j$ among a choice set $\mathcal{J}={1,...J}$: $U_{j}=u_{j}+\epsilon_{j}$,
	where $u_{j}$ is the mean utility observed by the econometricians and $\epsilon_{j}$ is the utility shock known to the decision maker but not the econometrician. Decision makers in the model choose the item with the highest utility. When the unobserved shock follows the Type I extreme value distribution, we can solve the probability of choosing $j$ analytically. Aggregating the choice outcomes of the decision makers in the market we can get the market share of an item. This approach to studying market structure was initiated by \citet*{mcfadden1973}, and then adopted by  \citet*{Berry1995} (henceforth BLP) to study automobile markets, and became widely applied to other industries.
	
	This model, however, is not easily adaptable to accommodate persuasion in a structural way. Take advertising as a form of persuasion. In the classical analysis of the effect of advertising, three approaches are adopted. The first is to model advertising as a feature of the item that enters mean utility $u_{j}=u_{j}(A)$, where the level of advertising $A$ affects the choice utility. The argument is that advertising is `persuasive' and the individual will buy more of the advertised goods because their utility is distorted \citep*{Dorfman1954}. This reduced form approach does not offer us much explanation of how advertisement influences decision making and market structure. The second approach is to model advertisement as the trigger of the consideration set change \citep*{Goeree2008}. The consideration set is the priori set of items that the decision maker chooses from. Advertisement thus serves as the trigger that puts a previously non-considered item into the consideration set. This approach views advertisement as the information revealing device that reveals the true $\epsilon_{j}$ to the decision maker which was previously $-\infty$ to the decision maker. If a good $j$ is already in the consideration set for all customers, the model of consideration set predicts that advertising has no effect on the market share. If a good is well known, the model of consideration set cannot explain why sellers advertise. The third approach is to view advertising as a signaling device to separate the high-quality product from the low-quality product \citep*{Nelson1974,Bagwell1988}. The degree of advertisement serves as the signal that induces the separating equilibrium where only high-quality firms advertise. In particular, they assume the unobserved quality is common for all decision makers. However, this approach requires the decision maker in the model to have imperfect knowledge of $\epsilon_{j}$, which contradicts the assumption that $\epsilon_j$ is known by the decision maker.   
	
	Compared to the classical approaches to model persuasion, this paper develops an empirical model of persuasion using the Bayesian persuasion theory in  \citet{Kamenica2011}. The Bayesian persuasion approach to model advertising differs from the previously mentioned informative view in two ways: 1. Decision makers in the model can have a different realization of the product quality; 2. The advertiser, who acts as the Bayesian persuader, does not always want to reveal their quality honestly. However, similar to the informative view, the Bayesian persuasion model assumes that the decision maker in the model only has a prior belief on the $\{\epsilon_{j}\}_{j=1}^J$ and the exact realization of $\{\epsilon_{j}\}_{j=1}^J$ is unknown. The decision maker's prior distribution of $\{\epsilon_{j}\}_{j=1}^J$ comes from the reputation of the goods. The prior belief is likely to be common across decision makers. However, the standard Bayesian persuasion model assumes the decision makers only have access to the signal sent by the persuader to update their belief and no other sources of information are available. In the real world, decision makers will also search actively for information on the goods' quality by themselves. For example, if a person wants to buy a car, he or she will have a test drive before making a  decision. An extensive search of information can reduce the randomness of  $\{\epsilon_{j}\}_{j=1}^J$ but at the time is costly. \citet*{Matejka2015} considers a model where the decision maker searches information of $\{\epsilon_{j}\}_{j=1}^J$ to maximize the expected utility after deducting the search cost. Their rational inattention discrete choice model can incorporate Bayesian persuasion by assuming persuaders send signals after the decision makers get their own information. 
	
	The analysis of the structural persuasion and information search model has largely been discussed under the assumption that the decision makers' prior belief of $\{\epsilon_{j}\}_{j=1}^J$, denoted by $G$, is known to the economist. In empirical researches, the prior belief $G$ is unknown and should be estimated from data. A recent empirical study by \citet{XiangjiaJMP} assumes the decision makers' prior distribution $G$ is normally distributed and analyzes the decision makers' welfare change when a policy change induces the persuader to change the persuasion strategy. However, the empirical content of a parametric assumption on $G$ is unclear.
	
	
	This paper follows \citet*{Matejka2015} to consider a rational inattention discrete model with Bayesian persuasion. I discuss the non-parametric identification of the prior distribution $G$ and parametric identification of persuader's persuasion strategy when an econometrician observes the choice ratio at the market level across many independent markets. The independent markets are divided into two groups: the first group is not influenced by the persuader and the second group is influenced by the persuader. The prior distribution  $G$ is identified from the choice ratio in the first group of markets.  Given the identification of $G$, a parametric persuasion strategy is identified from the second group of markets. I characterize a set of moment conditions implied by the model,  and the standard estimation method such as GMM can be applied easily. 
	
	For econometricians who already observe the market shares with and without the influence of persuasion, identifying the persuasion strategy is the first step to understanding the behavior of the persuader. If we assume the persuader use a persuasion strategy is to maximize some utility function, the identified persuasion strategy can help us understand the persuader's objective function. Analysis in this paper leaves the persuader's objective function as unknown and analyzes the behavior from the buyers' side. A complete two-sided analysis will incorporate the persuader's utility as a function of persuasion strategy and analyze the problem as a sequential game played between the persuader and the buyers.
	
	For policymakers, given the knowledge of the prior belief $G$, they will be able to evaluate the effect of regulating the persuasion strategy. In the advertisement market, the policymakers for example can ban one seller from directly revealing information about his competitors' products. Moreover, policymakers can also evaluate the effect of providing less costly information to the decision makers. In other words, policymakers can compete with existing persuaders in the markets to increase the decision makers' welfare. 
	
	In the empirical application, I look at the 2000 presidential election in the United States. I treat the presidential candidates as voters' choices and view voting statistical areas as separated markets. In 1996, Fox News was developed and then entered into approximately 30\% of the towns in the United States by 2000. \citet*{DellaVigna2007} shows that Fox News motivated voters to vote for Republicans compared to voters in towns without Fox News. I take the data and analyze how Fox News persuaded voters in different towns. The estimated results from the markets without Fox News show that the prior belief of the quality of the presidential candidates varies a lot with voters' education level. Both voters with bachelor's degrees and with only high school degrees prefer the Democratic party than the Republican Party. The estimated results also show that Fox News provided very little information to voters, but managed to manipulate the voting outcome by a significant margin. I also compare the welfare of voters with different education levels. Voters' welfare is defined as the probability of choosing their first best choice, and their first best choice is the presidential candidate that will generate the highest utility to voters when the voters know the realization of $\{\epsilon_{j}\}_{j=1}^J$. The result shows that persuasion will not influence the welfare of voters with high school education but will generate higher dispersion in the welfare of voters with a partial college education and decrease the dispersion in the welfare of voters with a bachelors degree.

	Another way to study the effect of persuasion is to model the presence of a persuader as a treatment status \citep*{Jun2018}. In their model, the presence of a Bayesian persuader is taken as treatment assignment and sharp bounds on the persuasion effect are given under various data generating processes. The treatment effect model does not specify the decision makers' utility and thus analysis of the decision makers' welfare before and after persuasion is not possible. The treatment effect model also makes it hard to consider policy counterfactual such as regulations on persuasion strategy or when the policymaker provides extra information in the market. 
	
	The rest of the paper is organized as follows. Section 2 introduces the rational inattention discrete choice model with persuasion. Section 3 discusses the data generating process and the identification strategy. Section 4 discusses the estimation strategy. Section 5 studies the 2000 presidential election and the effect of Fox News. Section 6 concludes.

	\section{The Model}
	I consider the standard random utility specification: a decision maker (DM) derives utility level $U_{j}$ from good $j$ from the choice set $\mathcal{J}=\{1,..., J\}$:
	\[
	U_{j}=u_{j}+\epsilon_{j}.
	\]
	The $u_j$ is the mean utility of choosing good $j$ and $\epsilon_{j}$ is the individual specific random draw of utility shock. Throughout this section, I assume that the decision maker knows only $(u_1,...,u_j)$ but not $(\epsilon_{1},...\epsilon_{J})$. The decision maker has a prior belief on the distribution $G$ on the utility shock: $(\epsilon_{1},...\epsilon_{J})\equiv \bm{\epsilon}\sim G$. If there is no further information about the true utility shock $\bm{\epsilon}$, the decision maker will choose the one with highest expected utility:
	\begin{equation}\label{eq: choice criteria}
	j\in a(G)\equiv \arg\max_{j\in\mathcal{J}} E_{G}[u_{j}+\epsilon_{j}].
	\end{equation}
	If $\arg\max_{j\in\mathcal{J}} E_{G}[u_{j}+\epsilon_{j}]$ is not a singleton, we let $a(G)$ to be an arbitrary selection of maximizers. The maximized utility derived from the belief $G$ is given by 
	\begin{equation}\label{choice utility}
	V(G)\equiv \max_{j\in\mathcal{J}} E_{G}[u_{ij}+\epsilon_{ij}].
	\end{equation}
	
	I will first introduce a rational inattention discrete choice model and then discuss how persuasion can be incorporated.

	\subsection{Rational Inattention Discrete Choice Model}
	The rational inattention discrete choice model in \citet{Matejka2015} assumes that the decision maker can choose an information strategy to get a signal $\mathbf{s}^{DM}$. The signal $\mathbf{s}^{DM}$ updates the decision makers' belief on the true utility shock $\bm{\epsilon}$. The decision maker then choose the item with highest posterior mean according to (\ref{eq: choice criteria}). Following the notation in \citet{Matejka2015}, denote $u_j+\epsilon_j\equiv v_j$. Formally, the decision maker's information strategy is a joint distribution of the true utility vector $\mathbf{v}\in \mathbb{R}^J$ and the signal $\mathbf{s}^{DM}\in \mathbb{R}^J$, denoted by $F(\mathbf{s}^{DM},\mathbf{v})$. The marginal distribution of the information strategy has to be consistent with the prior belief $G$. Once the decision maker is committed to the information strategy, the random shocks to utility are realized, and then the decision maker get a realized signal $\mathbf{s}^{DM}$ from $F(\mathbf{s}^{DM}|\mathbf{v})$. The decision maker updates his belief as $F(\bm{\epsilon}|\mathbf{s}^{DM})$, and chooses the item in $a(F(\bm{\epsilon}|\mathbf{s}^{DM}))$ according to (\ref{eq: choice criteria}). 
	
	Since the real utility shocks are not observed by the decision maker, the decision maker solves the following optimization problem to maximize his expected utility:
	\begin{equation}\label{eq: RI optimization}
	\max_{F\in \Delta({\mathcal{R}^{2J}}) } \int_{\mathbf{v}}\int_{\mathbf{s}^{DM}} V(F(\cdot|s^{DM}))F(d\mathbf{s}^{DM}|\mathbf{v})G(d\mathbf{v})-c(F)
	\end{equation}
	\begin{equation}\label{RI constraint}
	s.t. \, \int_{\mathbf{s}^{DM}}F(d\mathbf{s}^{DM},\mathbf{v})=G(\mathbf{v})
	\end{equation}
	where $V(F(\cdot|\mathbf{s}^{DM}))$ is determined by (\ref{choice utility}). The constraint (\ref{RI constraint}) requires that the DM's prior distribution $G$ is consistent with the real state of the world. The cost of information $c(F)$ is the  mutual information between the shocks $\bm{\epsilon}$ and the signal $s^{DM}$:
	\begin{equation}\label{mutual information}
	c(F)=\lambda \{H(G)-E_{\mathbf{s}}[H(F(\cdot|\mathbf{s}^{DM}))]\},
	\end{equation}
	where the parameter $\lambda$ is the unit cost of information, and $E_s$ denote the expectation over the marginal distribution of $F(\mathbf{s}^{DM},\mathbf{v})$. The entropy function $H$ of a discrete distribution $G$ is defined as:$H(G)=-\sum_k P_k \log(P_k)$,  
	where $P_k$ is the probability of the state $k$. When $G$ is continuously distributed, the differential entropy is defined as $H(G)=-\int_s g(s) \log(g(s)) ds$. The use of entropy reduction as a measure of information cost is standard in the rational inattention literature. See \citet{de2017rationally} for the discussion of entropy cost. Moreover, the entropy number is related to the complexity of a random variable, and can be given a data compression interpretation. The mutual information in (\ref{mutual information}) can be interpreted as the number of binary questions asked by acquiring signal $\mathbf{s}$. Appendix \ref{section: Data Compression and Entropy Cost} gives an example of data compression interpretation.
	
	Let $S_j^{DM} \equiv \{\mathbf{s}^{DM}\in\mathcal{R}^J: a(F(\cdot|\mathbf{s}^{DM})=j\}$ be the set of signals that lead the DM to choose $j$. Also denote 
	\begin{equation}\label{eq: conditional choice probability}
	\mathcal{P}_j(\mathbf{v})\equiv\int_{S_j^{DM}} F(d\mathbf{s}^{DM}|\mathbf{v})
	\end{equation}
	as the conditional choice probability of choosing item $j$ when the realized utility vector is $\mathbf{v}$ \footnote{Note that the DM does not know the realization of $\mathbf{v}$. The conditional choice probability should be understood to be the choice probability when the actual utility vector is $\mathbf{v}$}. Also define the unconditional choice probability of choosing $j$ as
	\begin{equation} \label{eq: unconditional choice probability}
	\mathcal{P}_j^0=\int_{\mathbf{v}} \mathcal{P}_j(\mathbf{v}) dG(\mathbf{v}).
	\end{equation}
	This is the ex-ante probability of choosing $j$ before the utility vector is realized.

	A set of optimality condition to the problem (\ref{eq: RI optimization})-(\ref{mutual information}) from \citet{Matejka2015} is  summarized in the following lemma.
	\begin{lem} \label{lemma: Matejka and Mckay}
		If $\lambda$>0 and $F$ is an optimal information strategy that solves  (\ref{eq: RI optimization})-(\ref{mutual information}), then the conditional and choice probability in (\ref{eq: conditional choice probability})  satisfies
		\begin{equation} \label{eq: Logit Form}
		\mathcal{P}_j(\mathbf{v})=\frac{\mathcal{P}^0_j {e^{v_j/\lambda}}}{\sum_{k\in\mathcal{J}}\mathcal{P}^0_k {e^{v_k/\lambda}}} \,\, a.s.,
		\end{equation}
		\begin{equation} \label{eq: con P inte to unc}
		E_G[\mathcal{P}_j(\mathbf{v})]=\mathcal{P}_j^0.
		\end{equation}
		The unconditional choice probability in (\ref{eq: unconditional choice probability}) solves the following convex optimization problem:
		\begin{equation} \label{eq: alterantive optimization}
		\begin{split}
		\max_{\{\mathcal{P}^0_j\}_{j=1}^J} \int_{\mathbf{v}} &\lambda \log(\sum_{j=1}^J \mathcal{P}^0_j e^{v_j/\lambda}) G(d\mathbf{v})\\
		&s.t. \, \, \forall j: \mathcal{P}_j^0\ge 0,\\
		&\sum_{k=1}^J \mathcal{P}_k^0 =1.
		\end{split}
		\end{equation}
		
		Conversely, if $\{\mathcal{P}_j^0\}_{j=1}^J$ is the solution to (\ref{eq: alterantive optimization}), and $\mathcal{P}_j(\mathbf{v})$  defined in (\ref{eq: Logit Form}) satisfies (\ref{eq: con P inte to unc}), then we can construct an information strategy $F$ such that:
		\begin{itemize}
			\item The signal $\mathbf{s}^{DM}$ is supported on $J$ points: $\{s_1,...s_J\}$;
			\item The conditional distribution of $\mathbf{s}^{DM}$ satisfies $Pr_F(\mathbf{s}^{DM}=s_j)= \mathcal{P}_j(\mathbf{v})$.
		\end{itemize}
		This information strategy $F$ solves the optimization problem (\ref{eq: RI optimization})-(\ref{mutual information}).
	\end{lem}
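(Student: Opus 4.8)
The plan is to follow the two directions of \citet{Matejka2015}.

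\textbf{Reduction to a choice over conditional choice probabilities.} First I would argue that, without loss of optimality, the signal $\mathbf{s}^{DM}$ may be taken to be a recommendation of an action in $\mathcal{J}$. Starting from an optimal strategy $F$ (existence of which I take from \citet{Matejka2015}), replace $\mathbf{s}^{DM}$ by the induced optimal action $a(F(\cdot\mid\mathbf{s}^{DM}))$. This is a deterministic garbling, so by the data-processing inequality it cannot increase the mutual-information cost $c(F)$; it leaves the action-state distribution unchanged, hence leaves gross utility unchanged; and by the tower property that gross utility equals $\sum_j\int_{\mathbf v}\mathcal{P}_j(\mathbf v)v_j\,G(d\mathbf v)$, which depends on $F$ only through the conditional choice probabilities in (\ref{eq: conditional choice probability}). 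The pooled recommendation is still obedient, since an average of posteriors each of which ranks $j$ first still ranks $j$ first. Thus (\ref{eq: RI optimization})--(\ref{mutual information}) reduces to choosing $\{\mathcal{P}_j(\cdot)\}_j$, with $\mathcal{P}_j^0$ as in (\ref{eq: unconditional choice probability}), to maximize $\sum_j\int\mathcal{P}_j(\mathbf v)v_j\,G(d\mathbf v)-\lambda\sum_j\int\mathcal{P}_j(\mathbf v)\log(\mathcal{P}_j(\mathbf v)/\mathcal{P}_j^0)\,G(d\mathbf v)$ over $\mathcal{P}_j(\mathbf v)\ge0$, $\sum_j\mathcal{P}_j(\mathbf v)=1$, subject to the obedience constraints $E_G[\mathcal{P}_j(\mathbf v)v_j]\ge E_G[\mathcal{P}_j(\mathbf v)v_k]$ for all $j,k$.

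\textbf{Forward direction.} I would first drop the obedience constraints; the relaxed program is concave, because mutual information is convex in the channel for a fixed input distribution. Taking first-order conditions pointwise in $\mathbf v$, while accounting for the dependence of $\mathcal{P}_j^0$ on the whole function $\mathcal{P}_j(\cdot)$, makes $v_j-\lambda\log\mathcal{P}_j(\mathbf v)+\lambda\log\mathcal{P}_j^0$ constant across the $j$ chosen with positive probability; solving and imposing $\sum_j\mathcal{P}_j(\mathbf v)=1$ delivers exactly the multiplicative logit form (\ref{eq: Logit Form}), and $E_G[\mathcal{P}_j(\mathbf v)]=\mathcal{P}_j^0$ is the consistency identity (\ref{eq: con P inte to unc}). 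Substituting (\ref{eq: Logit Form}) back into the objective, the gross-utility term and the information term cancel up to $\lambda\int\log(\sum_j\mathcal{P}_j^0 e^{v_j/\lambda})\,G(d\mathbf v)$, which is the objective of (\ref{eq: alterantive optimization}); this is concave in $\{\mathcal{P}_j^0\}$ since its integrand is the logarithm of a positive affine function of $\mathcal{P}^0$, and the constraint set is the simplex.

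\textbf{Tightness of the relaxation and the converse.} The step I expect to be the crux is verifying that the logit solution, read as a direct recommendation mechanism, is obedient, so that the relaxation is tight. Writing $\pi_j(\mathbf v)$ for the posterior density after recommendation $j$, the logit form gives $v_j-v_k=\lambda\log(\pi_j(\mathbf v)/\pi_k(\mathbf v))$, whence $E_{\pi_j}[v_j-v_k]=\lambda\,\mathrm{KL}(\pi_j\Vert\pi_k)\ge0$; here (\ref{eq: con P inte to unc}) is precisely what makes each $\pi_j$ a genuine probability density, and for actions outside the support the KKT conditions of (\ref{eq: alterantive optimization}) give $\int\pi_k\le1$, so the inequality persists by the log-sum inequality. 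This shows the logit strategy is feasible in the original problem and attains the relaxed optimum, which proves the forward claims. For the converse, given a maximizer $\{\mathcal{P}_j^0\}$ of (\ref{eq: alterantive optimization}) for which the $\mathcal{P}_j(\mathbf v)$ defined by (\ref{eq: Logit Form}) satisfy (\ref{eq: con P inte to unc}), I would define $F$ on the $J$ support points $\{s_1,\dots,s_J\}$ by $\Pr_F(\mathbf{s}^{DM}=s_j\mid\mathbf v)=\mathcal{P}_j(\mathbf v)$: its $\mathbf v$-marginal is $G$, so it is a valid information strategy; the Kullback--Leibler computation above shows it is obedient, so its conditional choice probabilities are indeed (\ref{eq: Logit Form}); and evaluating its net payoff reproduces $\lambda\int\log(\sum_j\mathcal{P}_j^0 e^{v_j/\lambda})\,G(d\mathbf v)$, which by the forward direction equals the value of (\ref{eq: RI optimization})--(\ref{mutual information}). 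Hence $F$ is optimal.
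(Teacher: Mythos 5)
Your proof sketch is correct and follows essentially the same route as the source the paper relies on: the paper itself gives no argument beyond citing Theorem~1 and Lemma~2 of Mat\v{e}jka and McKay (2015), and your reduction to obedient action recommendations via the data-processing inequality, the concave relaxation yielding the multiplicative logit first-order condition, the substitution recovering the concave program in $\{\mathcal{P}_j^0\}$, and the Kullback--Leibler verification of obedience (including the subprobability/log-sum step for actions off the support) is precisely the structure of their proof. No gaps noted.
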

	\begin{proof}
		See Theorem 1 and Lemma 2 in \citet{Matejka2015}.
	\end{proof} 
	
	Lemma \ref{lemma: Matejka and Mckay} shows that solving the optimization problem (\ref{eq: RI optimization})-(\ref{mutual information}) is equivalent to solve the optimization problem (\ref{eq: alterantive optimization}). We do not observe the DM's optimal information strategy. Instead, we observe their choice outcome. When we aggregate the choice outcome to the market level, it becomes the conditional and unconditional choice probability.
	
	We should note that the conditional choice probability (\ref{eq: Logit Form}) takes a Logit-like choice probability form. However, the rational inattention discrete choice model does not imply the usual I.I.A constraints on the choice probability. \citet{Matejka2015} discusses the two equivalent  conditions on the conditional choice probability (\ref{eq: Logit Form}).
	
	\subsection{A Sequential Persuasion Game}
	Consider a persuader that tries to influence the choice probability by choosing a persuasion strategy and sending a realized signal. The persuader is also called he information designer ({ID}) in the Bayesian persuasion literature.
	\begin{definition}\label{def: persuasion strategy}
		A persuasion strategy is a joint distribution $\tilde{F}(s^{ID},\mathbf{v})$ of the signal $s^{ID}\in \mathcal{R}^J$ sent by the ID and the utility vector such that
		\[\int \tilde{F}(s^{ID},\mathbf{v})ds^{ID}=G(\mathbf{v}).\]
	\end{definition}
	
	I consider a sequential persuasion game between the decision makers and the information designer in the following order
	\begin{enumerate}
		\item The information designer chooses an persuasion strategy and then sends the realized signal $s^{ID}$ to the decision maker;
		\item The decision maker updates his belief to the intermediate distribution:
		\begin{equation}\label{update belief}
		\tilde{G}_{s^{ID}}\equiv \tilde{G}(\mathbf{v}|S^{ID}=s^{ID})= \frac{G(\mathbf{v})\times \tilde{F}(s^{ID}|\mathbf{v})}{\int_{\mathbf{v}} \tilde{F}(\mathbf{v},s)d\mathbf{v}};
		\end{equation}
		\item The decision maker solves optimization (\ref{eq: RI optimization})-(\ref{mutual information}) with the intermediate belief $\tilde{G}_{s^{ID}}$;
		\item The decision maker gets a realized signal $s^{DM}$ from his optimal information strategy $F$. He then makes the choice based on the updated belief $F(\mathbf{v}|s^{DM})$.
	\end{enumerate}
	
	For the persuasion strategy to work, it is assumed that the DM who receives the signal knows the joint distribution $\tilde{F}(s^{ID},\mathbf{v})$. 
	\begin{assumption} \label{common knowledge on persuation strategy}
		The persuasion strategy $\tilde{F}(s^{ID},\mathbf{v})$ is common knowledge.
	\end{assumption}
	
	The assumption \ref{common knowledge on persuation strategy} on $\tilde{F}$ is satisfied when there is an underlying equilibrium determining how the information designer chooses the persuasion strategy. For example, information designer can have an objective function $M:\Delta(\mathcal{R}^{2J})\rightarrow \mathcal{R}$  so that $\tilde{F}=\arg\max_{F\in C} M(F)$ where $C\subset \Delta(\mathcal{R}^{2J})$ is some constrained persuasion strategy set. When the objective function $M$ and the constrained set $C$ is known by the DM, the decision maker can solve the information designer's optimization problem to get $\tilde{F}$. This paper does not tackle the information designer's objective function. The objective function for the information designer is not easy to formulate. In the marketing context, the trade-off is between higher marketing cost of persuasion and higher sales. In the context of political persuasion, the goal of persuasion is not to maximize voting share but to increase voting share until it exceeds 50\%. Also, the media that conducts persuasion may also care about other aspects of persuasion since their persuasion strategy can influence their audience ratings.

	%
	%
	%
	%
	%
	
	The setting of the persuasion game is different from the setting in \citet*{bloedel2018persuasion}. In their setting, the decision maker chooses an information strategy to understand the signal send by the sender. In other words, the decision maker in their model pays attention cost to understand the signal from the sender and cannot acquire a signal about the true utility by himself. In my formulate, there is no cost to understand the signal $s^{ID}$ from the sender and there is a cost incurred by acquiring information about the true utility vector. 
	\begin{remark}
		The persuasion strategy $\tilde{F}$ and the information strategy $F$ lie in the same space. The effect of persuasion is limited because decision makers can acquire their own information. While the ID can distort the prior distribution of the utility vector $\mathbf{v}$ through $\tilde{F}$, the decision maker's information strategy $F(\mathbf{s}^{DM},\mathbf{v})$ can still provide information to the decision maker. 
	\end{remark}

	\section{Identification}
	In this section, I discuss a data generating process that allows us to non-parametrically identify the prior belief $G$ and parametrically identify the persuasion strategy $\tilde{F}$. To allow for the heterogeneity of decision makers' preferences, I assume the utility of an individual $i$ in market $m$ in the demographic group $k$ takes the following additively separable form:
	\begin{equation} \label{eq: parameteric utility assumption}
	U_{ikjm}=u_1(x^m_{j},\beta)+u_2(x^m_{j},\nu_{ik},\alpha)+\epsilon_{j,m},
	\end{equation}
	where $x^m_j$ is the characteristics of product $j$ in market $m$; $k$ is the index for people of demographic group $k$ with demographic characteristics $\nu_{ik}$; $m$ is the index for market. The utility function $u_1,u_2$ is of known parametric form, and $\alpha$, $\beta$ are two vectors to be estimated, but the distribution of $G$ is left as non-parametric. The utility (\ref{eq: parameteric utility assumption})  assumes that the decision makers'  demographic and product characteristics only influence their mean utility but not the utility shocks.  Here I assume that all DMs in the same market $m$ realize the same $\bm{\epsilon}_{m}=(\epsilon_{1m},...,\epsilon_{jm})$ since the random shock vector $\bm{\epsilon}_{m}$ in equation (\ref{eq: parameteric utility assumption}) does  not depend on the individual index $i$. In particular, if individual $i_1$ and $i_2$ are in the same market, and they have the same demographic characteristics, they should have the same realized utility vector. This specification is reasonable when the shock is market-specific. For example, when we want to study the voting decision, the market realization of $\bm{\epsilon}_{m}$ can be the real payoff of candidate $j$'s policy on town $m$'s local industry. In the automobile industry, this market level states of the world may come from the local road condition, climate, or geographic topology. 
	
	
	\subsubsection*{Notation}
	Throughout this section, I use $\tilde{\cdot}$ to denote probability quantities related to markets with the presence of a persuader. Also, I drop the super-scrip on $s^{ID}$, and use $s$ to denote signals sent by the information designer whenever there is no confusion. I use $m$ to denote the index for markets, $j$ to denote the index of products, $k$ to denote the index of different demographic groups.
	
	\subsection{Data Generating Process}
	In many data sets, we do not observe individual choices. Instead, we observe the market share, which is the aggregated individual choices. Across different markets, I assume that the prior distribution on $\epsilon_{j,m}$ is the same $G$.

	\begin{assumption}\label{assum: Data}
		({Data}) (i) We observe a binary variable $\chi^m$ such that $\chi^m=1$ if and only if the persuader is present in market $m$; (ii). The demographic heterogeneity $v_k$ is discrete and supported on $K$ points. For each market $m$, the distribution of demographic heterogeneity $D^m=(d^m_1,...,d^m_K)$ is observed, where $d_k^m$ is the proportion of DMs in group $k$ in market $m$; (iii). We observe the market characteristics $\mathbf{X}^m$ in each market $m$ and the market share vector $\bm{ms}^m=({ms}^m_1,...,{ms}^m_J)$, where ${ms}^m_j$ is the market share of product $j$.
	\end{assumption}
	
	The following is the assumption on the markets where there is no persuader. 
	\begin{assumption}\label{assum: (DGP without Persuasion) }
		(DGP without Persuasion) For markets with $\chi^m=0$, the data generating process satisfies: 
		\begin{enumerate}
			\item Common prior: $\bm{\epsilon^m}\sim G$;
			\item Independent random utility shock: $\bm{\epsilon}^m\perp \mathbf{X}^m$
			\item Independent demographic distribution: $D^m\perp (\bm{\epsilon}^m,\mathbf{X}^m)$
			\item  The choice set $\mathcal{J}$ and information cost $\lambda$ are the same across different markets.
		\end{enumerate}
	\end{assumption} 
	Assumption \ref{assum: (DGP without Persuasion) } imposes that the mean of $\bm{\epsilon^m}$ is independent of the product characteristics and is normalized to be zero. If there is any unobserved characteristics that is correlated with $X_j^m$, the unobserved effects are captured by the observed $X_j^m$.  
	
	For markets with a persuader, I assume that the persuader is the same across these markets and the persuader use the same persuasion strategy.  Moreover, I assume that the persuasion strategy is a joint distribution of $\bm{\epsilon}^m$ and $s^{ID}$. This specification is different from Definition \ref{def: persuasion strategy} and the persuader uses the same persuasion strategy even if the product characteristics $X_j^m$ may vary across markets. 
	\begin{assumption} \label{assum: DGP with Persuasion} (DGP with Persuasion)
		For markets with $\chi^m=1$, the data generating process satisfies: 
		\begin{enumerate}
			\item $(\bm{\epsilon},X_j^m,D^m)$ and $(\lambda,\mathcal{J})$ satisfy the conditions in Assumption \ref{assum: (DGP without Persuasion) };
			\item There is a uniform persuader across markets with $\chi^m=1$ and
			the persuasion strategy $\tilde{F}^k(s^{ID},\bm{\epsilon})$ can depend on the demographic groups $k$ but not the market;
			\item The persuasion signal $s^{ID}_{km}\sim \tilde{F}^k(s^{ID}|\bm{\epsilon}_m)$ and the signal $s^{ID}_{km}$ is independent of each other across demographic groups and markets;
			\item Signal Independence $(s^{ID}_k,\bm{\epsilon}^m)\perp D^m$.
		\end{enumerate}
	\end{assumption}

	\subsubsection*{Normalization}
	Since the permutation of the item index does not matter, I call the last item $J$ the outside option. Note that in the discrete choice model, only the relative difference of utility matters for the DM.  Therefore, we can normalize the utility of outside option $J$ to be zero $U_{kJ}=0$. Also, when $u_1,u_2$ is homogeneous of degree one with respect to $\alpha,\beta$, the vector $(\alpha,\beta,\lambda,G)$ is not identified. Indeed, we can consider a model with $(c\alpha,c\beta,c\lambda,cG)$, where $cG$ is the distribution of $c\bm{\epsilon}_m$. The model $(c\alpha,c\beta,c\lambda,cG)$ will generate the same choice probability (\ref{eq: conditional choice probability} and (\ref{eq: unconditional choice probability}). Since linear specification of utility is frequently used in the applied literature, I assume $u_1,u_2$ is homogeneous of degree one with respect to $\alpha,\beta$.
	
	\begin{assumption}(Normalization)\label{assum: normalization}
		The utility functions $u_1,u_2$ are homogeneous of degree 1 with respect to  $(\alpha,\beta)$, and $\lambda=1$.
	\end{assumption}  
	
	\subsection{The Identified Set}
	The parameters of interests include the mean utility parameters $(\alpha,\beta)$, the prior belief $G$, and the persuasion strategy $\tilde{F}$. For markets without persuasion, we are also interested in $\mathcal{P}_j^{0,k}(\mathbf{X})$, which is demographic group $k$'s unconditional choice probability of choosing $j$ when the product characteristics are $\mathbf{X}$. If we want to evaluate the overall effect of persuasion across different markets, we want to compare the post-persuasion market share with $\mathcal{P}_j^{0,k}(\mathbf{X})$.
	
	I first define the identified set of $(\alpha,\beta,G,\mathcal{P}_j^{0,k}(\mathbf{X}))$ from the rational inattention discrete choice model. 
	\begin{definition}\label{def: identified set of the RIDC model}
		Let $\mathcal{F}_{\chi=0}$ denote the conditional distribution of $(D^m,\mathbf{X}^m,\bm{ms}^m)$ conditioned on $\chi^m=0$. The identified set of $(\alpha,\beta,G,\mathcal{P}_j^{0,k}(\mathbf{X}))$ under the rational inattention discrete choice model, denoted by $\Gamma_I$, is the collection of $(\alpha,\beta,\{\mathcal{P}_j^{0,k}(\mathbf{X})\}_{j,k},G)$ that satisfies the following constraints:
		\begin{enumerate}
			\item Given $(\alpha,\beta,G)$, $\{\mathcal{P}_j^{0,k}(\mathbf{X})\}_{j,k}$ solves the individuals optimization problem (\ref{eq: alterantive optimization}) with \begin{equation}\label{eq: the v_j with linear specification}
			v^m_j=u_1(x^m_{j},\beta)+u_2(x^m_{j},\nu_{ik},\alpha)+\epsilon^m_j;
			\end{equation}
			
			\item The unconditional mean of the conditional choice probability is the unconditional choice probability:
			\begin{equation}\label{eq: mean of con is unc}
			E_G\left[\frac{\mathcal{P}_j^{0,k}(\mathbf{X}) {e^{v^m_j/\lambda}}}{\sum_{k\in\mathcal{J}}\mathcal{P}_j^{0,k}(\mathbf{X}) {e^{v^m_k/\lambda}}}\right] = \mathcal{P}_j^{0,k}(\mathbf{X});
			\end{equation}
			
			\item Consider the mapping:
			\begin{equation}\label{eq: market share is the weighted sum of group share}
			\mathcal{P}_j^{m}(\alpha,\beta,\bm{\epsilon},D^m,\mathbf{X}^m,\{\mathcal{P}_j^{0,k}(\mathbf{X})\}_{j,k}) = \sum_{k} d_k^m \frac{\mathcal{P}_j^{0,k}(\mathbf{X}) {e^{v^m_j/\lambda}}}{\sum_{k\in\mathcal{J}}\mathcal{P}_j^{0,k}(\mathbf{X}) {e^{v^m_k/\lambda}}}
			\end{equation}
			where $v_j^m$ is defined in (\ref{eq: the v_j with linear specification}). Then $(D^m,X_j^m,\mathcal{P}_j^{m}(\alpha,\beta,\bm{\epsilon},D^m,\mathbf{X}^m,\{\mathcal{P}_j^{0,k}(\mathbf{X})\}_{j,k}))$ has the same distribution as $\mathcal{F}_{\chi=0}$.
			
		\end{enumerate}
	\end{definition}
	
	The first two conditions in Definition \ref{def: identified set of the RIDC model} corresponds to the optimization condition (\ref{eq: alterantive optimization}) and the condition (\ref{eq: con P inte to unc}) in Lemma \ref{lemma: Matejka and Mckay}. Equation (\ref{eq: market share is the weighted sum of group share}) calculates the market share of product $j$ as the weighted average of different demographic groups' choice probability. The third condition in Definition \ref{def: identified set of the RIDC model} requires the model predicted market share is consistent with the observed data distribution.
	
	I then define the identified set of the persuasion strategy $\tilde{F}$.
	\begin{definition}\label{def: identified set of the persuasion strategy}
		Let $\mathcal{F}_{\chi=1}$ denote the conditional distribution of $(D^m,\mathbf{X}^m,\bm{ms}^m)$ conditioned on $\chi^m=1$. Given the value of $(\alpha,\beta,G)$, and a persuasion strategy $\tilde{F}$, we consider the map:
		\begin{equation}\label{eq: condi choice prob with persuasion}
		\tilde{\mathcal{P}}_{j,s}^{k}(\bm{\epsilon};\mathbf{X}^m)=\frac{\tilde{\mathcal{P}}_{j,s}^{0,k}(\mathbf{X}^m)e^{v_j^m}}{\sum_{l=1}^J\tilde{P}_{l,s}^{0,k}(\mathbf{X}^m)e^{v_j^m}}
		\end{equation}
		where $\tilde{\mathcal{P}}_{j,s}^{0,k}(\mathbf{X}^m)$ solves the individual optimization problem (\ref{eq: alterantive optimization}) when his belief is $\tilde{G}(\bm{\epsilon})=\tilde{F}(\bm{\epsilon}|s^{ID}=s)$.
		The identified set of the persuasion strategy is the set of $\tilde{F}(s^{ID},\bm{\epsilon})$ such that \[\left(D^m,\mathbf{X}^m,  \sum_k d_k^m \tilde{\mathcal{P}}_{j,s}^{k}(\bm{\epsilon};\mathbf{X}^m) \right)\]
		has the same distribution as $\mathcal{F}_{\chi=1}$.	
	\end{definition}
	
	The identified set in Definition \ref{def: identified set of the persuasion strategy} is conditioned on the vector $(\alpha,\beta,G)$. This is because in markets with the persuader, there are two types of the unobserved heterogeneity: the utility shock $\bm{\epsilon}^m$ and the realization of the signal $s^{ID}$. In contrast, in markets without the persuader, only the utility shock $\bm{\epsilon}^m$ exists. Therefore, knowing the prior distribution $G$ reduces the randomness and makes the problem of identifying the persuasion strategy tractable. 
	
	The identified set of $(\alpha,\beta,G)$ in Definition \ref{def: identified set of the RIDC model} is defined through the rational inattention discrete choice model only, and it ignores the empirical content of the subsequent persuasion stage. There are two reasons to define the identified set in this way. First, if we only have data on markets without any persuader, i.e. $\chi^m=0$ for all markets, the identified set defined in Definition \ref{def: identified set of the RIDC model} can still be used. Second, the unobserved persuasion signal $s^{ID}$ in the persuasion stage makes it hard to characterize the empirical content of the whole persuasion game. I will stick with these two definitions and characterize the corresponding moment conditions.

	\subsection{Moment Equality Model }
	Recall that given $(\alpha,\beta)$ and the unconditional choice probability $\{\mathcal{P}_j^{0,k}(\mathbf{X})\}_{j,k}$, the model predicted market share is given by (\ref{eq: market share is the weighted sum of group share}). Following BLP, I denote  $\delta^m_j= u_1(x^m_j,\beta)+\epsilon^m_j$ and let $\bm{\delta}^m=(\delta^m_1,...,\delta^m_J)$. Then the predicted market share in (\ref{eq: market share is the weighted sum of group share}) can be written as:
	\[
	\begin{split}
	\mathcal{P}_j^{m}(\alpha,\beta,\bm{\epsilon},D^m,\mathbf{X}^m,\{\mathcal{P}_j^{0,k}(\mathbf{X})\}_{j,k})&=\sum_k\frac{ {\mathcal{P}_j^{0,k}(\mathbf{X}^m)e^{\delta^m_j+u_2(x^m_j,\nu_k,\alpha)}}}{\mathcal{P}_J^{0,k}(\mathbf{X}^m)+\sum_{l=1}^{J-1} {\mathcal{P}_l^{0,k}(\mathbf{X}^m)e^{\delta^m_l+u_2(x^m_l,\nu_k,\alpha)}}} d_k^m\\
	&\equiv ms^*_j(\mathbf{X}^m,\bm{\delta}^m,\alpha,D^m,\{\mathcal{P}_j^{0,k}(\mathbf{X})\}_{j,k}),
	\end{split}	
	\]
	where the utility of the outside option is normalized to zero, so $\delta_J^m=u_2(x_J^m,\nu_k,\alpha)=0$. Consider a mapping $T:\mathbb{R}^{J-1}\rightarrow\mathbb{R}^{J-1}$ such that 
	{\footnotesize
		\begin{equation}\label{eq: Contration Mapping}
		\left[T[\mathbf{X}^m,\bm{ms}^m,\alpha,D^m,\{\mathcal{P}_j^{0,k}(\mathbf{X})\}_{j,k}](\bm{\delta}^m)\right]_j=\delta_j+\log(ms^m_j)-\log(ms^*_j(\mathbf{X}^m,\bm{\delta}_j,\alpha,D^m,\{\mathcal{P}_j^{0,k}(\mathbf{X})\}_{j,k})),
		\end{equation}}
	where $[T]_j$ is the $j$-th entry in the output vector. The input $\bm{ms}^m$ will be the observed market share. When $T(\bm{\delta}^m)=\bm{\delta}^m$, the observed market share $\bm{ms}^m$ equals the model predicted market share. This map is a contraction mapping whenever the outside option has a nonzero unconditional choice probability. As a result, there exists a unique market level $\bm{\delta}^m$ that matches the observed market share with the model predicted market share.
	\begin{lem}\label{lem: Contraction Mapping Lemma}
		Suppose in a market we have $ms_J^0>0$, then the mapping defined by (\ref{eq: Contration Mapping}) is a contraction mapping. Let $\bm{\delta}^{*m}$ denote the fixed point of the contraction mapping (\ref{eq: Contration Mapping}). As a result, the unobserved heterogeneity $\bm{\delta}^{*m}$ is a function of observables $\mathbf{x^m}$, $D_k$, $\bm{ms}^m$ and the parameters $\alpha$ and  $P_j^{0,k}(\mathbf{X})$.
	\end{lem}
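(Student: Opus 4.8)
The plan is to follow the Berry--Levinsohn--Pakes contraction argument, adapted to the rational-inattention kernel. The substitution of the multinomial-logit kernel by the kernel in (\ref{eq: market share is the weighted sum of group share}) is structurally harmless: the weights $\mathcal{P}_j^{0,k}(\mathbf{X}^m)$ are positive constants — for the outside option this uses $ms_J^0>0$, equivalently $\mathcal{P}_J^{0,k}(\mathbf{X}^m)>0$ for the demographic groups present in the market — and they enter only multiplicatively, as ``product fixed effects.'' First I would fix $(\mathbf{X}^m,D^m,\alpha,\{\mathcal{P}_j^{0,k}(\mathbf{X})\}_{j,k})$ and the observed share vector $\bm{ms}^m$ with every coordinate, including the outside option $J$, strictly positive; write $p_{jk}(\bm{\delta})$ for group $k$'s conditional choice probability — the summand indexed by $k$ in (\ref{eq: market share is the weighted sum of group share}), viewed as a function of $\bm{\delta}=(\delta_1,\dots,\delta_{J-1})$ through $v_l^m=\delta_l+u_2(x^m_l,\nu_k,\alpha)$ — and $p_{Jk}(\bm{\delta})$ for the outside-option probability, so that $ms^*_j(\bm{\delta})=\sum_k d_k^m p_{jk}(\bm{\delta})$ and $[T(\bm{\delta})]_j=\delta_j+\log ms^m_j-\log ms^*_j(\bm{\delta})$. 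The first step is to record the logit-type identities $\partial_{\delta_j}p_{jk}=p_{jk}(1-p_{jk})$, $\partial_{\delta_l}p_{jk}=-p_{jk}p_{lk}$ for $l\ne j$, and hence $\sum_{l=1}^{J-1}\partial_{\delta_l}p_{jk}=p_{jk}\,p_{Jk}$.

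Given these, I would carry out three steps. \emph{(i) Monotonicity:} differentiating $T$ gives $\partial_{\delta_j}[T]_j=1-\big(\sum_k d_k^m p_{jk}(1-p_{jk})\big)/ms^*_j\ge 0$, since $p_{jk}<1$ whenever $\mathcal{P}_J^{0,k}(\mathbf{X}^m)>0$, and $\partial_{\delta_l}[T]_j=\big(\sum_k d_k^m p_{jk}p_{lk}\big)/ms^*_j\ge 0$ for $l\ne j$, so $\bm{\delta}\le\bm{\delta}'$ implies $T(\bm{\delta})\le T(\bm{\delta}')$. \emph{(ii) The shift identity:} shifting every inside index up by a constant $a>0$ multiplies the inside part of each group denominator by $e^a$ and leaves the outside term fixed, giving $p_{jk}(\bm{\delta}+a\mathbf{1})=p_{jk}(\bm{\delta})/\big(1-(1-e^{-a})p_{Jk}(\bm{\delta})\big)$ and therefore $ms^*_j(\bm{\delta}+a\mathbf{1})/ms^*_j(\bm{\delta})>1$, i.e. $T(\bm{\delta}+a\mathbf{1})<T(\bm{\delta})+a\mathbf{1}$, the strictness coming from $p_{Jk}>0$. \emph{(iii) Contraction:} the identities above give $\sum_{l=1}^{J-1}|\partial_{\delta_l}[T]_j|=1-\big(\sum_k d_k^m p_{jk}p_{Jk}\big)/ms^*_j<1$ at every $\bm{\delta}$ at which the outside option has positive probability for some group present, so by the mean-value inequality $T$ is a contraction in the sup norm on any compact convex set on which this quantity stays bounded away from $1$; granting (as discussed next) that the solution lies in, and the iteration can be confined to, such a set, the Banach fixed-point theorem yields the unique $\bm{\delta}^{*m}$ with $ms^*_j(\bm{\delta}^{*m})=ms^m_j$ for all $j$, and since the share-Jacobian $\partial_{\bm{\delta}}ms^*$ is strictly diagonally dominant (its $j$-th row has excess $\sum_k d_k^m p_{jk}p_{Jk}>0$), hence nonsingular, the implicit function theorem makes $\bm{\delta}^{*m}$ a continuously differentiable function of $(\mathbf{X}^m,D^m,\bm{ms}^m,\alpha,\{\mathcal{P}_j^{0,k}(\mathbf{X})\}_{j,k})$, which is the assertion of the lemma.

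The subtle point — and the one I expect to be the real obstacle — is that the modulus $\max_j\big(1-(\sum_k d_k^m p_{jk}p_{Jk})/ms^*_j\big)$ is not uniform over all of $\mathbb{R}^{J-1}$: as some $\delta_l\to+\infty$ the outside-option probabilities $p_{Jk}(\bm{\delta})$ tend to $0$ and the modulus tends to $1$, so $T$ is not literally a global sup-norm contraction (Blackwell's discounting condition fails for large shifts). Following \citet{Berry1995} I would decouple the two halves of the conclusion. Uniqueness comes straight from (i)--(ii): if $\bm{\delta}$ and $\bm{\delta}'$ both solved $ms^*(\cdot)=\bm{ms}^m$, applying monotonicity and the strict shift inequality to the coordinate $l^*$ maximizing $\delta_l-\delta'_l$ would force $ms^*_{l^*}(\bm{\delta})>ms^*_{l^*}(\bm{\delta}')$ unless $\bm{\delta}\le\bm{\delta}'$, and symmetrically, so $\bm{\delta}=\bm{\delta}'$. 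Existence comes from the invertibility of $\bm{\delta}\mapsto ms^*(\bm{\delta})$ onto the set of admissible interior share vectors (using $ms^*_j\to0$ as $\delta_j\to-\infty$, $ms^*_j$ bounded above by the interior-conditional share, and nonsingularity of $\partial_{\bm{\delta}}ms^*$). Finally the iteration: the uniform lower bound $[T(\bm{\delta})]_j\ge\log ms^m_j-\log\big(\sum_k d_k^m\mathcal{P}_j^{0,k}(\mathbf{X}^m)e^{u_2(x^m_j,\nu_k,\alpha)}/\mathcal{P}_J^{0,k}(\mathbf{X}^m)\big)$ together with the self-correcting property that a coordinate driven far above its equilibrium value makes $ms^*_j>ms^m_j$ and is dragged back down by $T$ show that from any starting point the iterates $\bm{\delta}^{n+1}=T(\bm{\delta}^n)$ eventually enter and remain in a compact convex set on which the Jacobian-norm bound is strict, where $T$ is a genuine contraction, so $\bm{\delta}^n\to\bm{\delta}^{*m}$. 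This trapping bookkeeping is the only part requiring care beyond routine calculation, and it is precisely where the hypothesis $ms_J^0>0$ (i.e. $\mathcal{P}_J^{0,k}(\mathbf{X}^m)>0$ for the relevant groups) is indispensable.
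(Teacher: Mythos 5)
Your proposal is correct and follows essentially the same route as the paper: both adapt the contraction argument from Appendix~1 of BLP, and your core computations ($\partial_{\delta_j}[T]_j\ge 0$, $\partial_{\delta_l}[T]_j\ge 0$ for $l\ne j$, and $\sum_{l<J}\partial_{\delta_l}[T]_j=1-\bigl(\sum_k d_k^m p_{jk}p_{Jk}\bigr)/ms_j^*<1$, the last step using $\mathcal{P}_J^{0,k}(\mathbf{X})>0$) are exactly the ones in the paper's proof. The only difference is one of packaging: the paper stops after verifying the hypotheses of BLP's fixed-point theorem (including the lower bound on $T$ and the construction of $\bar\delta_j$) and cites that theorem, whereas you reconstruct its internals — the shift inequality, uniqueness via monotonicity, and the trapping of the iterates in a compact set where the modulus is uniformly below one — which makes the argument more self-contained but does not change its substance.
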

	Now I state the first identification result of the prior distribution $G$.
	\begin{prop}\label{prop: identification of G when knowing alpha beta}
		For each $(\alpha,\beta,\{\mathcal{P}_j^{0,k}(\mathbf{X})\}_{j,k})$ in the identified set $\Gamma_I$ defined in Definition \ref{def: identified set of the RIDC model}, there exists a unique $G^*$ such that 	
		$(\alpha,\beta,G^*,\{\mathcal{P}_j^{0,k}(\mathbf{X})\}_{j,k})\in \Gamma_I$. In particular, for any measurable set $B$, define the set
		{ \footnotesize
			\[
			MS(B;\mathbf{x^m}, D_k, \bm{ms}^m,\alpha,P_j^{0,k}(\mathbf{X}),\beta)\equiv\{\bm{ms}^m: \bm{\delta}^{*m}(\mathbf{x^m}, D^m, \bm{ms}^m,\alpha,P_j^{0,k}(\mathbf{X}))- [u_1(x_j^m,\beta)]_{j=1}^J\in B\},
			\]}
		where $\bm{\delta}^{*m}$ is defined in Lemma \ref{lem: Contraction Mapping Lemma} and $[u_1(x_j^m,\beta)]_{j=1}^J=[u_1(x_1^m,\beta),...,u_1(x_J^m,\beta)]'$. The $G^*$ satisfies 
		\begin{equation}\label{eq: Identified G, conditioned on alpha beta}
		Pr_{G^*}(\bm{\epsilon}^m\in B)= Pr_{\mathcal{F}_{\chi=0}}(
		\bm{ms}^{m}\in MS(B;\mathbf{x^m}, D_k, \bm{ms}^m,\alpha,P_j^{0,k}(\mathbf{X}),\beta)).
		\end{equation}
	\end{prop}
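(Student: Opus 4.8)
The plan is to invert the market--share equation market by market using Lemma \ref{lem: Contraction Mapping Lemma}, which turns the third requirement of Definition \ref{def: identified set of the RIDC model} into an explicit pushforward formula for the law of $\bm{\epsilon}^m$. Existence is automatic: the hypothesis that $(\alpha,\beta,\{\mathcal{P}_j^{0,k}(\mathbf{X})\}_{j,k})$ appears in $\Gamma_I$ means some prior $G$ completes it to an element of $\Gamma_I$, so the content of the proposition is that this $G$ is unique and coincides with the measure defined by (\ref{eq: Identified G, conditioned on alpha beta}).

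First I would fix any $G$ with $(\alpha,\beta,G,\{\mathcal{P}_j^{0,k}(\mathbf{X})\}_{j,k})\in\Gamma_I$. By condition 3 of Definition \ref{def: identified set of the RIDC model}, when $(D^m,\mathbf{X}^m)$ is drawn from its marginal and, independently, $\bm{\epsilon}^m\sim G$ (Assumption \ref{assum: (DGP without Persuasion) }), the triple $\big(D^m,\mathbf{X}^m,ms^*(\mathbf{X}^m,\bm{\delta}^m,\alpha,D^m,\{\mathcal{P}_j^{0,k}(\mathbf{X})\})\big)$ has the same distribution as $\mathcal{F}_{\chi=0}$, where $\delta^m_j=u_1(x^m_j,\beta)+\epsilon^m_j$. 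Since $\mathcal{F}_{\chi=0}$ assigns mass one to markets with a positive outside-option share, Lemma \ref{lem: Contraction Mapping Lemma} is in force: for $\mathcal{F}_{\chi=0}$--almost every $(\mathbf{x}^m,D^m,\bm{ms}^m)$, the equation $ms^*(\mathbf{x}^m,\bm{\delta}^m,\alpha,D^m,\{\mathcal{P}_j^{0,k}(\mathbf{X})\})=\bm{ms}^m$ has the unique solution $\bm{\delta}^m=\bm{\delta}^{*m}$, a measurable function of $(\mathbf{x}^m,D^m,\bm{ms}^m)$ and of $(\alpha,\{\mathcal{P}_j^{0,k}(\mathbf{X})\})$.

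Next I would record the injectivity that makes the inversion legitimate: for fixed $(\mathbf{x}^m,D^m)$ the map $\bm{\delta}^m\mapsto ms^*(\mathbf{x}^m,\bm{\delta}^m,\alpha,D^m,\{\mathcal{P}_j^{0,k}(\mathbf{X})\})$ is one-to-one, since two vectors giving the same share $\bm{ms}$ would, by the form of $T$ in (\ref{eq: Contration Mapping}), both be fixed points of the contraction $T[\mathbf{x}^m,\bm{ms},\alpha,D^m,\{\mathcal{P}_j^{0,k}(\mathbf{X})\}]$, which has only one. Hence the model-predicted share and $\bm{\epsilon}^m=\bm{\delta}^m-[u_1(x^m_j,\beta)]_{j=1}^J$ determine one another given the observables, and condition 3 pushes back through the measurable inverse: the law of $(D^m,\mathbf{X}^m,\bm{\epsilon}^m)$ is the image of $\mathcal{F}_{\chi=0}$ under $(\mathbf{x}^m,D^m,\bm{ms}^m)\mapsto\big(\mathbf{x}^m,D^m,\bm{\delta}^{*m}-[u_1(x^m_j,\beta)]_{j=1}^J\big)$. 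Taking the $\bm{\epsilon}^m$--marginal, which by construction is $G$, gives for every measurable $B$
\[
Pr_{G}\big(\bm{\epsilon}^m\in B\big)=Pr_{\mathcal{F}_{\chi=0}}\big(\bm{ms}^m\in MS(B;\mathbf{x^m},D_k,\bm{ms}^m,\alpha,\mathcal{P}_j^{0,k}(\mathbf{X}),\beta)\big),
\]
which is (\ref{eq: Identified G, conditioned on alpha beta}). The right-hand side does not involve $G$, so any two admissible priors equal this common measure $G^*$; uniqueness follows, and with the existence noted above this completes the proof.

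The step I expect to be the main obstacle is this passage from condition 3, which only equates \emph{laws}, to the pathwise inversion above: one must argue via injectivity of $\bm{\delta}^m\mapsto ms^*(\bm{\delta}^m)$, together with measurability of its inverse (already supplied by Lemma \ref{lem: Contraction Mapping Lemma}), that $\bm{\epsilon}^m$ is a genuine measurable function of the observables whose distribution cannot depend on any auxiliary coupling. The remaining care concerns verifying $ms^m_J>0$ holds $\mathcal{F}_{\chi=0}$--almost surely, so that the contraction-mapping lemma applies, and noting that conditions 1 and 2 of Definition \ref{def: identified set of the RIDC model} are inherited by $G^*$ from the prior one started with, so that $G^*$ indeed lies in $\Gamma_I$.
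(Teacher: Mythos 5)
Your proposal is correct and follows essentially the same route as the paper: both rest on the identity $Pr_{G}(\bm{\epsilon}^m\in B)=Pr_{\mathcal{F}_{\chi=0}}(\bm{ms}^m\in MS(B;\cdot))$ obtained by inverting the market-share map through the unique fixed point of the contraction in Lemma \ref{lem: Contraction Mapping Lemma}, and since the right-hand side does not depend on the prior, any admissible prior is pinned down. The only difference is presentational—the paper phrases uniqueness as a contradiction between the share distributions generated by two distinct priors, while you argue directly and explicitly supply the injectivity and measurable-inverse step that the paper asserts in one line.
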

	
	\begin{proof}
		I prove this statement by contradiction. Suppose there exists a $G'\ne G^*$ such that $(\alpha,\beta,G',\{\mathcal{P}_j^{0,k}(\mathbf{X})\}_{j,k})$ is also in the identified set. Suppose there exists a positively measured set $B'$ such that 
		\[
		Pr_{G^*}(\bm{\epsilon}^m\in B')\ne Pr_{G'}(\bm{\epsilon}^m\in B').
		\]
		I claim the distribution of $\bm{ms}^m$ implied by $G'$, denoted by $\mathcal{F}'_{\chi=0}$ is different from $\mathcal{F}_{\chi=0}$. By equation (\ref{eq: market share is the weighted sum of group share}),
		\[
		\begin{split}
		&\quad Pr_{\mathcal{F}'_{\chi=0}}(
		\bm{ms}^{m}\in MS(B';\mathbf{x^m}, D_k, \bm{ms}^m,\alpha,P_j^{0,k}(\mathbf{X}),\beta))\\
		&= Pr_{G'}(\bm{\epsilon}^m\in B')\\
		&\ne Pr_{G^*}(\bm{\epsilon}^m\in B')\\
		&= Pr_{\mathcal{F}_{\chi=0}}(
		\bm{ms}^{m}\in MS(B';\mathbf{x^m}, D_k, \bm{ms}^m,\alpha,P_j^{0,k}(\mathbf{X}),\beta)).
		\end{split}
		\]
		Therefore, $G'$ cannot generate the same data distribution $\mathcal{F}_{\chi=0}$, so $G'$ is not in the identified set by Definition \ref{def: identified set of the RIDC model}.
	\end{proof}

	Proposition \ref{prop: identification of G when knowing alpha beta} states that once we know $(\alpha,\beta,\{\mathcal{P}_j^{0,k}(\mathbf{X})\}_{j,k})$,  the distribution $G$ is point identified. This is similar to the identification strategy in the first price auction models \citep{GPV2000}. The $\bm{\delta}^{*m}-[u_1(x_j^m,\beta)]_{j=1}^J$ is the pseudo value of $\bm{\epsilon}^m$, similar to the pseudo value that is constructed from bids in the auction model. 
	
	\begin{prop}\label{prop: moment conditions from RIDC model}
		Suppose assumptions \ref{assum: (DGP without Persuasion) }, \ref{assum: normalization}. Suppose the unconditional choice probability $\{\mathcal{P}_j^{0,k}(\mathbf{X})\}_{j,k}$ are uniformly bounded away from zero and one. Each  $(\alpha,\beta,\{\mathcal{P}_j^{0,k}(\mathbf{X})\}_{j,k},G)$ in the identified set $\Gamma_I$ defined in Definition \ref{def: identified set of the RIDC model} satisfies:
		\begin{enumerate}
			\item Constraint on unconditional choice probability:
			
			\begin{equation} \label{Moment condition: P}
			E\left[\bm{ms}^{m}-\begin{pmatrix}
			\mathcal{P}_1^{0,1}(\mathbf{X}^m)& ... &\mathcal{P}_1^{0,K}(\mathbf{X}^m) \\ 
			...& ... &... \\ 
			\mathcal{P}_J^{0,1}(\mathbf{X}^m)& ... & \mathcal{P}_J^{0,K}(\mathbf{X})^m
			\end{pmatrix}
			\begin{pmatrix} 
			d^m_1\\
			...\\
			d^m_K
			\end{pmatrix} \Bigg|(D^m),\mathbf{X}^m\right]=0;
			\end{equation}
			\item Instrument constraint:
			\begin{equation}\label{moment condition: delta}
			E[\delta^*_j(\bm{ms}^m,\mathbf{X}^m,D^m,\alpha,\{\mathcal{P}_j^{0,k}(\mathbf{X})\}_{j,k})-u_1(X_j^m,\beta)|\mathbf{X}^m,D^m]=0 ,\, \, \, \forall j=1,...,J-1;
			\end{equation}
			\item Optimality constraint on $\{\mathcal{P}_j^{0,k}(\mathbf{X})\}_{j,k}$ $\forall \,\,j=1,2,..J-1\,\,k=1,...,K$:
			\begin{equation} 
			\label{Moment: Fixed point first order condition}
			E \bigg[\frac{ {e^{\delta^m_j+u_2(x_j^m,\nu_k,\alpha)}}}{\sum_{l\in\mathcal{J}} {\mathcal{P}_l^{0,k}(\mathbf{X})e^{\delta^m_l+u_2(x_l^m,\nu_k,\alpha)}}}-1\bigg| \mathbf{X}^m\bigg]=0;
			\end{equation}
			\item $G$ satisfies equation (\ref{eq: Identified G, conditioned on alpha beta}).
			
		\end{enumerate}
	\end{prop}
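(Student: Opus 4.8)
The plan is to verify the four listed conditions one at a time; the workhorse is condition 3 of Definition \ref{def: identified set of the RIDC model}, which states that the model-implied market share $\mathcal{P}_j^m(\alpha,\beta,\bm{\epsilon}^m,D^m,\mathbf{X}^m,\{\mathcal{P}_j^{0,k}(\mathbf{X})\}_{j,k})$, stacked with $(D^m,\mathbf{X}^m)$, has the same law as $(D^m,\mathbf{X}^m,\bm{ms}^m)$ under $\mathcal{F}_{\chi=0}$. Together with the independence and normalization restrictions of Assumptions \ref{assum: (DGP without Persuasion) } and \ref{assum: normalization}, the contraction mapping of Lemma \ref{lem: Contraction Mapping Lemma}, and Proposition \ref{prop: identification of G when knowing alpha beta}, this will deliver all four conditions.

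For (\ref{Moment condition: P}), I would take the conditional expectation of the above distributional equality given $(D^m,\mathbf{X}^m)$, which reduces the claim to $E[\mathcal{P}_j^m\mid D^m,\mathbf{X}^m]=\sum_k d_k^m\,\mathcal{P}_j^{0,k}(\mathbf{X}^m)$. Substituting (\ref{eq: market share is the weighted sum of group share}) and pulling out the (measurable) weights $d_k^m$, it suffices to show, for each $k$, that $E\big[\mathcal{P}_j^{0,k}(\mathbf{X}^m)e^{v_j^m}/\sum_l \mathcal{P}_l^{0,k}(\mathbf{X}^m)e^{v_l^m}\mid D^m,\mathbf{X}^m\big]=\mathcal{P}_j^{0,k}(\mathbf{X}^m)$. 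By parts 2--3 of Assumption \ref{assum: (DGP without Persuasion) }, the conditional law of $\bm{\epsilon}^m$ given $(D^m,\mathbf{X}^m)$ is still $G$, so this conditional expectation equals the $G$-expectation in (\ref{eq: mean of con is unc}), which holds by condition 2 of Definition \ref{def: identified set of the RIDC model}.

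For (\ref{moment condition: delta}), I would first use the hypothesis that $\{\mathcal{P}_j^{0,k}(\mathbf{X})\}_{j,k}$ is bounded away from $0$ and $1$: through (\ref{eq: Logit Form})--(\ref{eq: market share is the weighted sum of group share}) this keeps the realized outside-option share strictly positive almost surely, so Lemma \ref{lem: Contraction Mapping Lemma} applies and $\bm{\delta}^{*m}$ is a well-defined function of the observables and parameters. I would then check that, when the market-share argument of the map $T$ in (\ref{eq: Contration Mapping}) is the model-implied share, the structural vector $\bm{\delta}^m=(u_1(x_j^m,\beta)+\epsilon_j^m)_{j}$ is a fixed point of $T$, so by uniqueness $\bm{\delta}^{*m}=\bm{\delta}^m$ there. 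Combining this with condition 3 of Definition \ref{def: identified set of the RIDC model}, the joint law of $(\mathbf{X}^m,D^m,\bm{\delta}^{*m})$ built from the observed $\bm{ms}^m$ coincides with that of $(\mathbf{X}^m,D^m,\bm{\delta}^m)$; hence $E[\delta^*_j-u_1(X_j^m,\beta)\mid\mathbf{X}^m,D^m]=E[\epsilon_j^m\mid\mathbf{X}^m,D^m]$, which vanishes by $\bm{\epsilon}^m\perp(\mathbf{X}^m,D^m)$ and the mean-zero normalization in Assumption \ref{assum: (DGP without Persuasion) }.

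For (\ref{Moment: Fixed point first order condition}), I would invoke condition 1 of Definition \ref{def: identified set of the RIDC model}: for each $k$ and $\mathbf{X}$, $\{\mathcal{P}_j^{0,k}(\mathbf{X})\}_j$ maximizes the concave program (\ref{eq: alterantive optimization}) with $v_j^m$ as in (\ref{eq: the v_j with linear specification}) and $\lambda=1$. Forming the Lagrangian with a multiplier $\mu$ on the adding-up constraint and nonnegative multipliers on the sign constraints, an interior solution (again by boundedness away from $0$) removes the latter by complementary slackness, leaving $E_G[e^{v_j^m}/\sum_l \mathcal{P}_l^{0,k}(\mathbf{X})e^{v_l^m}]=\mu$ for every $j$; multiplying by $\mathcal{P}_j^{0,k}(\mathbf{X})$ and summing over $j$ (using $\sum_j \mathcal{P}_j^{0,k}(\mathbf{X})=1$) forces $\mu=1$, and substituting $v_j^m=\delta_j^m+u_2(x_j^m,\nu_k,\alpha)$ together with $E[\cdot\mid\mathbf{X}^m]=E_G[\cdot]$ (valid since, given $\mathbf{X}^m$, $\bm{\delta}^m$ is a deterministic shift of $\bm{\epsilon}^m\sim G$) yields (\ref{Moment: Fixed point first order condition}). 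Condition 4 is immediate from Proposition \ref{prop: identification of G when knowing alpha beta}. I expect the only delicate step to be (\ref{moment condition: delta}): one must argue purely at the population level --- not with sample analogs --- both that Lemma \ref{lem: Contraction Mapping Lemma}'s positive-outside-share requirement holds almost surely in every market and that the fixed point $\bm{\delta}^{*m}$ extracted from the observed share can be identified with the unobserved structural $\bm{\delta}^m$; the remaining three pieces are a routine conditional-expectation computation, a standard KKT argument, and a direct citation.
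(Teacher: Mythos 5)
Your proposal is correct and follows essentially the same route as the paper's proof: the market-share identity combined with $D^m\perp(\bm{\epsilon}^m,\mathbf{X}^m)$ for (\ref{Moment condition: P}), the identification of the fixed point $\bm{\delta}^{*m}$ with the structural $\bm{\delta}^m$ plus the independence and mean-zero normalization of $\bm{\epsilon}^m$ for (\ref{moment condition: delta}), and the interior first-order condition of the convex program (\ref{eq: alterantive optimization}) for (\ref{Moment: Fixed point first order condition}). The only difference is that you spell out steps the paper leaves implicit---that the structural $\bm{\delta}^m$ is the unique fixed point of $T$ when the observed share equals the model-implied share, and that the Lagrange multiplier on the adding-up constraint equals one---which fills in rather than alters the argument.
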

	
	The first moment equality (\ref{Moment condition: P}) is equivalent to condition (\ref{eq: mean of con is unc}) in Definition \ref{def: identified set of the RIDC model}, since $\bm{ms}^m$ is the conditional choice probability while $\mathcal{P}_j^{0,k}$ is the unconditional choice probability. The second moment inequality (\ref{moment condition: delta}) is the consequence of conditions 2 and 3 in Assumption \ref{assum: (DGP without Persuasion) }. The third moment inequality is the first order condition of (\ref{eq: alterantive optimization}).

	\begin{remark}
		The identification results are different from the results in BLP in several ways. First, we need the number of markets to be large to identify the unconditional choice probability for different demographic groups from (\ref{Moment condition: P}). From the identified unconditional choice probability, we can proceed to identify coefficients on the product and demographic heterogeneous characteristics $\alpha$ and $\beta$. Second, in BLP we assume there is a vector of unobserved product heterogeneity $\bm{\xi}={(\xi_1,...\xi_J)}$ that can be recovered by matching market shares and model prediction. In the rational inattention discrete choice model, we recover a vector of market-specific utility shock $\bm{\epsilon}$. Third, the prior distribution of $\bm{\epsilon}$ is the structural object that we are interested in, but the distribution of $\bm{\xi}$ in BLP is not of fundamental interest. 
	\end{remark}
	
	\begin{remark}
		If the price of item $j$, denoted by $q_j$, enters in the product heterogeneity $X_j$, then the price is likely to be correlated with the unobserved market realized utility shock. For example, when sellers know the realization of $\bm{\epsilon}$, they may set a price accordingly. In this case, the assumption $E[\bm{\epsilon}^m|X^m]=0$ fails. In this case, we may want to find an instrument for $q_j$. The choice of instruments for the price is discussed in BLP. 
	\end{remark}

	Definition \ref{def: identified set of the persuasion strategy} of the identified set of persuasion strategy is conditioned on the value of $(\beta,\alpha,G)$. If $(\beta,\alpha,G)$ is point identified from Proposition \ref{prop: moment conditions from RIDC model}, we can assume that $(\beta,\alpha,G)$ is known by the econometrician and plug the identified $(\beta,\alpha,G)$ into Definition \ref{def: identified set of the persuasion strategy}. If $(\beta,\alpha,G)$ is not point identified, we can do analysis by considering that each point in the identified set $\Gamma_I$ as the true value separately. 
	
	For a point $(\alpha,\beta,G)$ in the identified set $\Gamma_I$, equation (\ref{eq: condi choice prob with persuasion}) defines the conditional choice probability of demographic group $k$ choosing item $j$ when they receive a persuasion signal $s$ from the ID. The $\tilde{\mathcal{P}}_{j,s}^{0,k}$ is the unconditional choice probability solved from (\ref{eq: RI optimization})-(\ref{mutual information}) when the intermediate belief is $\tilde{F}(\bm{\epsilon}|s)$. The choice probability $\tilde{\mathcal{P}}_{j,s}^{0,k}$ is conditioned on the signal $s^{ID}$, but unconditional on the utility shock. 
	
	The observed market share $\tilde{\bm{ms}}^m$ is a linear combination of different demographic groups' conditional choice probability :
	\begin{equation}\label{eq: observed choice prob in markets with persuasion}
	\tilde{{ms}}^m_j=(\tilde{\mathcal{P}}_{j,s}^{1}(\bm{\epsilon},\mathbf{X}^m),...\tilde{\mathcal{P}}_{j,s}^{K}(\bm{\epsilon},\mathbf{X}^m))
	(d^m_1,...,d^m_K)^\prime.
	\end{equation}
	Conditioned on $(d_1^m,...,d_K^m)$, we can take expectation on both sides of (\ref{eq: observed choice prob in markets with persuasion}) to get:
	\begin{equation} \label{market equality for persuasion}
	E[\tilde{ms}_j-(\tilde{\mathcal{P}}_{j,s}^{1}(\bm{\epsilon},\mathbf{X}^m),...\tilde{\mathcal{P}}_{j,s}^{K}(\bm{\epsilon},\mathbf{X}^m))(d^m_1,...,d^m_K)^\prime|D^m,\mathbf{X}^m] =0,\quad \forall j=1,...J.
	\end{equation}
	Since we do not observe the realization of the persuasion signal and the realization of the utility shock in each market, we can integrate it out. Let
	\begin{equation}\label{eq: definition of h_j^k}
	\begin{split}
	h_j^k(\mathbf{X}^m;\tilde{F}^k)&:= \int_{(s,\bm{\epsilon})} \tilde{\mathcal{P}}_{j,s}^{k}(\bm{\epsilon},\mathbf{X}^m) d\tilde{F}(s,\bm{\epsilon})\\
	&= \int_{(s,\bm{\epsilon})} \tilde{\mathcal{P}}_{j,s}^{k}(\bm{\epsilon},\mathbf{X}^m) d\tilde{F}(\bm{\epsilon}|s)d\tilde{F}^k(s)\\
	&=\int_s \tilde{\mathcal{P}}_{j,s}^{0,k}(\mathbf{X}^m) d\tilde{F}^k(s)
	\end{split}
	\end{equation}
	be the unconditional choice probability for demographic group $k$ under persuasion strategy $\tilde{F}(s,\bm{\epsilon};\theta)$. The third equality holds because $G(\bm{\epsilon}|s;\theta)=\tilde{F}(\bm{\epsilon}|s;\theta)$ by Bayes' rule. 
	\begin{prop}
		Under assumption \ref{assum: (DGP without Persuasion) } - \ref{assum: DGP with Persuasion}, for each $(\alpha,\beta,G)$, the true persuasion strategy parameter $\theta_0$ must satisfy the moment condition 
		
		\begin{equation}\label{eq: moment condition for persuasion}
		E[\tilde{ms}_j-\sum_{k=1}^K h_j^k(\mathbf{X}^m;\tilde{F})d^m_k|D^m,\mathbf{X}^m]=0 \,\,\, \forall j=1...J-1.
		\end{equation}
	\end{prop}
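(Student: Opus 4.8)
The plan is to start from the market-level accounting identity and then strip away, one at a time, the two sources of unobserved heterogeneity in a persuaded market — the market utility shock $\bm{\epsilon}^m$ and the designer's signal $s^{ID}$. \emph{First}, I would note that in any market with $\chi^m=1$ the observed share of product $j$ is by construction the demographic-weighted average of the group-level conditional choice probabilities, so that (\ref{eq: observed choice prob in markets with persuasion}) holds, where $\tilde{\mathcal{P}}_{j,s}^{k}(\bm{\epsilon},\mathbf{X}^m)$ is the Logit-form probability of (\ref{eq: condi choice prob with persuasion}) and $\tilde{\mathcal{P}}_{j,s}^{0,k}(\mathbf{X}^m)$ is the unconditional choice probability solving the RI program (\ref{eq: alterantive optimization}) at the DM's intermediate belief $\tilde{G}_s=\tilde{F}^k(\bm{\epsilon}\mid s)$; that this is the correct object to match is the converse direction of Lemma~\ref{lemma: Matejka and Mckay} applied to the DM's second-stage problem in the game of Section~2.2. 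Taking $E[\,\cdot\mid D^m,\mathbf{X}^m]$ of both sides of (\ref{eq: observed choice prob in markets with persuasion}) and rearranging gives (\ref{market equality for persuasion}), so the task reduces to showing $E[\tilde{\mathcal{P}}_{j,s}^{k}(\bm{\epsilon},\mathbf{X}^m)\mid D^m,\mathbf{X}^m]=h_j^k(\mathbf{X}^m;\tilde{F}^k)$.

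\emph{Second}, I would remove the conditioning on $D^m$: by part~4 of Assumption~\ref{assum: DGP with Persuasion} the pair $(s^{ID}_k,\bm{\epsilon}^m)$ is independent of $D^m$, and combined with part~1 of that assumption and part~3 of Assumption~\ref{assum: (DGP without Persuasion) } this independence also holds conditional on $\mathbf{X}^m$; hence the conditional expectation equals $E[\tilde{\mathcal{P}}_{j,s}^{k}(\bm{\epsilon},\mathbf{X}^m)\mid\mathbf{X}^m]$. Since the $\bm{\epsilon}$-marginal of the persuasion strategy $\tilde{F}^k$ is the prior $G$ (Definition~\ref{def: persuasion strategy}) and $\bm{\epsilon}^m\perp\mathbf{X}^m$, the conditional law of $(s^{ID}_{km},\bm{\epsilon}^m)$ given $\mathbf{X}^m$ is just $\tilde{F}^k(s,\bm{\epsilon})$, so this conditional expectation equals $\int_{(s,\bm{\epsilon})}\tilde{\mathcal{P}}_{j,s}^{k}(\bm{\epsilon},\mathbf{X}^m)\,d\tilde{F}^k(s,\bm{\epsilon})$.

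\emph{Third}, I would collapse this double integral exactly as in (\ref{eq: definition of h_j^k}). Disintegrating $d\tilde{F}^k(s,\bm{\epsilon})=d\tilde{F}^k(\bm{\epsilon}\mid s)\,d\tilde{F}^k(s)$ and applying Fubini (valid since $0\le\tilde{\mathcal{P}}_{j,s}^{k}\le1$), for each fixed $s$ the inner integral $\int_{\bm{\epsilon}}\tilde{\mathcal{P}}_{j,s}^{k}(\bm{\epsilon},\mathbf{X}^m)\,d\tilde{F}^k(\bm{\epsilon}\mid s)$ averages the Logit conditional choice probability over the DM's intermediate belief $\tilde{G}_s$ of (\ref{update belief}); this is precisely the left-hand side of the integrate-to-unconditional identity (\ref{eq: con P inte to unc}) of Lemma~\ref{lemma: Matejka and Mckay} evaluated at belief $\tilde{G}_s$, so it equals $\tilde{\mathcal{P}}_{j,s}^{0,k}(\mathbf{X}^m)$. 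What remains is $\int_s\tilde{\mathcal{P}}_{j,s}^{0,k}(\mathbf{X}^m)\,d\tilde{F}^k(s)=h_j^k(\mathbf{X}^m;\tilde{F}^k)$. Plugging this back into (\ref{market equality for persuasion}) delivers (\ref{eq: moment condition for persuasion}) for $j=1,\dots,J-1$; the $j=J$ equation is redundant because the shares $\tilde{ms}^m_j$ and the probabilities $h_j^k$ each sum to one.

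\emph{The hard part} will be the Bayes-consistency argument in the third step: I must keep the two updating layers separate — the designer's signal moves the prior $G$ to $\tilde{G}_s$, and only afterward does the DM's own information strategy move $\tilde{G}_s$ to the final posterior — and apply (\ref{eq: con P inte to unc}) relative to $\tilde{G}_s$ rather than to $G$, so that the $\bm{\epsilon}$-average of $\tilde{\mathcal{P}}^{k}_{j,s}$ against the posterior weights genuinely returns $\tilde{\mathcal{P}}^{0,k}_{j,s}$. The rest is bookkeeping: joint measurability of $(s,\bm{\epsilon})\mapsto\tilde{\mathcal{P}}_{j,s}^{k}(\bm{\epsilon},\mathbf{X}^m)$ and boundedness in $[0,1]$ to justify Fubini, plus the fact that the weights $d^m_k$ are constants once we condition on $D^m$, so the conditional expectation passes through the finite demographic sum.
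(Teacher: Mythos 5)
Your proposal is correct and follows essentially the same route as the paper: take the conditional expectation of the market-share accounting identity, use the independence of $(s^{ID},\bm{\epsilon}^m)$ from $D^m$ (and of $\bm{\epsilon}^m$ from $\mathbf{X}^m$) to replace the conditional choice probability by its expectation under $\tilde{F}^k$, apply the integrate-to-unconditional identity (\ref{eq: con P inte to unc}) at the intermediate belief $\tilde{G}_s$ to collapse the inner $\bm{\epsilon}$-integral to $\tilde{\mathcal{P}}^{0,k}_{j,s}(\mathbf{X}^m)$, and integrate out $s$ to obtain $h_j^k$. You simply make explicit the disintegration and Bayes-consistency steps that the paper folds into the definition (\ref{eq: definition of h_j^k}) and states tersely in its two-line proof.
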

	
	\begin{proof}
		By assumption \ref{assum: (DGP without Persuasion) }, the independence of demographic distribution $D^m$ and $(\bm{\epsilon}^m,\mathbf{X}^m)$: \[E[\tilde{\mathcal{P}}_{j,s}^k(\bm{\epsilon}^m,\mathbf{X}^m)|D^m,\mathbf{X}^m]=\tilde{\mathcal{P}}^{0,k}_{j,s}(\mathbf{X}^m).\] 
		Then by (\ref{market equality for persuasion}), we have 
		\begin{equation} 
		E[\tilde{ms}_j-(\tilde{\mathcal{P}}_{j,s}^{0,1}(\mathbf{X}^m),...\tilde{\mathcal{P}}_{j,s}^{0,K}(\mathbf{X}^m))(d^m_1,...,d^m_K)^\prime|D^m,\mathbf{X}^m] =0.
		\end{equation}
		Since the signal $s^{ID}\perp{D^m,\bm{X}^m}$ by assumption \ref{assum: DGP with Persuasion}, we have $E[\tilde{\mathcal{P}}^{0,k}_{j,s}(\mathbf{X}^m)|D^m,\mathbf{X}^m]=h_j^k(\mathbf{X}^m;\tilde{F})$. The result follows.
	\end{proof}
	
	The effective number of conditional moment equality is $J-1$ since I have the constraint that $\sum \tilde{ms}_j=1$. 
	We should be careful with the persuasion strategy in Bayesian persuasion.  The value of a signal in persuasion strategy itself has no meaning beyond the context of a communication game. For example, if $\tilde{F}_1$ is the distribution of $(\bm{\epsilon},s^{ID})$ and is the persuasion strategy used by the persuader, then let $\tilde{F}_2$ be the distribution of $(\bm{\epsilon},s^{ID}+\Delta)$, where $\Delta$ is an arbitrary vector that lies in the same space as $s^{ID}$. $\tilde{F}_2$ as a persuasion strategy is not different from $\tilde{F}_1$ since the value of the signal does not matter.

	
	In practice, we can consider the case where the persuasion strategy is indexed by a finite-dimensional parameter $\theta$: $\tilde{F}^k(s^{ID},\bm{\epsilon};\theta)$, and the support of $s^{ID}$ is finite. The persuasion strategy can depend on the demographic group $k$. There are several justifications for the use of a parametric persuasion strategy. First, when there are only two choices, the optimal persuasion strategy is to use a cut-off rule, see \citet{Kamenica2016}. In this case, the parameter $\theta$ is the cutoff points, and signals only take two values. Second, in many empirical contexts, it is costly to design complex persuasion strategies. For example, an online advertisement can only send a simple signal within a few seconds. If the cost of signal increase with the number of parameters and support points of signals, it is natural to restrict the persuasion strategy to parametric form. Third, a parametric persuasion strategy with discrete signal support facilitates a clear interpretation of the meaning of the signals. In \citet{Kamenica2011}, signals are interpreted as action recommendations.

	\subsubsection*{Discussion of Moment Condition (\ref{eq: moment condition for persuasion})}
	One issue with the moment condition (\ref{eq: moment condition for persuasion}) is that it does not guarantee the identification of persuasion  parameters $\theta$. For example, consider the case where there is only one  demographic group $K=1$ and no product characteristics heterogeneity across markets $\mathbf{X}^m=\mathbf{X}\,\,\forall m$. In this case, moment condition (\ref{eq: moment condition for persuasion}) implies $h_j(\tilde{F})=E[\tilde{ms}_j]$. If $\tilde{F}$ is indexed by a parameter $\theta$  and $h_j(\tilde{F}(\theta))$ is not monotone in $\theta$, then $\theta$ is not necessarily point identified. 
	
	There are several restrictions that help to tighten the identified set of $\tilde{F}$. The first is to impose the persuasion strategy is the same for certain demographic groups, i.e. $\tilde{F}^k(s|\bm{\epsilon})=\tilde{F}^{k'}(s|\bm{\epsilon})$ for some $k\ne k'$. Then demographic variation will tighten the bounds on the persuasion strategy. This is because different demographic groups' choice probability can have different sensitivity to the same persuasion strategy. The second is to impose the parameter $\theta$ to be of lower dimension smaller than $J$. The variation of the choice probability across different products can tighten the bounds of the parameter that indexes the persuasion strategy. Third, the variation of product characteristics across markets can also tighten the bounds on $\tilde{F}$. This is because if in a market $m$ the $j$-th product characteristics $x_j^m$ generates large utility to decision makers, persuasion strategy is unlikely to change the market share a lot. 
	
	\subsubsection*{Point Identification Assumption}
	
	It is worthwhile to discuss the assumptions under which parameters $(\alpha,\beta,\mathcal{P}_j^{0,k},G)$ and $\theta$ are point identified. Note that the moment conditions constructed in (\ref{Moment condition: P})- (\ref{Moment: Fixed point first order condition}) are similar to the moment conditions appeared in BLP, except that I have extra parameters $\mathcal{P}_{j}^{0,k}(\mathbf{X})$ to identify. Note that the moment condition for $\mathcal{P}_j^{0,k}(\mathbf{X})$ is similar to the moment condition for linear regression, so if $E[F_k F_k^\prime|\mathbf{X}]$ is invertible $\mathbf{X}-a.s.$, then $\mathcal{P}_{j}^{0,k}(\mathbf{X})$ is identified. The global sufficient primitive conditions for identification of moment conditions (\ref{moment condition: delta})- (\ref{Moment: Fixed point first order condition}) are not easy to interpret, because the fixed point $\bm{\delta}^*$ in Lemma \ref{lem: Contraction Mapping Lemma} is highly non-linear in its arguments. In a similar situation in BLP, they assume the moment conditions are sufficient to identify the utility parameter. 
	\begin{assumption}\label{assum: point id of alpha beta}
		(Identification Assumption)	
		
		\begin{enumerate}
			\item  $E[F_k F_k^\prime|\mathbf{X}]$ is invertible, $\mathbf{X}-a.s$ .
			\item At the true parameter  $\{\mathcal{P}_j^{0,k}(\mathbf{X})\}_{j,k}$, there is a unique $(\alpha,\beta)$ such that moment conditions (\ref{moment condition: delta}) and (\ref{Moment: Fixed point first order condition}) hold.
		\end{enumerate}
	\end{assumption}
	The second requirement in Assumption \ref{assum: point id of alpha beta} is not as restrictive as it seems. In particular, if there is only one demographic group, the fixed point in Lemma \ref{lem: Contraction Mapping Lemma} is given by 
	\begin{equation}\label{eq: closed for solution of delta when no demographic}
	\delta_j^*= \log\frac{ms_j^m}{ms_J^m}- \log\frac{\mathcal{P}_j^{0}(\mathbf{X}^m)}{\mathcal{P}_J^{0}(\mathbf{X}^m)},
	\end{equation}
	and moment condition (\ref{moment condition: delta}) becomes 
	\[
	E\left[\log\frac{ms_j^m}{ms_J^m}- \log\frac{\mathcal{P}_j^{0}(\mathbf{X}^m)}{\mathcal{P}_J^{0}(\mathbf{X}^m)}-u_1(X_j^m,\beta)\bigg|\mathbf{X}^m\right]=0.
	\]
	If $\{\mathcal{P}_j^{0,k}(\mathbf{X})\}_{j,k}$ is identified from moment condition \ref{Moment condition: P} and $u_1$ is a linear function, then $\beta$ is point identified.

	Now suppose the persuasion is parametric and indexed by $\theta$. The assumptions to guarantee that $\theta$ is identified up to $(\alpha,\beta,G)$\footnote{We say $\theta$ is identified up to $(\alpha,\beta,G)$ if the data generating process allow us to point identify $\theta$ for each given parameter $(\alpha,\beta,G)$. } is easier to write down. The discussion of (\ref{eq: moment condition for persuasion}) shows that $h_j^{k,0}(x^m)\equiv h_j^k(x^m;\theta_0)$ is identified if $E[D^m D^{m\prime}|\mathbf{X}]$ is invertible, $\mathbf{X}-a.s$, where $\theta_0$ is the true value of $\theta$. Then the identified set of the persuasion strategy is then the set of the $\theta^*$ such that $h_j^k(\bm{x};\theta^*)=h_j^{k,0}(\bm{x})$ for all $j,k$ and $\bm{x}\in supp(\mathbf{X})$.
	\begin{assumption}
		The matrix $E[D^m D^{m\prime}|\mathbf{X}]$  is invertible  for $\mathbf{X}-a.s.$.
	\end{assumption}

	\section{Estimation}
	Under Assumption \ref{assum: point id of alpha beta}, $(\alpha,\beta,G,\{\mathcal{P}^0_{j,k}(\mathbf{X})\}_{j,k})$ is point identified from moment conditions (\ref{Moment condition: P}), (\ref{moment condition: delta}) and (\ref{Moment: Fixed point first order condition}). When the product characteristics $\mathbf{X}$ are continuously distributed,  $\mathcal{P}_{j}^{0,k}(\mathbf{X})$ in moment condition  (\ref{Moment condition: P}) needs to be estimated non-parametrically. However, in some empirical settings, the product characteristics are discrete and standard estimators of moment equality such as GMM estimator can be implemented directly. In this section, I discuss the estimation of $(\alpha,\beta,G,\{\mathcal{P}^0_{j,k}(\mathbf{X})\}_{j,k})$ when the characteristics $\mathbf{X}$ are discrete. 
	\begin{assumption}\label{assum: discrete X}
		The product characteristics $\mathbf{X}^m$ are discretely distributed and supported on $L$ points: $\{\bm{x}(1),...\bm{x}(L)\}$, and the probability $\inf_{l=1,...,L}Pr(\mathbf{X}^m=x(l))>1/C$ for some constant $C>0$.
	\end{assumption}
	
	Under Assumption \ref{assum: discrete X}, the analysis of moment conditions  (\ref{Moment condition: P}), (\ref{moment condition: delta}) and (\ref{Moment: Fixed point first order condition}) can be done conditioned on the value of $\mathbf{X}^m$ separately. Since the demographic characteristics $v_k$ are also discrete, the most general utility function  of (\ref{eq: parameteric utility assumption}) under discrete $v_k$ and $\mathbf{X}$ can be re-written as 
	\[
	u_{ijkm}= \alpha_j^k(l)\quad if \quad (x_j^m)_{j=1}^J=\bm{x}(l),
	\]
	where $\alpha_j^k(l)$ is the mean utility\footnote{The utility parameter $\beta$ cannot be separated from $\alpha_j^k(l)$, so I normalize $u_1\equiv0$ for all $j$.} of product $j$ for a demographic group $k$ individual in a market with characteristics $\bm{x}(l)$. Any parametric assumption on the utility $u_1$ and $u_2$ in (\ref{eq: parameteric utility assumption}) can be imposed as constraints on the value of $\alpha_j^k(l)$.
	
	Even if $\nu_k$ is distributed on $K$ discrete points, the random vector $D^m$ is continuously distributed. Moment conditions (\ref{Moment condition: P}), (\ref{moment condition: delta}) are still conditioned on $D^m$ and we need to transform them into unconditional moment conditions. Moment condition (\ref{Moment condition: P}) is linear in the elements of $D^m$ and the optimal instrument will be $d_1^m,...,d_k^m$, and we can define  $\mathcal{P}_{j}^{0,k}(\bm{x}(l))$ as $\mathcal{P}_{j}^{0,k}(l)$. For moment condition (\ref{moment condition: delta}), we can use $D^m$ and its second order power terms $\{(d_j^m)^t: j=1,...J, t=1,2 \}$ as instruments to form unconditional moment conditions. 
	
	Let $\bm{\alpha}$ denote the vector of $\{\alpha_j^k(l)\}_{j,k,l}$ and $\mathbf{P}$ denote the vector of $\{\mathcal{P}_{j}^{0,k}(l)\}_{j,k,l}$. Let $\gamma(\bm{ms}^m,D^m,\mathbf{X}^m,\bm{\alpha},\mathbf{P})$ denote the moment unconditional conditions. The standard GMM estimator of $(\bm{\alpha},\mathbf{P})$ is given by
	\begin{equation}\label{eq: GMM ojbective for alpha and P}
	(\hat{\bm{\alpha}},\hat{\mathbf{P}})=\arg\min [\frac{1}{M}\sum_{m=1}^M\gamma(\bm{ms}^m,D^m,\mathbf{X}^m,\bm{\alpha},\mathbf{P})]^\prime \hat{W} [\frac{1}{M}\sum_{m=1}^M\gamma(\bm{ms}^m,D^m,\mathbf{X}^m,\bm{\alpha},\mathbf{P})],
	\end{equation}
	where $M$ is the number of markets without the persuader, and $\hat{W}$ is any positive semi-definite weighting matrix. Standard asymptotic normality results\footnote{For example, see Theorem 3.4 of \citet{newey1994large}.} on the GMM estimator can be applied if the moment condition satisfies some regularity conditions.
	\begin{assumption} \label{assum: High Level consistency}
		Suppose the following conditions hold: (i).The true parameter value $(\bm{\alpha}^0,\mathbf{P}^0)$ lies in the interior of the parameter space; (ii). $\gamma(\bm{ms}^m,D^m,\mathbf{X}^m,\cdot,\cdot)$ is continuously differentiable on the interior of the parameter space for $(\bm{ms}^m,D^m,\mathbf{X}^m)$; (iii). $\gamma(\bm{ms}^m,D^m,\mathbf{X}^m,\bm{\alpha}^0,\mathbf{P}^0)$ has finite second moment; (iv) $E[|\nabla_{(\bm{\alpha},\mathbf{P})} \gamma(\bm{ms}^m,D^m,\mathbf{X}^m,\bm{\alpha}^0,\mathbf{P})^0]$ has rank $dim((\bm{\alpha},\mathbf{P}))$; (v). There exists a integrable function $b$ such that  \[\left|\nabla_{(\bm{\alpha},\mathbf{P})} \gamma(\bm{ms}^m,D^m,\mathbf{X}^m,\bm{\alpha},\mathbf{P})\right|<b(\bm{ms}^m,D^m,\mathbf{X}^m).\]  
	\end{assumption}
	
	Conditions (i), (iii) and (iv) are assumptions on the true value of the parameter of interest $(\bm{\alpha}_0,\mathbf{P}_0)$, which are not verifiable without observing the data distribution. Conditions (ii) and (v) are assumptions on the derivatives of the moment conditions. It is difficult to verify (ii) and (v) because $\bm{\delta}^{m*}$ as a function of $\bm{\alpha}$ and $\mathbf{P}$ is defined through the contraction mapping (\ref{eq: Contration Mapping}). General primitive conditions on the rational inattention model to guarantee that $\bm{\delta}^{m*}$ is continuously differentiable in $\bm{\alpha},\mathbf{P}$ are hard to find. However, when there is no demographic heterogeneity, the $\bm{\delta}^{m*}$ in Lemma \ref{lem: Contraction Mapping Lemma} has a closed from solution (\ref{eq: closed for solution of delta when no demographic}). In this case, the moment conditions (\ref{moment condition: delta}) and  (\ref{Moment: Fixed point first order condition}) can be rewritten as 
	\begin{equation*}
	E\left[\left(\log\frac{ms_j^m}{ms_J^m}- \log\frac{\mathcal{P}_j^{0}(\bm{x}(l))}{\mathcal{P}_J^{0}(\bm{x}(l))}-\alpha_j(l)\right)\mathbbm{1}(\mathbf{X}^m=\bm{x}(l))\right]=0,
	\end{equation*}
	\begin{equation*} 
	E \bigg[\left(\frac{ \frac{ms_j^m}{ms_J^m} / \frac{\mathcal{P}_j^{0}(\bm{x}(l))}{\mathcal{P}_J^{0}(\bm{x}(l))}
	}{\sum_{l\in\mathcal{J}} {\mathcal{P}_l^{0,k}(\mathbf{X})\left[\frac{ms_j^m}{ms_J^m} / \frac{\mathcal{P}_j^{0}(\bm{x}(l))}{\mathcal{P}_J^{0}(\bm{x}(l))}\right]}}-1\right)\mathbbm{1}(\mathbf{X}^m=\bm{x}(l))\bigg]=0.
	\end{equation*}
	If there exists a constant $C>0$ such that  $\mathcal{P}_j^0(\bm{x}(l))>1/C$ holds for all $j,l$, then conditions (ii) and (iv) holds. 
	\begin{lem}\label{lemma: first stage normality}
		Suppose assumption \ref{assum: High Level consistency} holds. Denote $B_0=E[\nabla_{\bm{\alpha},\mathbf{P}} \gamma(\bm{ms}^m,D^m,\mathbf{X}^m,\bm{\alpha},\mathbf{P})]$. Then
		\[\sqrt{M}[(\hat{\bm{\alpha}},
		\hat{\mathbf{P}})-({\bm{\alpha}_0},\mathbf{P}_0)\rightarrow_d N(0,\Sigma),\]
		where $\Sigma=(B_0^\prime WB_0)^{-1}B_0^\prime W\Lambda W B_0^\prime (B_0^\prime WB_0)^{-1}$, and 
		\[
		\Lambda=E[\gamma(\bm{ms}^m,D^m,\mathbf{X}^m,\bm{\alpha},\mathbf{P}) \gamma(\bm{ms}^m,D^m,\mathbf{X}^m,\bm{\alpha},\mathbf{P})^\prime].
		\]
	\end{lem}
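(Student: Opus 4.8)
The plan is to verify that the GMM estimator defined in (\ref{eq: GMM ojbective for alpha and P}) satisfies the hypotheses of a standard asymptotic normality theorem for extremum/GMM estimators (e.g. Theorem 3.4 of \citet{newey1994large}, which the paper already cites), and then simply read off the sandwich form of the variance. First I would establish consistency of $(\hat{\bm\alpha},\hat{\mathbf P})$: the moment function $\gamma$ is a finite-dimensional vector (since $\mathbf X$ and $\nu_k$ are discrete, by Assumption \ref{assum: discrete X}, so $\bm\alpha$ and $\mathbf P$ each live in a finite-dimensional Euclidean space), the true value solves the population moment equations by Proposition \ref{prop: moment conditions from RIDC model}, and identification — uniqueness of the zero of the population moment — follows from Assumption \ref{assum: point id of alpha beta} together with the rank condition (iv) in Assumption \ref{assum: High Level consistency}. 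Combined with the envelope-type domination bound (v) and continuity (ii), a uniform law of large numbers gives uniform convergence of the sample criterion to its population analogue, hence consistency by the standard argument.

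Next I would carry out the usual mean-value expansion of the first-order condition. Because the markets $m=1,\dots,M$ are independent (Assumption \ref{assum: (DGP without Persuasion) } and \ref{assum: DGP with Persuasion} deliver i.i.d. sampling across markets without the persuader), $\frac{1}{\sqrt M}\sum_{m=1}^M \gamma(\bm{ms}^m,D^m,\mathbf X^m,\bm\alpha_0,\mathbf P_0)$ is a normalized sum of i.i.d. mean-zero terms with finite second moment by condition (iii), so the Lindeberg–Lévy central limit theorem gives
\[
\frac{1}{\sqrt M}\sum_{m=1}^M \gamma(\bm{ms}^m,D^m,\mathbf X^m,\bm\alpha_0,\mathbf P_0)\ \rightarrow_d\ N(0,\Lambda),
\]
with $\Lambda$ as defined in the statement. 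Expanding the sample first-order condition around $(\bm\alpha_0,\mathbf P_0)$, the Jacobian term $\frac{1}{M}\sum_m \nabla_{(\bm\alpha,\mathbf P)}\gamma$ evaluated at a point between the estimator and the truth converges in probability to $B_0$ by a uniform LLN (again using domination (v) and continuity (ii)), and $B_0'WB_0$ is invertible by the rank condition (iv) plus positive-definiteness of the weight matrix limit. Solving for $\sqrt M[(\hat{\bm\alpha},\hat{\mathbf P})-(\bm\alpha_0,\mathbf P_0)]$ and applying the continuous mapping / Slutsky theorem yields the stated limiting normal with variance $\Sigma=(B_0'WB_0)^{-1}B_0'W\Lambda WB_0'(B_0'WB_0)^{-1}$.

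The main obstacle is not any of the classical GMM machinery — that is entirely routine — but rather verifying the smoothness and domination conditions (ii) and (v) in Assumption \ref{assum: High Level consistency} for the particular $\gamma$ here, since $\bm\delta^{*m}$ enters through the contraction mapping (\ref{eq: Contration Mapping}) and is only implicitly defined. I would handle this by invoking the implicit function theorem on the fixed-point equation $T(\bm\delta^{*m})=\bm\delta^{*m}$: since $T$ is a contraction with modulus bounded away from one (Lemma \ref{lem: Contraction Mapping Lemma}, given $ms_J^0>0$), the Jacobian $I-\nabla_{\bm\delta}T$ is invertible with a uniformly bounded inverse, so $\bm\delta^{*m}$ inherits continuous differentiability in $(\bm\alpha,\mathbf P)$ from the smoothness of $ms^*_j$, which is smooth wherever the unconditional choice probabilities are bounded away from zero and one. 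This is exactly why the lemma is stated under Assumption \ref{assum: High Level consistency} as a high-level hypothesis; in the special case of no demographic heterogeneity, the closed-form solution (\ref{eq: closed for solution of delta when no demographic}) makes (ii) and (v) transparent provided $\mathcal P_j^0(\bm x(l))>1/C$, as noted in the text preceding the lemma, so I would also spell out that special case as a concrete sufficient condition.
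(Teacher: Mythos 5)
Your proposal is correct and follows essentially the same route as the paper, which does not spell out a separate proof but simply invokes the standard GMM asymptotic normality result (Theorem 3.4 of \citet{newey1994large}) under the high-level regularity conditions of Assumption \ref{assum: High Level consistency}; your mean-value expansion, CLT, and Slutsky steps are exactly the content of that theorem, and your remarks on verifying conditions (ii) and (v) via the contraction mapping and the closed-form case (\ref{eq: closed for solution of delta when no demographic}) mirror the paper's own discussion preceding the lemma.
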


	Recall that the moment condition for persuasion strategy in (\ref{eq: moment condition for persuasion}) is derived for each identified value of $(\alpha,\beta,G)$. Now I give an estimator of the persuasion strategy when the estimated $(\hat{\bm{\alpha}},\hat{\mathbf{P}})$ in Lemma \ref{lemma: first stage normality} are directly plugged into (\ref{eq: moment condition for persuasion}). This is a two-step estimation procedure and will not be efficient. I will discuss the complexity of the joint estimation of moment conditions \ref{Moment condition: P}-\ref{Moment: Fixed point first order condition} and (\ref{eq: moment condition for persuasion}) after the plug-in estimator of the persuasion strategy is introduced. 
	
	Given the estimated $(\hat{\bm{\alpha}},\hat{\mathbf{P}})$, we can construct a sample of estimated realized utility 
	\begin{equation}\label{eq: estimated sample of v_jk^m}
	\hat{v}_{j,k}^m(\bm{x}(l))=\sum_{l=1}^L \left[\delta_j(\bm{ms}^{m},\mathbf{X}^m,D^m,\hat{\bm{\alpha}},\hat{\mathbf{P}}^{0})+\hat{\alpha}_j^k(\bm{X}^m)\right] \mathbbm{1}(\mathbf{X}^m=\bm{x}(l))
	\end{equation}
	corresponding to  (\ref{eq: the v_j with linear specification}) and a sample of utility shock
	\begin{equation}\label{eq: estimated shocks}
	\hat{\epsilon}^m_j= \delta_j(\bm{ms}^{m},\mathbf{X}^m,D^m,\hat{\bm{\alpha}},\hat{\mathbf{P}}^{0}).
	\end{equation}
	Fixing the demographic group $k$ and the characteristics $\bm{x}(l)$, the distribution of $\hat{v}_{j,k}^m(\bm{x}(l))$ conditioned on $k$ and $\bm{x}(l)$ is an estimated distribution of realized utility. 
	
	To form moment condition (\ref{eq: moment condition for persuasion}), we first need the unconditional choice probability $\tilde{P}_{j,s}^{0,k}(\mathbf{X}^m)$ in (\ref{eq: definition of h_j^k}) for each demographic group $k$ and for each product characteristics. To get an estimator of $\tilde{P}_{j,s}^{0,k}(\mathbf{X}^m)$, denoted by $\hat{\mathcal{P}}_{j,s}^{0,k}(\mathbf{X}^m)$, we need to solve optimization problem (\ref{eq: alterantive optimization}) with an estimated prior belief. I look at the empirical counterparts of optimization problem (\ref{eq: alterantive optimization}) under persuasion strategy $\tilde{F}(s^{ID},\bm{\epsilon};\theta)$ conditioned on markets with  $\mathbf{X}^m=\bm{x}(l)$:
	\begin{equation} \label{eq: empirical alternative optimization}
	\begin{split}
	\frac{1}{\sum_{m'=1}^{M(\bm{x}(l))} \tilde{F}^k(s^{ID}=s|\hat{\epsilon}^{m'}_j;\theta)}
	&\max_{\{\tilde{\mathcal{P}}^{0,k}_{j,s}\}_{j=1}^J} \sum_{m=1}^{M(\bm{x}(l))}  \log(\sum_{j=1}^J \tilde{\mathcal{P}}^{0,k}_{j,s}(\bm{x}(l)) e^{\hat{v}^m_j}) \times \tilde{F}^k(s^{ID}=s|\hat{\epsilon}^m_j;\theta)\\
	&s.t. \, \, \forall j: \tilde{\mathcal{P}}^{0,k}_{j,s}(\bm{x}(l))\ge 0,	\\
	&\sum_{j=1}^J \tilde{\mathcal{P}}^{0,k}_{j,s}(\bm{x}(l)) =1,
	\end{split} 
	\end{equation}
	where $M(\bm x(l))$ is the number of markets such that $\mathbf{X}^m=x(l)$.
	I implicitly imposed that the marginal distribution of $\tilde{F}^k({\epsilon}_j^m)$ is the empirical distribution of $\hat{\bm{\epsilon}}^m$, and by Bayes' rule  $\frac{\tilde{F}^k(s^{ID}|\hat{\epsilon}^m_j;\theta)}{\sum_{m'=1}^M \tilde{F}^k(s^{ID}|\hat{\epsilon}^{m'}_j;\theta)}$ is the posterior belief when the DM receive a signal $s$. Let $\hat{\mathcal{P}}_{j,s}^{0,k}(\bm{x}(l))$ be the solution to (\ref{eq: empirical alternative optimization}), and denoted the vector $(\hat{\mathcal{P}}_{j,s}^{0,k}(\bm{x}(l)))_{j,k,s,l}$ as $\hat{\mathbf{P}}_s$.
	
	After solving $\hat{\mathcal{P}}_{j,s}^{0,k}(\bm{x}(l))$, we can now write the empirical version of moment condition (\ref{eq: moment condition for persuasion}). Let $N$ be the number of markets with persuasion. Denote $\forall l=0,...L$ and $\forall j=1,...J-1$:
	\begin{equation}\label{eq: moment function of theta with plug in estimator}
	g_{l,j,k}(\theta,\tilde{\mathbf{ms}}^m,D^m,\mathbf{X}^m,\hat{\mathbf{P}}_s)= [\tilde{ms}_j^m-\sum_{d=1}^K h_j^d(\theta,\hat{\mathbf{P}}_s,\bm{x}(l))d_k^m] d_k^m \mathbbm{1}(\mathbf{X}^m=\bm{x}(l)), 
	\end{equation}
	\begin{equation} \label{equation: h-estimator}
	h_j^k(\theta,\hat{\mathbf{P}}_s,\bm{x}(l))=\sum_{s} \bigg[\hat{\mathcal{P}}_{j,s}^{0,k}(\bm{x}(l),\theta) \sum_{m=1}^{N(\bm{x}(l))}\frac{\tilde{{F}}(s|\bm{\epsilon}^m;\theta)}{N(\bm{x}(l))} \bigg]
	\end{equation}
	where $\tilde{\mathbf{ms}}^m$ is a vector of share observation in market $m$, and $N(\bm{x}(l))$ is the number of markets with persuasion such that $\mathbf{X}^m=\bm{x}(l)$. Then we can estimate $\theta$ by the usual GMM estimator:
	\begin{equation} \label{eq: GMM objectives theta}
	\hat{\theta}=\arg\min (\frac{1}{N}\sum_{m=1}^N \mathbf{g}^m(\theta))'{W_2}  (\frac{1}{N}\sum_{m=1}^N \mathbf{g}^m(\theta)),
	\end{equation}
	where $\mathbf{g}^m(\theta))$ is the vector of moment functions $(g_{l,j})_{l,j}$ in (\ref{eq: moment function of theta with plug in estimator}).
	
	In what follows, I derive the consistency of $\hat{\theta}$ when the persuasion strategy  has a smooth parametric form $\tilde{F}(s^{ID,\bm{\epsilon}};\theta)$ and the signal $s^{ID}$ is discrete.
	\begin{assumption} \label{assumption: smooth parametric persuasion strategy}
		The persuasion strategy satisfies that  $\exists\,\,C>0$ for all $s$ value:
		\begin{enumerate}
			\item  $\tilde{F}(s|\bm{\epsilon};\theta)$ is differentiable with respect to $\bm{\epsilon}$, and the gradient is uniformly bounded in $\theta$:
			\[\sup_{\theta\in\Theta,s}\left|\frac{\partial \tilde{F}(s|\bm{\epsilon};\theta)}{\partial \epsilon_j}\right|<C;\]
			\item The $\bm{\delta}^{*m}({\bm{ms}}^m,D^m,\mathbf{X}^m;\bm{\alpha},(\mathcal{P}_j^{0,k}(\mathbf{X}^m))_{j,k})$ defined in Lemma \ref{lem: Contraction Mapping Lemma} satisfies 
			\[
			|\frac{\partial {\delta}^{*m}_j }{\partial \kappa}|<C\quad \forall \kappa\in \{\alpha_{j}^k(l), (\mathcal{P}_j^{0,k}(\bm{x}(l)): j,k,l\}
			\]
			for all values of $\bm{ms}^m,D^m,\mathbf{X}^m$. 
			\item The partial derivatives with respect to the elements of $\theta$ satisfy 	\[\sup_{\bm{\epsilon},s,i}\left|\frac{\partial \tilde{F}(s|\bm{\epsilon};\theta)}{\partial \theta_i}\right|<C.\]

		\end{enumerate}
	\end{assumption}
	The following condition imposes that the instruments $Z(D^m)$ point identify the parameter $\theta$. The point identification conditions are discussed in section 3. 
	\begin{assumption}\label{assumption: GMM identification with theta}
		Let $\mathbf{g}(\theta)= \left(g_{l,j}(\theta,\tilde{\mathbf{ms}}^m,D^m,\mathbf{X}^m,\hat{\mathbf{P}}_s)\right)_{j=1,...,J-1}^{l=1,...,L}$, and define $L(\theta)=g(\theta)'W_2g(\theta)$. The following identification condition hold for all $\zeta>0$
		\[
		\sup_{d(\theta,\theta_0)>\zeta} L(\theta)-L(\theta_0) >0.
		\]
	\end{assumption}

	\begin{prop}\label{prop: Consistency of theta}
		Under assumptions \ref{assum: High Level consistency} -\ref{assumption: GMM identification with theta}, and technical assumption \ref{assumption: M-estimation of P}, $\hat{\theta}$ is a consistent estimator of $\theta_0$. 
	\end{prop}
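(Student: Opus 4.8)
\emph{Proof sketch.} The plan is to verify the hypotheses of a standard consistency theorem for extremum estimators with a first-step plug-in nuisance parameter (for instance Theorem~2.1 of \citet{newey1994large}). Write
\[
\hat L_N(\theta)=\Bigl(\tfrac1N\sum_{m=1}^N\mathbf g^m(\theta)\Bigr)'W_2\Bigl(\tfrac1N\sum_{m=1}^N\mathbf g^m(\theta)\Bigr),\qquad L(\theta)=g(\theta)'W_2\,g(\theta),\quad g(\theta)\equiv E[\mathbf g^m(\theta)],
\]
where $\mathbf g^m(\theta)$ is built from (\ref{eq: moment function of theta with plug in estimator})--(\ref{equation: h-estimator}). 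It then suffices to establish: (a) $\Theta$ is compact and $\theta\mapsto g(\theta)$ is continuous; (b) $\theta_0$ uniquely minimizes $L$ over $\Theta$; and (c) $\sup_{\theta\in\Theta}\lvert\hat L_N(\theta)-L(\theta)\rvert\to_p0$. Part~(b) is exactly Assumption~\ref{assumption: GMM identification with theta}, continuity in (a) will come from the smoothness imposed on $\tilde F$ in Assumption~\ref{assumption: smooth parametric persuasion strategy}, and compactness of $\Theta$ is maintained throughout. So the real content is the uniform convergence (c), which in turn requires controlling three nested sources of error inside $\mathbf g^m(\theta)$: the first-step error in $(\hat{\bm\alpha},\hat{\mathbf P})$, the induced error in the pseudo-utilities $\hat v^m_{j,k}$ and pseudo-shocks $\hat\epsilon^m_j$ of (\ref{eq: estimated sample of v_jk^m})--(\ref{eq: estimated shocks}), and the error in the nested optimizer $\hat{\mathcal P}^{0,k}_{j,s}(\bm x(l),\theta)$ defined through (\ref{eq: empirical alternative optimization}).

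First I would propagate the first-step error. By Lemma~\ref{lemma: first stage normality}, $(\hat{\bm\alpha},\hat{\mathbf P})\to_p(\bm\alpha_0,\mathbf P_0)$ (it is computed from the $\chi^m=0$ markets, disjoint from the persuasion-market sample used below, by Assumption~\ref{assum: Data}). Part~2 of Assumption~\ref{assumption: smooth parametric persuasion strategy} bounds the partial derivatives of the contraction-mapping fixed point $\bm\delta^{*m}$ of (\ref{eq: Contration Mapping}) with respect to each coordinate of $(\bm\alpha,\mathbf P)$ uniformly over $(\bm{ms}^m,D^m,\mathbf X^m)$, so a mean-value bound gives
\[
\max_{1\le m\le N}\bigl\lvert\hat\epsilon^m_j-\epsilon^m_j\bigr\rvert=O_p\bigl(\lVert(\hat{\bm\alpha},\hat{\mathbf P})-(\bm\alpha_0,\mathbf P_0)\rVert\bigr)=o_p(1),
\]
and likewise $\max_m\lvert\hat v^m_{j,k}-v^m_{j,k}\rvert=o_p(1)$. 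By Assumption~\ref{assum: discrete X}, for every support point $\bm x(l)$ the number of markets with $\mathbf X^m=\bm x(l)$ (of either type) grows linearly with the sample size almost surely, so each sample average below restricted to $\mathbf X^m=\bm x(l)$ is covered by a law of large numbers.

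The main obstacle is the uniform-in-$\theta$ consistency of the nested optimizer: I must show that for every $j,k,s,l$,
\[
\sup_{\theta\in\Theta}\bigl\lVert\hat{\mathcal P}^{0,k}_{j,s}(\bm x(l),\theta)-\mathcal P^{0,k}_{j,s}(\bm x(l),\theta)\bigr\rVert\to_p0,
\]
where $\mathcal P^{0,k}_{j,s}(\bm x(l),\theta)$ is the population maximizer of (\ref{eq: alterantive optimization}) under the posterior belief $\tilde F^k(\bm\epsilon\mid s;\theta)$. The argument I have in mind is: the criterion in (\ref{eq: empirical alternative optimization}), after its normalizing factor, is an average over markets with $\mathbf X^m=\bm x(l)$ of $\log\bigl(\sum_j\mathcal P^{0,k}_{j,s}e^{\hat v^m_j}\bigr)$ reweighted by the posterior weights $\tilde F^k(s\mid\hat\epsilon^m_j;\theta)\big/\sum_{m'}\tilde F^k(s\mid\hat\epsilon^{m'}_j;\theta)$. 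Parts~1 and 3 of Assumption~\ref{assumption: smooth parametric persuasion strategy} make $\tilde F^k(s\mid\cdot\,;\cdot)$ Lipschitz in $(\bm\epsilon,\theta)$ with a uniform envelope, and the technical Assumption~\ref{assumption: M-estimation of P} keeps the normalizing denominator bounded away from zero uniformly in $\theta$; combining this with a uniform law of large numbers (Lemma~2.4 of \citet{newey1994large}) over the relevant markets, the joint continuity of the integrand in $(\mathcal P^{0,k},\theta)$, and the $o_p(1)$ replacement of $\hat v$ by $v$ from the previous paragraph, the empirical criterion converges in probability, uniformly over the product of the probability simplex and $\Theta$, to the population objective of (\ref{eq: alterantive optimization}). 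Since that program is concave on the compact simplex with a unique maximizer (strict concavity of the log-sum-exp criterion in the interior, together with the characterization in Lemma~\ref{lemma: Matejka and Mckay}), a standard argmax argument upgrades uniform convergence of criteria to uniform convergence of maximizers as functions of $\theta$.

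Finally I would assemble $\mathbf g^m(\theta)$ and conclude. The function $h_j^k(\theta,\hat{\mathbf P}_s,\bm x(l))$ in (\ref{equation: h-estimator}) is a finite sum over signal values of $\hat{\mathcal P}^{0,k}_{j,s}(\bm x(l),\theta)$ times a sample estimate of the marginal signal probability; by the previous step and one further uniform law of large numbers (again with the envelope from Assumption~\ref{assumption: smooth parametric persuasion strategy}) it converges uniformly in $\theta$ to the population object $h_j^k(\bm x(l);\theta)$ of (\ref{eq: moment condition for persuasion}). Because $\tilde{ms}^m_j$ and $d^m_k$ lie in $[0,1]$, the moment function $g_{l,j,k}$ in (\ref{eq: moment function of theta with plug in estimator}) has a bounded envelope and is continuous in $\theta$, hence $\tfrac1N\sum_m\mathbf g^m(\theta)\to_p g(\theta)$ uniformly on $\Theta$; together with the fixed weighting matrix $W_2$ this gives (c), and the consistency theorem then yields $\hat\theta\to_p\theta_0$. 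The step I expect to be the bottleneck is the one highlighted above --- uniform-in-$\theta$ consistency of the nested optimizer $\hat{\mathcal P}^{0,k}_{j,s}(\cdot,\theta)$ --- since it has to absorb the first-step plug-in error, the sampling noise in the posterior reweighting, and the possible ill-conditioning of that reweighting when a signal is nearly uninformative, which is precisely the situation ruled out by the technical Assumption~\ref{assumption: M-estimation of P}.
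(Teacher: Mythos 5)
Your proposal follows essentially the same architecture as the paper's proof in Appendix C: propagate the first-step error in $(\hat{\bm\alpha},\hat{\mathbf P})$ into the pseudo-shocks $\hat\epsilon^m_j$ via a mean-value bound using part 2 of Assumption \ref{assumption: smooth parametric persuasion strategy} (the paper's Lemma C.1), establish uniform-in-$\theta$ consistency of the nested optimizer $\hat{\mathcal P}^{0,k}_{j,s}$ and of the empirical signal marginal (Lemmas C.2--C.5), then feed these into a standard extremum-estimator consistency theorem under Assumption \ref{assumption: GMM identification with theta} (Lemmas C.6--C.7; the paper cites Theorem 5.7 of van der Vaart where you cite the equivalent Newey--McFadden result). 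The decomposition into first-step error, nested-argmax error, and ULLN for the outer moment is identical.

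The one place your argument is loose is exactly the step you flag as the bottleneck. You propose to upgrade uniform convergence of the nested criterion to uniform convergence of its maximizers via ``strict concavity of the log-sum-exp criterion in the interior.'' That does not work as stated: pointwise in $\mathbf v$ the map $\mathbf P\mapsto\log\bigl(\sum_j P_j e^{v_j}\bigr)$ has a rank-one negative semidefinite Hessian, so strict concavity of the integrated objective needs an additional spanning condition on the support of the posterior; more importantly, the maximizer can sit on the boundary of the simplex (the remark following Proposition \ref{prop: Consistency of theta} worries precisely about $\mathcal P^{0,s}_{j,k}$ local to zero), where an interior strict-concavity argument delivers no well-separation. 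This is exactly the role of Assumption \ref{assumption: M-estimation of P}: it is a well-separated-maximum condition on the population criterion, uniform in $\theta$, and it is what the paper invokes to convert $o_p(1)$-near-maximization of the empirical criterion into uniform consistency of the argmax. You cite that assumption but misdescribe it as bounding the normalizing denominator of the posterior weights away from zero; once it is read correctly and substituted for the strict-concavity claim, your argument coincides with the paper's.
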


	\begin{remark}
		The asymptotic distribution of $\hat{\theta}$ is not derived in this paper. There are two difficulties in deriving the asymptotic distribution of $\theta$. The unconditional choice probability vector under persuasion $\hat{P}_s$ is estimated using the sample of markets without persuasion. The sampling error of $\hat{P}_s$ comes from two aspects: (i) $\hat{P}_s$ is estimated from the empirical version (\ref{eq: empirical alternative optimization}) of the optimization problem (\ref{eq: alterantive optimization}); (ii) the utility shocks in (\ref{eq: empirical alternative optimization}) are constructed from the estimator $\hat{\bm{\alpha}}$. Another difficulty comes from the fact that $\mathcal{P}_{j,k}^{0,s}(\bm{x}(l))$ can be local to the boundary to the parameter space under the true persuasion strategy, i.e.  $\mathcal{P}_{j,k}^{0,s}(\bm{x}(l))\approx\frac{1}{\sqrt{n}}$ for some $(j,k,l)$. In this case, the sampling distribution of $\hat{\mathcal{P}}_{j,k}^{0,s}(\bm{x}(l))$ is hard to derive and the influence of the sampling error on $\hat{\theta}$ is hard to derive. 
	\end{remark}
	
	\subsubsection*{Joint Estimation and Two Step Estimation}
	In this section, I briefly discuss how to estimate the persuasion strategy parameter $\theta$ and preference parameters $(\bm{\alpha},\{\mathcal{P}^{0}_{j,k}(\bm{x}(l))\},G)$ jointly using moment conditions (\ref{Moment condition: P})-(\ref{Moment: Fixed point first order condition}) and (\ref{eq: moment condition for persuasion}). The objective function of joint GMM estimation is just the simple stack of $\gamma^m$ in (\ref{eq: GMM ojbective for alpha and P}) and $\mathbf{g}^m$ in (\ref{eq: GMM objectives theta}). For each $(\bm{\alpha},\{\mathcal{P}^{0}_{j,k}(\bm{x}(l))\},\theta)$ in the parameter space, we need to find the $\delta^{*m}$ for each market, and construct the pseudo sample of $\{\epsilon^m\}_{m=1}^M$. Given the pseudo sample of $\{\epsilon^m\}_{m=1}^M$, we then solve the optimization problem (\ref{eq: empirical alternative optimization}) to get $h_j^k$ in (\ref{equation: h-estimator}). Given $\bm{\epsilon}^m$ and $h_j^k$, we can evaluate the value of the joint GMM objective function at this $(\bm{\alpha},\{\mathcal{P}^{0}_{j,k}(\bm{x}(l))\},\theta)$. 
	
	The joint GMM estimation procedure introduces two extra computational burden compared with the two step estimation procedure. First, the fixed point  $\delta^{*m}$ needs to calculated at each $(\bm{\alpha},\{\mathcal{P}^{0}_{j,k}(\bm{x}(l))\}_{j,k,l}\theta)$ parameter evaluation in the joint estimation. In contrast, in the two-step estimation, we find the fixed point for each $(\bm{\alpha},\{\mathcal{P}^{0}_{j,k}(\bm{x}(l))\}_{j,k,l})$. If the dimension of $\theta$ is large, the extra parameter $\theta$ can introduce significant computational burden to the joint estimation. Second, the optimization problem (\ref{eq: empirical alternative optimization}) needs to be solved at each $(\bm{\alpha},\{\mathcal{P}^{0}_{j,k}(\bm{x}(l))\}_{j,k,l},\theta)$ in the joint estimation. In contrast, we plug the estimator $(\hat{\bm{\alpha}},\{\hat{\mathcal{P}}^{0}_{j,k}(\bm{x}(l))\}_{j,k,l})$ into (\ref{eq: moment condition for persuasion}), and the optimization problem (\ref{eq: empirical alternative optimization})  only needs to be solved for each $\theta$. Plugging in the estimator $(\hat{\bm{\alpha}},\{\hat{\mathcal{P}}^{0}_{j,k}(\bm{x}(l))\}_{j,k,l})$ reduces the dimension of the parameter space for the second step GMM estimation.
	
	Joint estimation of $(\bm{\alpha},\{\mathcal{P}^{0}_{j,k}(\bm{x}(l))\}_{j,k,l},\theta)$ also makes the inference of $(\bm{\alpha},\{\mathcal{P}^{0}_{j,k}(\bm{x}(l))\}_{j,k,l})$ difficult. The discussion under Proposition \ref{prop: Consistency of theta} reveals the difficulty of deriving the asymptotic distribution of $\hat{\theta}$. The difficulty comes from the unknown limit distribution of $\hat{\mathcal{P}}_{j,k}^{0,s}(\bm{x}(l))$ when ${\mathcal{P}}_{j,k}^{0,s}(\bm{x}(l))$ is local to zero. The same issue will happen to $(\hat{\bm{\alpha}},\{\hat{\mathcal{P}}^{0}_{j,k}(\bm{x}(l))\}_{j,k,l})$ if we estimate all moment conditions jointly.

	\section{Application: Fox News and the 2000 Presidential Election}
	In this section, I apply the rational inattention, discrete choice model, with persuasion to the effect of Fox News on the 2000 presidential election \citep{DellaVigna2007}. 
	
	Fox News started the distribution of its channel in 1996 and
	its twenty-four-hour cable program penetrated about 20\% of the towns in the United States by Nov, 2000. Fox News channel is perceived to provide political views that are right to the mainstream news channel such as ABC and CNN. In the empirical application, I treat the entry of Fox News into the local cable markets as the presence of the persuader. The DMs' prior distribution $G$ is understood as the prior belief on the presidential candidates under mainstream news channels. The goal is to estimate the preference parameters of each demographic group and the persuasion strategy used by Fox News in these markets. The estimated persuasion strategy can reveal the degree of bias in Fox News program. 
	
	\subsection{Data}
	
	The election outcome data are taken from \citet{DellaVigna2007} and the demographic data are a mixture of the original demographic data in \citet{DellaVigna2007} and the 2000 U.S. census data\footnote{ The education level variable in their data set is not correct for some towns. For example, the proportion of residents with no more than high school education and the proportion of residents with more than high school education sum to greater than 1.}. Each observation consists of a vector of presidential election vote results and a vector of demographic statistics that correspond to a town and an indicator for the presence of Fox News. The presidential election vote result includes the total votes cast, the number of votes for the Democratic Party, and the number of votes for the Republican Party. The demographic statistics include the number of people that are above 18 years old, the gender ratio, the ethnic group decomposition (African American, Hispanic, Asian, etc), and the decomposition by education level. The education level statistics are for eligible voters (18+ years old), but ethnic group statistics incorporate both the adults and children. 
	
	The original demographic data in Vigna and Kaplan is flawed. In about 15\% of the towns, the number of votes cast is more than the number of residents above 18 years old. The issue happens when the town name corresponds to multiple administrative levels. For example, there are some names used for two different townships and cities but in different counties, their match tends to get wrong. I re-match the voting data with the 2000 U.S.  census data to deal with this issue but the problem is not solved completely. There are still about 5\% of the towns that have the inconsistency of votes and adults. As mentioned in \citet{DellaVigna2007}, this may be due to flaws in the process of collecting the election data. 
	
	I follow the data selection procedure in \citet{DellaVigna2007} to discard towns: 1. without CNN news channel; 2. the number of precincts in 2000 differs from that in 1996 by 20\%; 3. the total number of votes in 2000 differs from that in 1996 by 100\%; 4. with multiple cabal systems; 5. the number of people with high school and above is more than the number of adults; 6. the number of votes is greater than the number of adults.

	Throughout the application, I assume the choice set includes $\mathcal{J}=3$ options: $\{Rep,Dem,Out\}$. All votes not cast for the two major parties candidates or adults not registered to vote are grouped into the $Out$ option. 
	
	\subsection{Market Assumptions and Justification}
	First I separate markets into two groups: with persuasion and without persuasion. If Fox News is available in the town, I assume the town is under the influence of the persuader. This assumes that the presence of Fox News influences the whole town. Since I only use observation of towns with one unique cable company, if the cable company includes Fox News, everyone in the town should have access to the channel. While some residents may not watch the channel, the contents of the news program can be spread through workplaces and places of entertainment. This also assumes that towns without Fox News cannot be influenced by persuasion. This assumption suits the historical context in 2000 where the fixed broadband subscription in the United States accounts for around 2.5\% of the population, so streaming of Fox News is not accessible to major voters in the towns without Fox News.  
	
	The key assumption on market without persuasion is the assumption \ref{assum: (DGP without Persuasion) }. The $i.i.d$ assumption on $\bm{\epsilon}^m$ assumes that there is no spatial correlation conditioned on the observed characteristics in the town. This variation in $\epsilon_m$ may come from the geographic location differences of towns and the composition of industries in towns. For example, a policy of cleaner fuel may generate different perceptions in the coal mining towns and forest zone. The independence assumption $\epsilon^m \perp D^m$ assumes the composition of demographics does not influence the prior belief.

	For markets with persuasion, assumption \ref{assum: DGP with Persuasion} requires that Fox News use the same persuasion strategy for all towns, regardless of the demographic composition. This assumption is justified because Fox News is a national program, so the perception of persuasion strategy should be similar for all towns\footnote{Note that this is not a restriction on the entry decision. In fact, Fox News can endogenously choose the town they wanted to provide channels but this is out of the scope of this paper. The model aims to estimate the persuasion strategy used by Fox News but does not model its utility to justify the persuasion and the entry. As long as the persuasion strategy is the same for all towns, the identification argument goes through whether the entry was chosen optimally or exogenous.}.  Last, the assumption that the persuader draw persuasion signal $s^{ID}\sim_{i.i.d} \tilde{F}^k$ says that the signals should be independent for all towns. This assumption is hard to justify since Fox News is a national program. However, Fox News reports on different aspects of the candidates (e.g. foreign policy, economic policy), and each town may only focus on one aspect of a candidate, which may result in an $i.i.d$ persuasion signal across towns.
	
	\subsection{The Specification}
	I assume there are no product characteristics across towns. 
	The utility is $u_{kj}^m=\alpha_{j,k} +\epsilon_{j}^m$, where the
	parameters $\alpha_{j,k}$ are the mean utility of candidate $j$ that differ across demographic group $k$. The utility for the outside option is normalized to be zero. I partition the decision makers in each town based on their education level at the time of the election: \{ High School and Lower, College Partial, College Complete\}.\footnote{A finer partition of the demographics is desired, but the U.S. census data do not provide the joint distribution of education with other demographic characteristics.} The segment of education level can reflect the differences in income levels and the political spectrum. The estimators and the 95\% confidence intervals are reported in table \ref{Baseline Model: Estimated Parameters}.

	\begin{table}[h]
		\centering
		\caption{ Estimated Mean Preference Parameters}
		\label{Baseline Model: Estimated Parameters}
		\begin{tabular}{lcccc}
			Choice $j$                                                                              & \multicolumn{3}{c}{$\alpha$}                                                                                                                                                                                       \\ \hline \hline
			& High School                                                          & College Partial                                                      & College Complete                                                     \\
			\multicolumn{1}{c}{Rep}  & \begin{tabular}[c]{@{}c@{}}-0.1318\\ {[}-0.1540,-0.1050{]}\end{tabular} & \begin{tabular}[c]{@{}c@{}}0.1369\\ {[}0.0816,0.1848{]}\end{tabular} & \begin{tabular}[c]{@{}c@{}}0.0306\\ {[}0.0079,0.0538{]}\end{tabular} \\ \hdashline
			\multicolumn{1}{c}{Dem} &  \begin{tabular}[c]{@{}c@{}}-0.0859\\ {[}-0.0983,-0.0707{]}\end{tabular} & \begin{tabular}[c]{@{}c@{}}0.1260\\ {[}0.0693,0.1725{]}\end{tabular} & \begin{tabular}[c]{@{}c@{}}0.0702\\ {[}0.0529,0.0857{]}\end{tabular}
		\end{tabular}   
	\end{table}
	The estimation result shows several interesting observations. First, the group with partial college degree has a slightly lower preference for the Democratic Party than the Republican Party. The partial college group includes eligible voters who earn degrees from community college or technical colleges. So we see that both highly educated group and the least educated group prefer the Democratic Party\footnote{Note that the confidence interval of $\alpha_{Dem,k}$ does not intersect with $\alpha_{Rep,k}$ for $k\in \{High\,\, school,\,\, College\,\,Complete\}$}, but the middle class seems to be indifferent between these two parties. Second, the College Partial group has a higher willingness to vote. However, this does not imply the College Partial group vote more to the Democratic Party than those who complete college education. Table \ref{Unconditional Choice Probability: With and Without Fox News} reports the estimated unconditional choice probability for each demographic group. 
	
	\begin{table}[h]
		\centering
		\caption{Unconditional Choice Probability: With and Without Fox News }
		\label{Unconditional Choice Probability: With and Without Fox News}
		\begin{tabular}{l|cl|lc|cl}
			& \multicolumn{2}{c|}{High School}      & \multicolumn{2}{l|}{College Partial}   & \multicolumn{2}{c}{College Complete} \\ \hline
			& \multicolumn{1}{l}{No Fox} & With Fox & No Fox & \multicolumn{1}{l|}{With Fox} & \multicolumn{1}{l}{No Fox} & With Fox \\
			Rep & 0.1998                     & 0.1610   & 0.5082 & 0.5488                        & 0.3031                     & 0.3415   \\
			Dem & 0.1891                     & 0.2086   & 0.2925 & 0.2498                        & 0.3974                     & 0.3634  
		\end{tabular}
	\end{table}
	The result in table \ref{Unconditional Choice Probability: With and Without Fox News} cannot be generated by a random utility model with Logit shock. By random utility model with Logit shock, we would predict that the College Partial group vote more for the Democratic Party than College Complete group would do, because $\alpha_{Dem,College\,Partial}>\alpha_{Dem,College\,Complete}$. The estimated density of the prior distribution $G$ is given in Figure  \ref{fig:estimated-distribution-of-epsilon}.  
	
	\begin{figure}[H]
		\centering
		\includegraphics[width=0.7\linewidth]{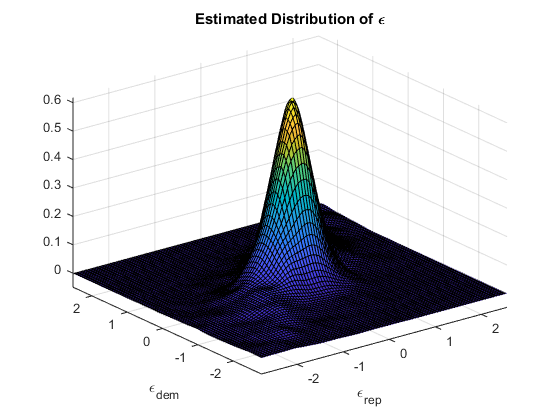}
		\caption{}
		\label{fig:estimated-distribution-of_epsilon}
	\end{figure}

	For the identification of persuasion, I use two different parametric persuasion strategies that differ across the three demographic groups. This can be true if the news programs are designed to target different demographic groups. I restrict the persuader to have only two signals to send. A two-signal persuasion strategy is easy to interpret. A `+' signal means `1 is good' if it only conveys information on ${\epsilon}_1$, and it means `1 is better than 2' if it compares the ${\epsilon}_{1}$ with $\epsilon_{2}$. A `-' signal means the contrary. \footnote{
		Two-signal persuasion strategy is also justified by \citet{Gitmez2018}, where the politician in their model has full control of the news media and voters are heterogeneous in their belief.}
	
	The persuasion strategy of the high school education group is given by 
	\[
	Pr_{\tilde{F}^{HS}}(S^{ID}=-|\bm{\epsilon})=\begin{cases}
	1 \quad &if \quad \epsilon_{rep}< \epsilon_{dem}\\
	\theta_{hs}^{(\epsilon_{rep}-\epsilon_{dem})^2} \quad &if \quad \epsilon_{rep}\ge \epsilon_{dem},
	\end{cases}
	\]
	and the persuasion strategy of the college partial and college complete group is given by
	\[
	Pr_{\tilde{F}^{College}}(S^{ID}=-|\bm{\epsilon})=\begin{cases}
	0 \quad &if \quad \epsilon_{rep}> \epsilon_{dem}\\
	1-\theta_{hs}^{(\epsilon_{rep}-\epsilon_{dem})^2} \quad &if \quad \epsilon_{rep}\le \epsilon_{dem}.
	\end{cases}
	\]

	I use the same parametric family for demographic group with education higher than high school but treat the least educated group separately. This is because table \ref{Unconditional Choice Probability: With and Without Fox News} shows that only the least educated group has decreased unconditional choice probability for the Republican Party and increased unconditional choice probability for the Democratic Party after Fox News entered into their towns.
	
	The `-' signal in the persuasion strategy for the high school group can either mean when the Republican party is indeed worse than the Democratic party, or it can mean with a small probability that the Republican party is better.
	
	The persuasion strategy for the eligible voters with at least a partial college education has a better interpretation. The positive signal $S^{ID}=+$ can be read as `the Republican is better than the Democratic'. A positive signal is always sent when the Republican is indeed better, i.e. $\epsilon_{rep}>\epsilon_{dem}$, and a fake positive signal can also be sent when $\epsilon_{rep}<\epsilon_{dem}$, but the probability decays as the difference becomes larger in absolute value.
	
	The estimated persuasion strategy parameters are reported in table \ref{table: Estimated Persuasion Strategy}, and I plot the probability of the "+" signal for the two persuasion strategies in figure \ref{fig:persuasion-function-college-par-and-com}. We should note that the persuasion strategy parameter $\theta$ is very close to 1 and the entropy of the marginal distribution of the signal is close to zero. The close-to-zero entropy indicates that the signal sent by Fox News does not carry much information. However, the relative scale of entropy is still significantly large compared with the utility parameter $\alpha_{jk}$ for all three groups.

	\begin{figure}[h]
		\centering
		\includegraphics[width=0.9\linewidth]{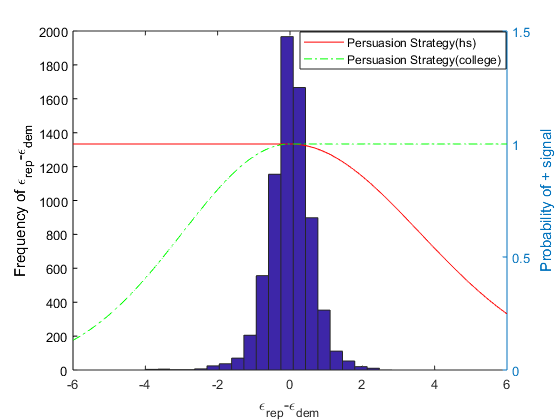}
		\caption{Estimated Probability of Sending "+" Signal and the histogram of $\epsilon_{rep}-\epsilon_{dem}$}
		\label{fig:persuasion-function-college-par-and-com}
	\end{figure}
	
	\begin{table}[H]
		\centering
		\caption{Persuasion Strategy}
		\label{table: Estimated Persuasion Strategy}
		\begin{tabular}{lcc}
			& High School       & College Partial and Complete       \\\hline
			Estimator $\hat{\theta}$ & 0.9620              & 0.9452        \\
			Entropy & 0.0553  &0.0378   
		\end{tabular}
		\begin{tablenotes}
			\small
			\item \textit{Note: The entropy numbers are calculated based on the marginal distribution of signal. }
		\end{tablenotes}	
	\end{table}
	
	The overall fit of the persuasion model can be seen from the difference between the data unconditional choice probability and the unconditional choice probability predicted by the persuasion strategy. Table \ref{table: Model Fit} shows that the model predicts the unconditional choice probability quite well except for the  high school group's unconditional choice probability of choosing the Republican party. 
	\begin{table}[H]
		\centering
		\caption{Unconditional Choice Probability in Towns with Fox News: Model vs Data } 
		\label{table: Model Fit}
		\begin{tabular}{l|cl|lc|cl}
			& \multicolumn{2}{c|}{High School}      & \multicolumn{2}{l|}{College Partial}   & \multicolumn{2}{c}{College Complete} \\ \hline
			& \multicolumn{1}{l} {} Model & Data  & \multicolumn{1}{l}{Model} &Data & \multicolumn{1}{l}{Model} & Data \\
			Rep & \textcolor{red}{0.1853}                    & \textcolor{red}{0.1610}   & 0.5427 & 0.5488                        & 0.3335                     & 0.3415   \\
			Dem & 0.2090                     & 0.2086   & 0.2614 & 0.2498                        & 0.3708                     & 0.3634  
		\end{tabular}
	\end{table}

	\subsection{Welfare Analysis }
	Costly information acquisition can lead the decision maker to choose the second-best choice with some probability. If information is free (i.e.$\lambda=0$, or decision maker can perfectly observe $(\epsilon_{rep},\epsilon_{dem})$), the decision maker should be able to choose the one that maximizes his utility. This is defined as the first-best outcome. Persuasion signal has two influences on decision makers: persuasion signal provides extra information that reduce the entropy of belief, but it also intentionally leads some decision makers to make wrong decisions. In this section, I analyze the welfare by asking what is the percentage of voters that cast votes consistent with their first best choice before and after Fox News enters into their town. Formally, the first best choice $j_k^{m,fb}$ in a town $m$ is defined as 
	\[j_k^{m,fb}=\arg\max_{j\in \mathcal{J}} \alpha_{j,k}+\epsilon^m_j\]  
	and $\mathcal{P}_{j=j^{fb}}^{k}(\bm{\alpha}+\bm{\epsilon}^m)$ is the proportion of voters that make the correct choice in the rational inattention model without persuasion in town $m$, and $\sum_s \tilde{F}(s|\bm{\epsilon}))\mathcal{P}_{j=j^{fb},s}^{k}(\bm{\alpha}+\bm{\epsilon}^m)$ is the proportion of voters that make the correct choice with Fox News Persuasion. Since we have the estimated prior distribution $G(\epsilon_{rep},\epsilon_{dem})$, we get the distribution of $\mathcal{P}_{j=j^{fb}}^k(\bm{\alpha}+\bm{\epsilon}^m)$ and $\sum_s \tilde{F}(s|\bm{\epsilon}))\mathcal{P}_{j=j^{fb},s}^{k}(\bm{\alpha}+\bm{\epsilon}^m)$. The estimated distribution (across towns) can be seen in figure \ref{fig: Welfare dist}. The patterns are quite different for the three groups. For voters with high school education, persuasion does not really help them to make better decisions overall. For voters with a partial college education, persuasion generates higher dispersion in the distribution of voters that vote for their first best choice. It should be noted that even if the persuasion strategy is the same for voters with a partial and full college education, the persuasion strategy tightens the distribution of the first best choice for voters who complete a college education. 
	
	
	\begin{figure}[H]
		\centering
		\caption{ Distribution of percentage of voters that achieve their first best choice}
		\includegraphics[width=1\linewidth]{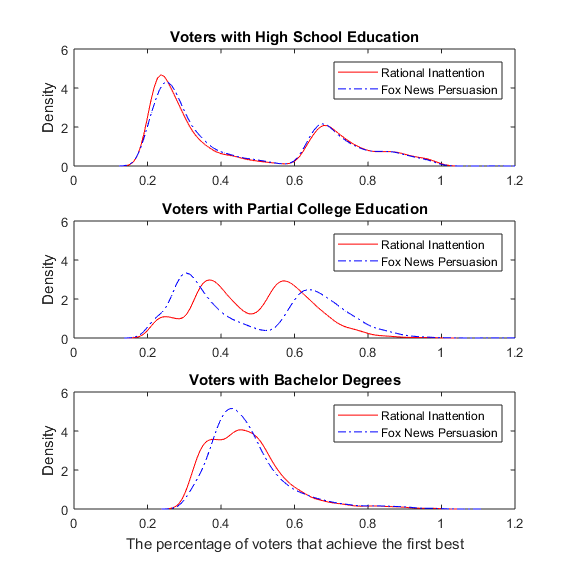}
		
		\label{fig: Welfare dist}
	\end{figure}
	
	%
	%

	\section{Conclusion}
	In this paper, I study the identification of the rational inattention discrete choice model with Bayesian Persuasion. I derive the conditional moment conditions that identify the mean utility of each product and prior distribution. I also show the identification of a parametric persuasion strategy when the persuader plays a sequential game with decision makers in the model. In the empirical application, I studied the effect of Fox News in persuading voters to vote for the Republican Party. I also analyze the welfare change for voters before and after the influence of Fox News.  
	
	For future research, we should derive a method to unify the supply-side model with the identified persuasion strategy. If the supply side, which is Fox News in the context, is rational when it chooses the persuasion strategy, the optimal strategy should reveal constraints on its utility parameters. Such parameters are crucial when we conduct a counterfactual analysis on the supply side. For instance, in the IO context, the preference for persuasion strategy would allow us to model the non-price competition.

	\bibliography{rational_inattention_citation}
	\bibliographystyle{chicago}
	\pagebreak
	
	\pagebreak
	\appendix
	\section{Appendix 1: Data Compression Interpretation of Entropy Cost}\label{section: Data Compression and Entropy Cost}
	
	The entropy of a discrete random variable is closely related to the expected number of binary questions needed to be asked to determine the realization. Consider the following example: 
	\begin{itemize}
		\item $X$ is supported on 4 points: $X_1=(H,H)$, $X_2=(H,L)$, $X_3=(L,H)$ and $X_4=(L,L)$.
		\item The probability of each realization is $P_1=P_4=1/3$ and $P_2=P_3=1/6$.
		\item Consider two ways of asking questions:
		\begin{enumerate}
			\item Q1: The state is: (A) First component is H; (B) the First component is L. Q2: (A) Second component is H; (B) the Second component is L. 
			\item Q1: The state is: (A) Both high; (B) Both low; (C) Neither. Q2: The state is: (A) (H,L); (B) (L,L).		
		\end{enumerate}
	\end{itemize}
	Using the first approach, we need to ask two binary questions for sure to pin down the realization. Using the second approach we are expected to ask one 3-adic question for sure, and with $1/3$ probability we need another binary question. If we consider that a 3-adic question is equivalent to $\log_2 3$ binary questions \footnote{One way to understand this conversion is the following. Suppose we have N binary questions that can cover all possible states of the world, the cardinality of states of the world is approximately $2^N$. In the same case, suppose we need $M$ 3-adic question to cover all states of the world. By setting $2^N\approx 3^M$, we find the $N=M*log_2 3$. The more rigorous conversion argument can be established using large scale data compression theory. See Cover (2006), Chapter 5.}, the expected number of binary question we need to ask is $\log_2 3+ 1/3= -2/3\times\log_2 (1/3)+ 1/3\times\log_2 (1/6) $ which is the entropy number. In many examples, the entropy number cannot be coded with an integer number of binary questions, but nonetheless the entropy number is a good approximation for the complexity of the random variable.
	
	Now, consider the entropy cost function we defined in (\ref{mutual information}). The entropy $H(G)$ is interpreted as the number of binary questions of the prior distribution. Now, given the signal $s$ the DM acquire from the world, the number of binary questions is reduced to $H(F(\cdot|s))$. Since ex ante, the DM do not know the realization of $s$, the expected number of questions remained is $E_s[H(F(\cdot|s))]$. Therefore, the difference of entropy $H(G)-E_s[H(F(\cdot|s))]$ is interpreted as the expected number of binary questions that is answered by the signal $s$, and the unit cost of information $\lambda=0$ is interpreted as the market price for asking a binary question.
	
	The interpretation still works when we consider $s$ being discrete but the state $v$ is continuous. Let's consider an example where $X\sim U[-1,1]$ and $Y=1$ when $X\geq 0$ and $Y=0$ when $X<0$. Given $X$ is negative with probability $0.5$, $Y$ answers one binary question whether $X$ is negative or not. Direct calculation shows that $H(X)=1$ and $H(X|Y)=0$, so the mutual information $I(X;Y)=H(X)-H(X|Y)=1$. 
	
	When the pair $(\mathbf{s},\mathbf{v})$ is continuously distributed, the data compression argument needs to be modified slightly. The approach is to take a quantization of the random variable. The quantization of $\mathbf{v}$ is to slice the support of $\mathbf{v}$ with cubes of side length $\Delta$. As the quantization length $\Delta\rightarrow 0$, the entropy of the discrete random vector, denoted as $V(\delta)$, will converge to the differential entropy of $\mathbf{v}$ in the following sense:
	\[
	H(V(\delta))+\log \Delta^J\rightarrow H(G(\mathbf{v}))\,\,\,\, as\,\,\,\, \Delta\rightarrow 0
	\]
	where $J$ is the dimension of $\mathbf{v}$. We can perform the same quantization for the signal variable $s$. When we calculate the entropy difference $H(G)-E_s[H(F(\cdot|s))]$, which is called the mutual information, the effect of quantization will disappear. See Cover and Thomas(2006), Chapter 8 for discussion of quantization. Then we can use the interpretation for the data compression on the quantized version of $(\mathbf{v},\mathbf{s})$.

	\section{Appendix 2: Proofs of Section 3}
	\subsection{The Contraction Mapping Lemma \ref{lem: Contraction Mapping Lemma}}
	
	\begin{proof}
		The proof is a minor adaption of \citet{Berry1995}. To show the operator $T$ is a contraction mapping, it suffice to show that the conditions of theorem 1 in BLP holds. Let $T_j: R^{J-1}\rightarrow R$ denote the $j-th$ component of the mapping $T:R^{J-1}\rightarrow R^{J-1}$ defined in (\ref{eq: Contration Mapping}). I use the following notation for the proof:
		\begin{equation}\label{eq: append, logit form}
		\mathcal{P}_j^k(\bm{\delta},\mathbf{X},\mathcal{\mathbf{P}}^{0,k},\nu_k,\alpha)=\frac{ {\mathcal{P}_j^{0,k}(\mathbf{X})e^{\delta_j+u_2(X^m,\nu_k,\alpha)}}}{\sum_{l\in\mathcal{J}} {\mathcal{P}_l^{0,k}(\mathbf{X})e^{\delta_l+u_2(X^m,\nu_k,\alpha)}}}.
		\end{equation}
		First note that:
		\begin{align*}
		\frac{\partial T_j}{\partial \delta_j}&=1-\frac{1}{ms^*_j}\times \sum_k \mathcal{P}_j^k(\bm{\delta},\mathbf{X},\mathcal{\mathbf{P}}^{0,k},\nu_k,\alpha)\times(1-\mathcal{P}_j^k(\bm{\delta},\mathbf{X},\mathcal{\mathbf{P}}^{0,k},\nu_k,\alpha))d_k\\
		&\ge 1-\frac{1}{ms^*_j}\times\sum_k \mathcal{P}_j^k(\bm{\delta},\mathbf{X},\mathcal{\mathbf{P}}^{0,k},\nu_k,\alpha)d_k\ge 0\\
		\frac{\partial T_j}{\partial \delta_l}&=\frac{1}{ms^*_j}\times \sum_k \mathcal{P}_j^k(\bm{\delta},\mathbf{X},\mathcal{\mathbf{P}}^{0,k},\nu_k,\alpha)
		\mathcal{P}_l^k(\delta_l,\mathbf{X},\mathcal{\mathbf{P}}^{0,k},\nu_k,\alpha)d_k\ge  0
		\end{align*}
		and for any $j=1,...,J-1$:
		\begin{align*}
		\sum_{l<J} \frac{\partial{T_j}}{\partial \delta_l}&=1-\frac{1}{ms^*_j}\times \sum_k\left[ \mathcal{P}_j^k(\bm{\delta},\mathbf{X},\mathcal{\mathbf{P}}^{0,k},\nu_k,\alpha)\times(1-\sum_{l=1}^{J-1}\mathcal{P}_l^k(\delta_l,\mathbf{X},\mathcal{\mathbf{P}}^{0,k},\nu_k,\alpha))d_k\right]\\
		&= 1-\frac{1}{ms^*_j}\times \sum_k\left[{\mathcal{P}_j^k(\bm{\delta},\mathbf{X},\mathcal{\mathbf{P}}^{0,k},\nu_k,\alpha)\times \mathcal{P}_J^k(\delta_J,\mathbf{X},\mathcal{\mathbf{P}}^{0,k},\nu_k,\alpha)}d_k\right],
		\end{align*}
		where $ms_j^*\equiv ms^*_j(\mathbf{X}^m,\bm{\delta}^m,\alpha,D^m,\{\mathcal{P}_j^{0,k}(\mathbf{X})\}_{j,k})$. For the outside option $J$, $\delta_J=0$ holds because we assume the choice utility of the outside option is zero by normalization. By assumption, the unconditional choice probability of the outside option is non-zero for all $\mathbf{X}$ and all $k$, i.e. $\mathcal{P}_l^{0,k}(\mathbf{X})>0$. The Logit form (\ref{eq: append, logit form}) implies the conditional choice probability $\mathcal{P}_J^k(\delta_J,\mathbf{X},\mathcal{\mathbf{P}}^{0,k},\nu_k,\alpha)>0$ must hold, which further implies:
		\[
		\sum_{l=1}^{J-1} \frac{\partial{T_j}}{\partial \delta_l}<1.
		\]
		This verifies the condition 1 of the contraction mapping theorem in Appendix 1 of BLP.
		
		To verify condition 2 of the contraction mapping theorem in BLP, I rewrite equation (\ref{eq: Contration Mapping}) by plug in the expression of $ms^*_j$ into the mapping $T$:
		\[
		[T(\bm{\delta})]_j=\log(ms_j^m)-\log\left(\sum_k \frac{ {\mathcal{P}_j^{0,k}e^{u_2(X_j^m,\nu_k,\alpha)}}}{ \mathcal{P}_J^{0,k}+\sum_{l=1}^{J-1} {\mathcal{P}_l^{0,k}e^{\delta_l+u_2(x_l,\nu_k,\alpha)}}}d_k\right)
		\]
		The function $T$ is bounded from below by $\log(ms_j^m)-\log\left(\sum_k \frac{ {\mathcal{P}_j^{0,k}e^{u_2(X_j^m,\nu_k,\alpha)}}}{ \mathcal{P}_J^{0,k}}d_k\right)$ when $\delta_j\rightarrow -\infty$.
		
		For condition 3 of the contraction mapping theorem in BLP, for any $j$, I set $\bar{\delta}_j$:
		\[
		\bar{\delta_j}=\arg\min_{\delta_j}\bigg[ ms^m_J-\sum_k\frac{\mathcal{P}_J^{0,k}(\mathbf{X})}{\mathcal{P}_J^{0,k}(\mathbf{X})+\mathcal{P}_j^{0,k}(\mathbf{X})e^{\delta_j+u_2(X_j^m,\nu_k,\alpha)}} d_k \bigg]^2
		\]
		which is the solution of $\delta_j$ to match the market share of the outside option when $\delta_k=-\infty$ for all $k\ne j$. 
	\end{proof}	
	
	\begin{remark}
		The extra condition that the outside option is chosen with positive unconditional choice probability is not required in the proof of \citet{Berry1995}, because when the shock is supported on unbounded space, the outside option will always have a positive choice probability.  The last step is also slightly different from \citet{berry1994} where the unconditional choice probability $\mathcal{P}_j^{0,k}$ appears in the denominator. 
	\end{remark}
	
	\subsection{Proof of Proposition \ref{prop: moment conditions from RIDC model}}
	\begin{proof}
		Since all three moment conditions are conditioned on $\mathbf{X}^m$, and by assumption \ref{assum: (DGP without Persuasion) }, the product characteristics $\mathbf{X}^m$ is independent of the random utility shocks $\bm{\epsilon}^m$ and demographic distribution vector $D^m$, I prove the proposition conditioned on the value of $\mathbf{X}^m$ and drop $\mathbf{X}^m$ moment condition expressions whenever there is no confusion.

		\subsubsection*{Constraint on $\mathcal{P}_{j}^{0,k}$}
		For each market $m$, we observe only the market share vector 
		\[\mathbf{ms}^m=(ms^m_1,...ms^m_J)^{\prime}\] and the demographic distribution 
		\[D^m=(d^{m}_1,...d^{m}_K)\] where $d^m_k$ is the share of people in demographic group $k$ in market $m$. Then in market $m$, the observation $\mathbf{ms}^m$ satisfies:
		\[
		ms^m_j=\sum_{k=1}^K \mathcal{P}_j^k(\bm{\epsilon}^m)d_k^m \quad \forall j=1,...,J.
		\]

		If we take expectation with respect to the $G$ distribution and the demographic distribution on both sides of the above equation, we have
		\begin{equation*}
		E_G[ms^m_j- (\mathcal{P}_j^1(\bm{\epsilon}^m),...\mathcal{P}_j^K(\bm{\epsilon}^m)) (d^m_1,...d^m_K)^\prime| D^m  ]=0
		\end{equation*}
		By assumption \ref{assum: (DGP without Persuasion) }, $(d^m_1,...d^m_K)\perp (\bm{\epsilon}^m,\mathbf{X}^m)$, we have \[E_G[\mathcal{P}_j^k(\bm{\epsilon}^m)d^m_k|(d^m_1,...d^m_K)]=d^m_k E_G[\mathcal{P}_j^k(\bm{\epsilon}^m)]=d^m_k E_G[\mathcal{P}_j^{0,k}].\] 
		Use the linearity of expectation we can rewrite the above equation as:
		\begin{equation*}
		E[ms^m_j- (\mathcal{P}_j^{0,1},...\mathcal{P}_j^{0,K}) (d^m_1,...d^m_K)^\prime |D^m]=0.
		\end{equation*}
		This is the moment condition (\ref{Moment condition: P}). 
		
		\subsubsection*{Independent $\bm{\epsilon}$ constraint}
		Lemma (\ref{eq: Contration Mapping}) establishes $\bm{\delta}^m$ as a function of $(\alpha,\beta,\mathcal{P}_{j}^{0,k})$. So we can write the $\bm{\epsilon}$ as the difference of $\delta$ and $u_1$. 
		The moment condition (\ref{moment condition: delta}) then comes directly from the assumption that $\bm{\epsilon}^m\perp D^m$ in assumption (\ref{assum: (DGP without Persuasion) }).
		
		\subsubsection*{Optimality constraint}
		Lastly, I derive the condition that is implied by the fact that $\mathcal{P}_j^{0,k}$ solves the optimization problem (\ref{eq: alterantive optimization}). Since $\mathcal{P}_j^{0,k}$ uniformly bounded away from zero and one, so the first order condition of (\ref{eq: alterantive optimization}) is
		\begin{equation*}
		\int_{\bm{\epsilon}} \frac{e^{\delta^m_j+u_2(x_j^m,\nu_k,\alpha)}}{\sum_{l=1}^J \mathcal{P}_l^{0,k} e^{\delta^m_l+u_2(x_l^m,\nu_k,\alpha)} } dG(\bm{\epsilon})=1.
		\end{equation*}
		Note that the optimization (\ref{eq: alterantive optimization}) is a convex optimization so the first order condition is sufficient to characterize the solution. So the above first order condition can be transformed into the condition:
		\begin{equation*} 
		E \bigg[\frac{ {e^{\delta_j+u_2(x_j^m,\nu_k,\alpha)}}}{\sum_{l\in\mathcal{J}} {\mathcal{P}_l^{0,k}e^{\delta_l+u_2(x_l^m,\nu_k,\alpha)}}}-1\bigg]=0,
		\end{equation*} 
		which is the moment condition (\ref{Moment: Fixed point first order condition}). 

	\end{proof}

	\section{ Proofs of Proposition \ref{prop: Consistency of theta}}
	\subsubsection*{Some Notations}
	Fix a $\theta$ and a persuasion strategy $\tilde{F}(s^{ID},\bm{\epsilon};\theta)$. Recall that I use $\hat{\mathcal{P}}_{j,s}^{0,k}(\bm{x}(l);\theta)$ to denote the estimated unconditional choice probability under persuasion signal $s$ solved from (\ref{eq: empirical alternative optimization}) and use $\hat{\mathbf{P}}_s(\theta)$ to denote the vector of all $j,k,l,s$. I use $\tilde{\mathcal{P}}_{j,s}^{0,k}(\bm{x}(l);\theta)$ to denote the true unconditional choice probability under persuasion solved from (\ref{eq: alterantive optimization}), and use $\tilde{\mathbf{P}}_s(\theta)$ to denote the vector of  all $j,k,l,s$. I use $\mathbf{P}^{0}$ to denote the true unconditioned choice probabilities without persuasion that corresponds to the moment condition (\ref{Moment condition: P}), and use $\hat{\mathbf{P}}^0$ to denote its estimator. I use $\mathbb{G}$ to denote the empirical distribution of $\hat{\epsilon}$ and use $G$ to denote the true distribution of $\epsilon$. I use $B_r(\cdot)$ to denote a neighborhood of radius $r$ near ($\cdot$).

	\subsection{Some Lemmas}
	\begin{assumption} \label{assumption: M-estimation of P}
		Fixing the index $k,l,s$, let  \[M(\{P_j\}_{j=1}^J,\theta)=\int_{\bm{\epsilon}} \sum_{j=1}^J P_j e^{\alpha_j^k(\bm{x}(l))+\epsilon_j} \tilde{F}^k(s|\bm{\epsilon};\theta)G(\bm{\epsilon}).\] 
		The following condition hold: $\forall \theta\in \Theta$, $\forall \kappa>0$, there exists some $\zeta>0$ such that 
		\[
		\sup_{d\left((P_j)_{j=1}^J,(\tilde{\mathcal{P}}_{j,s}^{0,k}(\bm{x}(l);\theta))_{j=1}^J\right)>\kappa}M(\{\tilde{\mathcal{P}}_{j,s}^{0,k}(\bm{x}(l);\theta)\}_{j=1}^J,\theta)-M(\{P_j\}_{j=1}^J,\theta)>\zeta.
		\]
	\end{assumption}

	\begin{lem}
		Fixing the index $k,l,s$. Let
		\[M_n(\{P_j\}_{j=1}^J,\theta)=\frac{1}{M(\bm{x}(l))}\sum_{m=1}^{M(\bm{x}(l))} \sum_{j=1}^J P_{j} e^{\alpha_{j,0}^k(\bm{x}(l))+\epsilon^m_j} \tilde{F}^k(s|\bm{\epsilon}^m;\theta)\mathbbm{1}(\mathbf{X}^m=\bm{x}(l)),\]
		where $M(\bm{x}(l))=\sum_{m=1}^M \mathbbm{1}(\mathbf{X}^m=\bm{x}(l))$. Suppose assumptions in Proposition \ref{prop: Consistency of theta} hold, then 
		\[\inf_{\theta\in\Theta}\bigg[M_n(\{\hat{\mathcal{P}}_{j,s}^{0,k}(\bm{x}(l);\theta)\}_{j=1}^J,\theta)-\sup_{(P_{j})_{j=1}^J\in \Delta^J} M_n(\{P_{j}\}_{j=1}^J,\theta)\bigg]=-o_p(1),\]
		where $\Delta^{J-1}$ is the $J$ dimensional probability simplex. 
	\end{lem}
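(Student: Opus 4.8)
The plan is to read the lemma as an extremum-estimator inequality. By construction, $\{\hat{\mathcal P}^{0,k}_{j,s}(\bm x(l);\theta)\}_{j=1}^{J}$ is the exact maximizer over the simplex $\Delta^J$ of the objective in (\ref{eq: empirical alternative optimization}); dividing that objective by its positive, $P$-free normalizing constant changes neither the maximizer nor any value gap, and produces an objective — call it $\hat M_n(P,\theta)$ — of the same functional form as $M_n(P,\theta)$ but built from the first-stage estimates $(\hat{\bm\alpha},\hat{\mathbf P}^0)$, hence from the pseudo-shocks $\hat{\bm\epsilon}^m=\bm\delta^{*m}(\bm{ms}^m,\mathbf X^m,D^m;\hat{\bm\alpha},\hat{\mathbf P}^0)$ of (\ref{eq: estimated shocks}) and $\hat v^m_j=\hat\epsilon^m_j+\hat\alpha^k_j$, rather than from $(\bm\alpha_0,\bm\epsilon^m)$. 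Because $\hat{\mathcal P}(\theta)\in\Delta^J$, the bracketed quantity is $\le 0$ for every $\theta$, so the whole content of the lemma is the matching lower bound, which will follow once $\hat M_n$ and $M_n$ are shown to be uniformly close. (This lemma is the ``approximate maximizer'' half of the uniform consistency of $\hat{\mathcal P}(\cdot\,;\theta)$, to be combined in the next step with a uniform law of large numbers $M_n\to M$ and the well-separated-maximum condition of Assumption \ref{assumption: M-estimation of P}.)

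The crux is to show
\[
\varepsilon_n:=\sup_{\theta\in\Theta}\ \sup_{P\in\Delta^J}\ \bigl|\hat M_n(P,\theta)-M_n(P,\theta)\bigr|=o_p(1).
\]
I would establish this from three ingredients. First, first-stage consistency $(\hat{\bm\alpha},\hat{\mathbf P}^0)\to_p(\bm\alpha_0,\mathbf P^0)$, which follows from the $\sqrt M$-rate of Lemma \ref{lemma: first stage normality}, so that eventually the estimates lie in a compact neighborhood of the truth on which the contraction of Lemma \ref{lem: Contraction Mapping Lemma} is well posed. Second, a mean-value bound through the fixed point: Assumption \ref{assumption: smooth parametric persuasion strategy}(2) bounds $\partial\bm\delta^{*m}/\partial(\bm\alpha,\mathbf P^0)$ by a constant \emph{uniformly over all realizations} of $(\bm{ms}^m,D^m,\mathbf X^m)$, so that $\max_m\|\hat{\bm\epsilon}^m-\bm\epsilon^m\|_\infty\le C\|(\hat{\bm\alpha},\hat{\mathbf P}^0)-(\bm\alpha_0,\mathbf P^0)\|=o_p(1)$. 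Third, a term-by-term Lipschitz estimate: splitting the $m$-th summand of $\hat M_n-M_n$ via $ab-a'b'=(a-a')b+a'(b-b')$ with $a$ the $\log$-sum term and $b=\tilde F^k(s|\hat{\bm\epsilon}^m;\theta)$ (primes for the true-parameter versions), one has $|b|\le 1$, $|b-b'|\le C\|\hat{\bm\epsilon}^m-\bm\epsilon^m\|$ by Assumption \ref{assumption: smooth parametric persuasion strategy}(1), $|a-a'|\le\max_j|\hat v^m_j-v^m_j|\le\max_j|\hat\epsilon^m_j-\epsilon^m_j|+\|\hat{\bm\alpha}-\bm\alpha_0\|_\infty$ because $P\mapsto\log\sum_jP_je^{z_j}$ is $1$-Lipschitz in $z$ for $\|\cdot\|_\infty$ (its gradient lies in $\Delta^J$), and $|a'|\le\max_j|v^m_j|$. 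All these bounds are \emph{free of $\theta$}, so the $\sup_\theta$ is costless; averaging over $m$, using the second ingredient, and noting that $\tfrac1{M(\bm x(l))}\sum_m\max_j|v^m_j|\,\mathbbm 1(\mathbf X^m=\bm x(l))=O_p(1)$ by the law of large numbers (a finite first moment of $\bm\epsilon$ suffices, and the application has bounded support) yields $\varepsilon_n=o_p(1)$.

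The rest is bookkeeping: for every $\theta$ and $P\in\Delta^J$,
\[
M_n(\hat{\mathcal P}(\theta),\theta)\ \ge\ \hat M_n(\hat{\mathcal P}(\theta),\theta)-\varepsilon_n\ \ge\ \hat M_n(P,\theta)-\varepsilon_n\ \ge\ M_n(P,\theta)-2\varepsilon_n,
\]
where the middle inequality is the exact maximization defining $\hat{\mathcal P}(\theta)$; taking $\sup_P$ and then $\inf_\theta$ gives $\inf_\theta\bigl[M_n(\hat{\mathcal P}(\theta),\theta)-\sup_{P\in\Delta^J}M_n(P,\theta)\bigr]\ge-2\varepsilon_n$, which together with the trivial upper bound $0$ is exactly the claim.

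The main obstacle is the combination of the second and third ingredients: propagating the first-stage sampling error through the \emph{implicitly defined} market-level fixed point $\bm\delta^{*m}$ simultaneously across all (random) markets. This is precisely what the uniform-in-data derivative bound of Assumption \ref{assumption: smooth parametric persuasion strategy}(2) is designed to supply, so the argument reduces to a careful chaining of Lipschitz bounds rather than anything deep; the only place a moment (or support) condition is invoked is the factor $|a'|\le\max_j|v^m_j|$ after averaging.
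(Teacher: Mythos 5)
Your proposal is correct and follows essentially the same route as the paper: both propagate the first-stage error $(\hat{\bm\alpha},\hat{\mathbf P}^0)-(\bm\alpha_0,\mathbf P^0)$ through the fixed point $\bm\delta^{*m}$ and through $\tilde F^k(s|\cdot\,;\theta)$ using the uniform derivative bounds of Assumption \ref{assumption: smooth parametric persuasion strategy}, obtain a $\theta$-free $o_p(1)$ bound on the gap between the feasible and infeasible objectives, and close with the standard extremum-estimator inequality. The only differences are organizational and slightly in your favor --- your uniform sup-over-$P$ bound plus the three-step sandwich also delivers the passage from $\hat M_n(\hat{\mathcal P}(\theta),\theta)$ back to $M_n(\hat{\mathcal P}(\theta),\theta)$, which the paper's mean-value expansion at the single point $P^*(\theta)$ leaves implicit --- but note that the lemma's $M_n$ is linear in $P$ (no logarithm, despite the form of the optimization it references), so your $1$-Lipschitz log-sum-exp bound for $|a-a'|$ should be replaced by $|e^{z}-e^{z'}|\le e^{\max(z,z')}|z-z'|$ together with the same integrability of $e^{\epsilon_j}$ that the paper invokes implicitly in its law-of-large-numbers step.
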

	\begin{remark}
		The $M_n$ differs from the objective function of (\ref{eq: empirical alternative optimization}) because the $\alpha_{j,0}^k(\bm{x}(l))$ is the  true value of $\bm{\alpha}$, while we use $\hat{\bm{\alpha}}$ in  (\ref{eq: empirical alternative optimization}). This Lemma shows that $\hat{\mathbf{P}}_s(\theta)$ is also the $o_p(1)$-maximizer of $M_n$.
	\end{remark}

	\begin{proof}
		Define 
		\[\hat{M}_n(\{P_j\}_{j=1}^J,\theta)=\frac{1}{M(\bm{x}(l))}\sum_{m=1}^{M(\bm{x}(l))} \sum_{j=1}^J P_{j} e^{\hat{\alpha}_{j,0}^k(\bm{x}(l))+\hat{\epsilon}^m_j} \tilde{F}^k(s|\hat{\bm{\epsilon}}^m;\theta)\mathbbm{1}(\mathbf{X}^m=\bm{x}(l)),\]
		which is the objective function in (\ref{eq: empirical alternative optimization}), and $\{\hat{\mathcal{P}}_{j,s}^{0,k}(\bm{x}(l))\}_{j=1}^J$ is the maximizer of the above objective function in the simplex $\Delta^{J-1}$. Let $\{P_j^*(\theta)\}_{j=1}^J$ be the maximizer of $M_n(\{P_{j}\}_{j=1}^J,\theta)$, then we have 
		\begin{equation}\label{eq: append, op_1 maximizer}
		\begin{split}
		\hat{M}_n(\{\hat{\mathcal{P}}_{j,s}^{0,k}(\bm{x}(l))\}_{j=1}^J,\theta)
		&\ge \hat{M}_n(\{P_j^*(\theta)\}_{j=1}^J,\theta)\\
		&=\frac{1}{M(\bm{x}(l))}\sum_{m=1}^{M(\bm{x}(l))} \sum_{j=1}^J P^*_{j}(\theta) e^{\hat{\alpha}_{j,0}^k(\bm{x}(l))+\hat{\epsilon}^m_j} \tilde{F}^k(s|\hat{\bm{\epsilon}}^m;\theta)\mathbbm{1}(\mathbf{X}^m=\bm{x}(l))\\
		&= \frac{1}{M(\bm{x}(l))}\sum_{m=1}^{M(\bm{x}(l))} \sum_{j=1}^J P^*_{j}(\theta) f_j^m(1) \mathbbm{1}(\mathbf{X}^m=\bm{x}(l))\\
		\end{split}
		\end{equation}
		where the function $f_j^m$ is defined in the following:
		\[
		f_j^m(t)=e^{{\alpha}_{j,0}^k(\bm{x}(l))+t(\hat{\alpha}_{j,0}^k(\bm{x}(l))-{\alpha}_{j,0}^k(\bm{x}(l)))+{\epsilon}^m_j+t(\hat{\epsilon}^m_j-{\epsilon}^m_j)} \tilde{F}^k(s|{\bm{\epsilon}}^m+t(\hat{\bm{\epsilon}}^m-\bm{\epsilon}^m);\theta).
		\]
		By mean value theorem, we can find a $t_j^m\in[0,1]$ such that $f_j^m(1)=f_j^m(0)+(f_j^{m})^\prime(t_j^m)$. The derivatives with respect to $t$ is 
		\begin{equation}\label{eq: append, derivatives of f_j^m}
		\begin{split}
		(f_j^{m})^\prime(t)&= f_j^m(t)\left[\hat{\alpha}_{j,0}^k(\bm{x}(l))-{\alpha}_{j,0}^k(\bm{x}(l))+\hat{\epsilon}^m_j-{\epsilon}^m_j\right]\\
		&+ e^{{\alpha}_{j,0}^k(\bm{x}(l))+t(\hat{\alpha}_{j,0}^k(\bm{x}(l))-{\alpha}_{j,0}^k(\bm{x}(l)))+{\epsilon}^m_j+t(\hat{\epsilon}^m_j-{\epsilon}^m_j)}\sum_{i=1}^J\frac{\partial \tilde{F}^k}{\partial \epsilon_i}(\hat{\epsilon}_i^m-\epsilon_i^m).
		\end{split}
		\end{equation}
		Now I bound the term $\hat{\epsilon}^m_j-{\epsilon}^m_j$:
		\begin{equation}\label{eq: append, expansion of hat{epsilon}}
		\begin{split}
		|(\hat{\epsilon}^{m}_j-\epsilon_{j}^m)|&= \left|[\delta_j^{*}(\bm{ms}^m,D^m,\mathbf{X}^m,\hat{\bm{\alpha}},\hat{\mathbf{P}}^{0})-\delta_j^{*}(\bm{ms}^m,D^m,\mathbf{X}^m,{\bm{\alpha}},{\mathbf{P}}^{0})\right| \\
		&=\bigg|\sum_{j,k}\frac{\partial \delta_j^{*m}}{\partial \alpha_j^k(l)} (\hat{\alpha}_j^k(l)-{\alpha}_j^k(l))+ \sum_{j,k}\frac{\partial \delta_j^{*m}}{\partial \mathcal{P}_j^{0,k}(l)} (\hat{\mathcal{P}}_j^{0,k}(l)-{\mathcal{P}}_j^{0,k}(l)\bigg|\\
		&\le  J\times K \times C \max_{j,k}\left\{ \max \{\hat{\alpha}_j^k(l)-{\alpha}_j^k(l),\hat{\mathcal{P}}_j^{0,k}(l)-{\mathcal{P}}_j^{0,k}(l)\} \right\}
		\end{split}
		\end{equation}
		where the inequality holds by Assumption \ref{assumption: smooth parametric persuasion strategy}. Moreover, by Assumption \ref{assumption: smooth parametric persuasion strategy}, $\frac{\partial \tilde{F}^k}{\partial \epsilon_i} <C$ also holds. Now, denote the term $\max_{j,k}\left\{ \max \{\hat{\alpha}_j^k(l)-{\alpha}_j^k(l),\hat{\mathcal{P}}_j^{0,k}(l)-{\mathcal{P}}_j^{0,k}(l)\} \right\}$ by  $o^*_{\alpha,P}$, combining (\ref{eq: append, derivatives of f_j^m}) and (\ref{eq: append, expansion of hat{epsilon}}), we have 
		\[
		|(f_j^{m})^\prime(t)|\le 2J^2KC^2e^{{\alpha}_{j,0}^k(\bm{x}(l))+t(\hat{\alpha}_{j,0}^k(\bm{x}(l))-{\alpha}_{j,0}^k(\bm{x}(l)))+{\epsilon}^m_j+t(\hat{\epsilon}^m_j-{\epsilon}^m_j)}\times  o^*_{\alpha,P}.
		\]
		Now we substitute the mean value expansion of $f_j^m(1)$ back to (\ref{eq: append, op_1 maximizer}) to get 
		\begin{equation*}
		\begin{split}
		&\frac{1}{M(\bm{x}(l))}\sum_{m=1}^{M(\bm{x}(l))} \sum_{j=1}^J P^*_{j}(\theta) f_j^m(1) \mathbbm{1}(\mathbf{X}^m=\bm{x}(l))\\
		&= \frac{1}{M(\bm{x}(l))}\sum_{m=1}^{M(\bm{x}(l))} \sum_{j=1}^J P^*_{j}(\theta) f_j^m(0) \mathbbm{1}(\mathbf{X}^m=\bm{x}(l))\\
		&+ \frac{1}{M(\bm{x}(l))}\sum_{m=1}^{M(\bm{x}(l))} \sum_{j=1}^J P^*_{j}(\theta) (f_j^m)'(t_j^m) \mathbbm{1}(\mathbf{X}^m=\bm{x}(l))\\
		&\ge \frac{1}{M(\bm{x}(l))}\sum_{m=1}^{M(\bm{x}(l))} \sum_{j=1}^J P^*_{j}(\theta) f_j^m(0) \mathbbm{1}(\mathbf{X}^m=\bm{x}(l))\\
		&- 2J^2KC^2|o^*_{\alpha,P}| \frac{1}{M(\bm{x}(l))}\sum_{m=1}^{M(\bm{x}(l))} \sum_{j=1}^J P^*_{j}(\theta)e^{{\alpha}_{j,0}^k(\bm{x}(l))+t(\hat{\alpha}_{j,0}^k(\bm{x}(l))-{\alpha}_{j,0}^k(\bm{x}(l)))+{\epsilon}^m_j+t(\hat{\epsilon}^m_j-{\epsilon}^m_j)}\mathbbm{1}(\mathbf{X}^m=\bm{x}(l))
		\end{split}
		\end{equation*}
		By Lemma \ref{lemma: first stage normality}, $|o^*_{\alpha,P}|=o_p(1)$ and 
		\[
		\begin{split}
		&\quad \sum_{m=1}^{M(\bm{x}(l))} \sum_{j=1}^J P^*_{j}(\theta)e^{{\alpha}_{j,0}^k(\bm{x}(l))+t(\hat{\alpha}_{j,0}^k(\bm{x}(l))-{\alpha}_{j,0}^k(\bm{x}(l)))+{\epsilon}^m_j+t(\hat{\epsilon}^m_j-{\epsilon}^m_j)}\mathbbm{1}(\mathbf{X}^m=\bm{x}(l))\\
		&\rightarrow_p E\left[\sum_{j=1}^J P^*_{j}(\theta)e^{{\alpha}_{j,0}^k(\bm{x}(l))+{\epsilon}^m_j}\mathbbm{1}(\mathbf{X}^m=\bm{x}(l))\right]\le E\left[\sum_{j=1}^J e^{{\alpha}_{j,0}^k(\bm{x}(l))+{\epsilon}^m_j}\mathbbm{1}(\mathbf{X}^m=\bm{x}(l))\right],
		\end{split}
		\]
		where the last inequality holds because $P_j^*(\theta)\le 1$. The observation is that  \[E\left[\sum_{j=1}^J e^{{\alpha}_{j,0}^k(\bm{x}(l))+{\epsilon}^m_j}\mathbbm{1}(\mathbf{X}^m=\bm{x}(l))\right]\] is independent of the parameter $\theta$. 
		\begin{equation*}
		\begin{split}
		&\quad \frac{1}{M(\bm{x}(l))}\sum_{m=1}^{M(\bm{x}(l))} \sum_{j=1}^J P^*_{j}(\theta) f_j^m(1) \mathbbm{1}(\mathbf{X}^m=\bm{x}(l))\\
		&\ge \frac{1}{M(\bm{x}(l))}\sum_{m=1}^{M(\bm{x}(l))} \sum_{j=1}^J P^*_{j}(\theta) f_j^m(0) \mathbbm{1}(\mathbf{X}^m=\bm{x}(l))-o_p(1)\\
		&= \sup_{(P_{j})_{j=1}^J\in \Delta^J} M_n(\{P_{j}\}_{j=1}^J,\theta)-o_p(1),
		\end{split}
		\end{equation*}
		where the last equality holds by definition of $M_n(\{P_{j}\}_{j=1}^J,\theta)$ and $\{P^*_{j}(\theta)\}_{j=1}^J$ is the maximizer of $M_n(\{P_{j}\}_{j=1}^J,\theta)$. In particular, the $o_p(1)$ term $2J^2KC^2|o^*_{\alpha,P}| E\left[\sum_{j=1}^J e^{{\alpha}_{j,0}^k(\bm{x}(l))+{\epsilon}^m_j}\mathbbm{1}(\mathbf{X}^m=\bm{x}(l))\right]$ is independent of $\theta$, so the result in the Lemma follows. 
	\end{proof}
	
	\begin{lem}
		$\sup_{\theta\in\Theta, (P_{j})_{j=1}^J\in \Delta^{J}} |M_n(\{P_{j}\}_{j=1}^J,\theta)-M(\{P_{j}\}_{j=1}^J,\theta)|=o_p(1)$
	\end{lem}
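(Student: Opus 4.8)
The statement is a uniform law of large numbers over the compact parameter set $\Delta^{J}\times\Theta$, and the plan is to prove it by the standard envelope-plus-finite-net argument, exploiting that the summand is linear in $(P_j)_j$ and Lipschitz in $\theta$ with a $G$-integrable Lipschitz bound. Write $\phi_{P,\theta}(\bm{\epsilon})\equiv\sum_{j=1}^{J}P_j e^{\alpha_{j,0}^k(\bm{x}(l))+\epsilon_j}\tilde{F}^k(s|\bm{\epsilon};\theta)$, so that $M(\{P_j\},\theta)=E_G[\phi_{P,\theta}(\bm{\epsilon})]$ by Assumption \ref{assumption: M-estimation of P}, and
\[
M_n(\{P_j\},\theta)=\frac{M^{-1}\sum_{m=1}^{M}\phi_{P,\theta}(\bm{\epsilon}^m)\,\mathbbm{1}(\mathbf{X}^m=\bm{x}(l))}{M^{-1}\sum_{m=1}^{M}\mathbbm{1}(\mathbf{X}^m=\bm{x}(l))}.
\]
The denominator does not depend on the parameters and converges in probability to $Pr(\mathbf{X}=\bm{x}(l))$, which is bounded below by $1/C>0$ by Assumption \ref{assum: discrete X}; hence it suffices to prove a uniform LLN for the numerator $\bar{\Phi}_M(\{P_j\},\theta)\equiv M^{-1}\sum_{m}\phi_{P,\theta}(\bm{\epsilon}^m)\mathbbm{1}(\mathbf{X}^m=\bm{x}(l))$, and since $\bm{\epsilon}^m\perp\mathbf{X}^m$ by Assumption \ref{assum: (DGP without Persuasion) }, its mean is $E\bar{\Phi}_M(\{P_j\},\theta)=M(\{P_j\},\theta)\,Pr(\mathbf{X}=\bm{x}(l))$.

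\textbf{Bounds.} Next I would record two bounds. Since $0\le\tilde{F}^k\le1$ and, by part 3 of Assumption \ref{assumption: smooth parametric persuasion strategy}, $\sup_{\bm{\epsilon},s,i}|\partial\tilde{F}^k/\partial\theta_i|<C$, the mean value theorem gives, for every $\bm{\epsilon}$ and all pairs of parameters,
\[
|\phi_{P,\theta}(\bm{\epsilon})-\phi_{P',\theta'}(\bm{\epsilon})|\le L(\bm{\epsilon})\big(\|P-P'\|_1+C\sqrt{\dim\theta}\,\|\theta-\theta'\|\big),\qquad L(\bm{\epsilon})\equiv\textstyle\sum_{j=1}^{J}e^{\alpha_{j,0}^k(\bm{x}(l))+\epsilon_j},
\]
and also $\sup_{(P,\theta)\in\Delta^{J}\times\Theta}|\phi_{P,\theta}(\bm{\epsilon})|\le L(\bm{\epsilon})$. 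Crucially, $E_G[L(\bm{\epsilon})]<\infty$, which is exactly the finiteness of $M(\{1,\dots,1\},\theta)$ built into Assumption \ref{assumption: M-estimation of P}. Therefore the pointwise weak LLN applies to $\phi_{P,\theta}(\bm{\epsilon}^m)\mathbbm{1}(\mathbf{X}^m=\bm{x}(l))$ at every fixed $(\{P_j\},\theta)$, and $M^{-1}\sum_{m}L(\bm{\epsilon}^m)\mathbbm{1}(\mathbf{X}^m=\bm{x}(l))\rightarrow_p E_G[L]\,Pr(\mathbf{X}=\bm{x}(l))$.

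\textbf{Finite-net argument.} Then I would finish with the usual covering argument. Fix $\eta>0$; by compactness of $\Delta^{J}\times\Theta$, cover it with finitely many balls of radius $\eta$ and centers $(\{P_j^{(i)}\},\theta^{(i)})$, $i=1,\dots,N_\eta$. For $(\{P_j\},\theta)$ in the $i$-th ball, the Lipschitz bound yields $|\bar{\Phi}_M(\{P_j\},\theta)-\bar{\Phi}_M(\{P_j^{(i)}\},\theta^{(i)})|\le\eta(1+C\sqrt{\dim\theta})\,M^{-1}\sum_{m}L(\bm{\epsilon}^m)\mathbbm{1}(\mathbf{X}^m=\bm{x}(l))$ and, analogously, $|E\bar{\Phi}_M(\{P_j\},\theta)-E\bar{\Phi}_M(\{P_j^{(i)}\},\theta^{(i)})|\le\eta(1+C\sqrt{\dim\theta})E_G[L]\,Pr(\mathbf{X}=\bm{x}(l))$. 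Hence $\sup_{P,\theta}|\bar{\Phi}_M-E\bar{\Phi}_M|$ is at most $\max_{i\le N_\eta}|\bar{\Phi}_M(\{P_j^{(i)}\},\theta^{(i)})-E\bar{\Phi}_M(\{P_j^{(i)}\},\theta^{(i)})|$, which is $o_p(1)$ by the pointwise LLN at the finitely many centers, plus $\eta(1+C\sqrt{\dim\theta})$ times $M^{-1}\sum_{m}L(\bm{\epsilon}^m)\mathbbm{1}(\mathbf{X}^m=\bm{x}(l))+E_G[L]Pr(\mathbf{X}=\bm{x}(l))$, which converges in probability to $2E_G[L]Pr(\mathbf{X}=\bm{x}(l))$. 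Since $\eta>0$ is arbitrary, $\sup_{P,\theta}|\bar{\Phi}_M-E\bar{\Phi}_M|=o_p(1)$, and dividing by the denominator (which converges uniformly to $Pr(\mathbf{X}=\bm{x}(l))>0$) gives $\sup_{\theta\in\Theta,(P_j)_j\in\Delta^{J}}|M_n(\{P_j\},\theta)-M(\{P_j\},\theta)|=o_p(1)$.

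\textbf{Main obstacle.} I do not anticipate a substantive obstacle: the two points needing care are (i) confirming $G$-integrability of the Lipschitz/envelope bound $L(\bm{\epsilon})$, which comes for free from the well-posedness of $M$ in Assumption \ref{assumption: M-estimation of P}, and (ii) handling the conditioning on $\{\mathbf{X}^m=\bm{x}(l)\}$ cleanly, which is why I write $M_n$ as a ratio of two sample averages at the outset so that the independence $\bm{\epsilon}^m\perp\mathbf{X}^m$ aligns the limit of the numerator with $M(\{P_j\},\theta)$.
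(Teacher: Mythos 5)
Your proof is correct and follows essentially the same route as the paper: both establish that the summand is Lipschitz in $(\{P_j\}_j,\theta)$ with an integrable envelope $\sum_j e^{\alpha_{j,0}^k(\bm{x}(l))+\epsilon_j}$, the paper then concluding by citing Theorem 2.7.11 of van der Vaart and Wellner while you carry out the equivalent finite-net covering argument by hand. Your treatment is in fact slightly more careful than the paper's on two minor points --- writing $M_n$ as a ratio so that the random normalization by $M(\bm{x}(l))$ and the conditioning on $\{\mathbf{X}^m=\bm{x}(l)\}$ are handled explicitly, and flagging the $G$-integrability of the envelope (which is cleanest to extract by taking $P$ equal to a unit vector and summing $M$ over the finitely many signal values $s$, since $\sum_s\tilde{F}^k(s|\bm{\epsilon};\theta)=1$) --- neither of which changes the substance of the argument.
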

	\begin{proof}
		
		Let $((P_{j})_{j=1}^J,\theta)$ and $((\bar{P}_{j})_{j=1}^J,\bar{\theta})$ be two values in the set $\Theta \times \Delta^{J}$. 
		\begin{equation*}
		\begin{split}
		&\left|\sum_{j=1}^J P_{j} e^{\alpha_{j,0}^k(\bm{x}(l))+\epsilon^m_j} \tilde{F}^k(s|\bm{\epsilon}^m;\theta)\mathbbm{1}(\mathbf{X}^m=\bm{x}(l))-\sum_{j=1}^J \bar{P}_{j} e^{\alpha_{j,0}^k(\bm{x}(l))+\epsilon^m_j} \tilde{F}^k(s|\bm{\epsilon}^m;\bar{\theta})\mathbbm{1}(\mathbf{X}^m=\bm{x}(l))\right|\\
		&\le_{(1)} \left(\sum_{i=1}^{dim(\theta)}(\bar{\theta}_i-\theta_i)\frac{\partial \tilde{F}^k}{\partial \theta_i} +\sup_j|P_j-\bar{P}_j| \right)\sum_{j=1}^J  e^{\alpha_{j,0}^k(\bm{x}(l))+\epsilon^m_j}\mathbbm{1}(\mathbf{X}^m=\bm{x}(l))\\
		&\le_{(2)} C\times dim(\theta) ||(\bar{P}_{j})_{j=1}^J,\bar{\theta})-(P_{j})_{j=1}^J,\theta)||_\infty \sum_{j=1}^J  e^{\alpha_{j,0}^k(\bm{x}(l))+\epsilon^m_j}\mathbbm{1}(\mathbf{X}^m=\bm{x}(l))\\
		&\le C\times C_1\times dim(\theta) ||(\bar{P}_{j})_{j=1}^J,\bar{\theta})-(P_{j})_{j=1}^J,\theta)||_\infty \sum_{j=1}^J  e^{\alpha_{j,0}^k(\bm{x}(l))+\epsilon^m_j}\mathbbm{1}(\mathbf{X}^m=\bm{x}(l))\\
		\end{split}
		\end{equation*}
		where $||\cdot||_\infty$ is the sup norm on a vector, and $C_1$ is a constant such that $||\cdot||_\infty\le ||\cdot||_2$ \footnote{Such $C_1$ can always be found because all norms of a finite dimensional vector space are equivalent.}. Inequality (1) follows by mean value theorem and inequality (2) follows by Assumption \ref{assumption: smooth parametric persuasion strategy}. Then by Theorem 2.7.11 in \citet{VDV}, we have the uniform convergence.	
	\end{proof}
	
	\begin{lem}
		If Condition \ref{assumption: M-estimation of P} holds, then $\sup_{\theta\in\Theta}|\hat{\mathbf{P}}_s(\theta)-\tilde{\mathbf{P}}_s^0(\theta)|=o_p(1)$
	\end{lem}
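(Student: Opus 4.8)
The plan is to run the standard argmax-consistency argument, carried out uniformly in $\theta$, on the basis of the three ingredients just assembled: (i) the previous Lemma showing that $\hat{\mathbf{P}}_s(\theta)$ is a uniform $o_p(1)$-maximizer over the simplex of the criterion $M_n(\cdot,\theta)$ built with the true $\bm\alpha_0$; (ii) the previous Lemma giving $\sup_{\theta\in\Theta,\,(P_j)\in\Delta^{J-1}}|M_n(\{P_j\},\theta)-M(\{P_j\},\theta)|=o_p(1)$; and (iii) the identifiable-uniqueness Condition \ref{assumption: M-estimation of P} for the population criterion $M$. Since $\hat{\mathbf{P}}_s(\theta)-\tilde{\mathbf{P}}_s^0(\theta)$ has only finitely many blocks indexed by $(j,k,l,s)$, it suffices to bound each block, so I would fix $k,l,s$ and write $\hat P(\theta)=(\hat{\mathcal{P}}_{j,s}^{0,k}(\bm{x}(l);\theta))_{j=1}^J$ and $\tilde P^0(\theta)=(\tilde{\mathcal{P}}_{j,s}^{0,k}(\bm{x}(l);\theta))_{j=1}^J$, the latter being by definition the exact maximizer of $M(\cdot,\theta)$ over the simplex.

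The first step is to show that the population \emph{value} gap vanishes uniformly, i.e. $\sup_{\theta\in\Theta}\big[M(\tilde P^0(\theta),\theta)-M(\hat P(\theta),\theta)\big]=o_p(1)$. Because $\tilde P^0(\theta)$ maximizes $M(\cdot,\theta)$, this quantity is nonnegative, so only an upper bound is required. I would use the decomposition
\begin{equation*}
M(\tilde P^0(\theta),\theta)-M(\hat P(\theta),\theta)=\underbrace{M(\tilde P^0(\theta),\theta)-M_n(\tilde P^0(\theta),\theta)}_{(\mathrm{I})}+\underbrace{M_n(\tilde P^0(\theta),\theta)-M_n(\hat P(\theta),\theta)}_{(\mathrm{II})}+\underbrace{M_n(\hat P(\theta),\theta)-M(\hat P(\theta),\theta)}_{(\mathrm{III})}.
\end{equation*}
Terms $(\mathrm{I})$ and $(\mathrm{III})$ are each $o_p(1)$ uniformly in $\theta$ by ingredient (ii). For $(\mathrm{II})$, $M_n(\tilde P^0(\theta),\theta)\le \sup_{P\in\Delta^{J-1}}M_n(P,\theta)$, and ingredient (i) gives $\sup_{P\in\Delta^{J-1}}M_n(P,\theta)\le M_n(\hat P(\theta),\theta)+o_p(1)$ uniformly in $\theta$, so $(\mathrm{II})\le o_p(1)$ uniformly. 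Adding up and using nonnegativity yields the claim.

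The second step converts this value gap into a parameter gap via Condition \ref{assumption: M-estimation of P}. Fix $\kappa>0$; invoking that condition in its uniform-in-$\theta$ form (the separation constant may be taken independent of $\theta$ under the maintained regularity, e.g. compactness of $\Theta$ together with continuity of $\theta\mapsto M(\cdot,\theta)$, which follows from Assumption \ref{assumption: smooth parametric persuasion strategy}), there is $\zeta>0$ such that for every $\theta\in\Theta$ and every simplex point $P$ with $d(P,\tilde P^0(\theta))>\kappa$ one has $M(\tilde P^0(\theta),\theta)-M(P,\theta)>\zeta$. Contrapositively, on the event $\{\sup_{\theta\in\Theta}[M(\tilde P^0(\theta),\theta)-M(\hat P(\theta),\theta)]\le \zeta\}$ we have $\sup_{\theta\in\Theta}d(\hat P(\theta),\tilde P^0(\theta))\le \kappa$. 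By the first step this event has probability tending to one; since $\kappa$ was arbitrary, $\sup_{\theta\in\Theta}d(\hat P(\theta),\tilde P^0(\theta))=o_p(1)$, and taking the maximum over the finitely many blocks $(j,k,l,s)$ gives $\sup_{\theta\in\Theta}|\hat{\mathbf{P}}_s(\theta)-\tilde{\mathbf{P}}_s^0(\theta)|=o_p(1)$.

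The main obstacle is entirely about uniformity in $\theta$. The $o_p(1)$-maximizer property and the criterion convergence are already delivered uniformly by the two preceding Lemmas (the discrepancy between $\hat{\bm\alpha}$ and $\bm\alpha_0$ having been absorbed there through the mean-value bound of Assumption \ref{assumption: smooth parametric persuasion strategy}), so the delicate point is upgrading the pointwise identifiable-uniqueness of Condition \ref{assumption: M-estimation of P} to a \emph{uniform} separation, so that a single $\zeta$ serves for all $\theta\in\Theta$; this is where compactness of $\Theta$ and continuity of $\theta\mapsto M(\cdot,\theta)$ enter. Everything else is routine argmax-continuity bookkeeping.
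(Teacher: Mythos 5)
Your proposal is correct and follows essentially the same route as the paper's own proof: establish that the population value gap $M(\tilde P^0(\theta),\theta)-M(\hat P(\theta),\theta)$ is uniformly $o_p(1)$ by combining the $o_p(1)$-maximizer lemma with the uniform convergence of $M_n$ to $M$, then invoke the identifiable-uniqueness condition to convert the value gap into a parameter gap, and finish by a union over the finitely many $(j,k,l,s)$ blocks. If anything, your write-up is more careful than the paper's, which leaves implicit the point you flag explicitly — that the separation constant $\zeta$ in Assumption \ref{assumption: M-estimation of P} is allowed to depend on $\theta$ and must be made uniform (via compactness of $\Theta$ and continuity of $\theta\mapsto M(\cdot,\theta)$) for the uniform-in-$\theta$ conclusion.
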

	
	\begin{proof}
		Lemma C.1 and C.2 implies that 
		\[\sup_{\theta\in\Theta}|M_n(\{\hat{\mathcal{P}}_{j,s}^{0,k}(\bm{x}(l))\}_{j=1}^J,\theta)-M(\{\tilde{\mathcal{P}}_{j,s}^{0,k}(\bm{x}(l))\}_{j=1}^J,\theta)|=o_p(1).\]
		So we have 
		\[
		\begin{split}
		&\quad\sup_\theta  M(\{\hat{\mathcal{P}}_{j,s}^{0,k}(\bm{x}(l))\}_{j=1}^J,\theta)-M(\{\hat{\mathcal{P}}_{j,s}^{0,k}(\bm{x}(l))\}_{j=1}^J,\theta)\\
		&\le\sup_\theta M(\{\hat{\mathcal{P}}_{j,s}^{0,k}(\bm{x}(l))\}_{j=1}^J,\theta)-M_n(\{\hat{\mathcal{P}}_{j,s}^{0,k}(\bm{x}(l))\}_{j=1}^J,\theta)+o_p(1)=o_p(1),
		\end{split}
		\]
		where the last equality hold by Lemma C.2. 
		By assumption C.1, the event \[d\left((P_j)_{j=1}^J,(\tilde{\mathcal{P}}_{j,s}^{0,k}(\bm{x}(l);\theta))_{j=1}^J\right)>\kappa\] is contained in the event $\sup_\theta  M(\{\hat{\mathcal{P}}_{j,s}^{0,k}(\bm{x}(l))\}_{j=1}^J,\theta)-M(\{\hat{\mathcal{P}}_{j,s}^{0,k}(\bm{x}(l))\}_{j=1}^J,\theta)>\kappa$, therefore
		{\footnotesize
			\[Pr\left(d\left((P_j)_{j=1}^J,(\tilde{\mathcal{P}}_{j,s}^{0,k}(\bm{x}(l);\theta))_{j=1}^J\right)>\kappa\right)<Pr(\sup_\theta  M(\{\hat{\mathcal{P}}_{j,s}^{0,k}(\bm{x}(l))\}_{j=1}^J,\theta)-M(\{\hat{\mathcal{P}}_{j,s}^{0,k}(\bm{x}(l))\}_{j=1}^J,\theta)>\kappa)\rightarrow 0\].}
		The result follows by taking the union over finite index $k=1,...,K$ and $l=1,...,L$. 
	\end{proof}
	
	\begin{lem}
		Let 
		$\mathbb{F}^k(s|\theta)\equiv 
		\frac{1}{M}\sum_{m=1}^M \tilde{F}^k(s|\hat{\bm{\epsilon}}^m;\theta)$ and let 
		$\tilde{F}^k(s|\theta)\equiv \int \tilde{F}^k(s|\bm{\epsilon};\theta)dG(\bm{\epsilon})$. The following hold under assumption \ref{assumption: smooth parametric persuasion strategy} $
		\sup_{\theta\in\Theta} |\tilde{\mathbb{F}}^k(s|\theta)-\tilde{F}(s|\theta)|=o_p(1)$.
	\end{lem}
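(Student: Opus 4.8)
The plan is to split $\mathbb{F}^k(s|\theta)-\tilde F^k(s|\theta)$ into a plug-in error, coming from using the pseudo-shocks $\hat{\bm{\epsilon}}^m$ in place of the true $\bm{\epsilon}^m$, and a genuine uniform law of large numbers over the finite-dimensional family indexed by $\theta$. Concretely, I would write
\[
\mathbb{F}^k(s|\theta)-\tilde F^k(s|\theta)=A_M(\theta)+B_M(\theta),\qquad
A_M(\theta)=\frac{1}{M}\sum_{m=1}^M\big[\tilde F^k(s|\hat{\bm{\epsilon}}^m;\theta)-\tilde F^k(s|\bm{\epsilon}^m;\theta)\big],
\]
\[
B_M(\theta)=\frac{1}{M}\sum_{m=1}^M \tilde F^k(s|\bm{\epsilon}^m;\theta)-\int \tilde F^k(s|\bm{\epsilon};\theta)\,dG(\bm{\epsilon}),
\]
and bound $\sup_{\theta\in\Theta}|A_M(\theta)|$ and $\sup_{\theta\in\Theta}|B_M(\theta)|$ separately.

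For $A_M$, a mean-value expansion in $\bm{\epsilon}$ together with part 1 of Assumption \ref{assumption: smooth parametric persuasion strategy} gives $|\tilde F^k(s|\hat{\bm{\epsilon}}^m;\theta)-\tilde F^k(s|\bm{\epsilon}^m;\theta)|\le C\sum_{i=1}^J|\hat\epsilon^m_i-\epsilon^m_i|$, with $C$ not depending on $\theta$, $m$, or $s$. The term $|\hat\epsilon^m_i-\epsilon^m_i|$ is controlled exactly as in (\ref{eq: append, expansion of hat{epsilon}}): since $\hat\epsilon^m_i$ and $\epsilon^m_i$ are the $i$-th coordinate of the fixed point $\delta^{*m}_i$ evaluated at $(\hat{\bm{\alpha}},\hat{\mathbf{P}}^{0})$ and at $(\bm{\alpha},\mathbf{P}^{0})$ respectively, another mean-value expansion and part 2 of Assumption \ref{assumption: smooth parametric persuasion strategy} give $|\hat\epsilon^m_i-\epsilon^m_i|\le JKC\,o^*_{\alpha,P}$, where $o^*_{\alpha,P}$ is the maximal absolute first-stage estimation error in $(\hat{\bm{\alpha}},\hat{\mathbf{P}}^{0})$, which is $o_p(1)$ by Lemma \ref{lemma: first stage normality} and, crucially, does not depend on $m$ or $\theta$. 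Hence $\sup_{\theta\in\Theta}|A_M(\theta)|\le J^2KC^2\,o^*_{\alpha,P}=o_p(1)$.

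For $B_M$, this is a uniform law of large numbers over the class $\mathcal{H}=\{\bm{\epsilon}\mapsto\tilde F^k(s|\bm{\epsilon};\theta):\theta\in\Theta\}$: each element is bounded by $1$, and by part 3 of Assumption \ref{assumption: smooth parametric persuasion strategy} the map $\theta\mapsto\tilde F^k(s|\bm{\epsilon};\theta)$ is Lipschitz, uniformly in $\bm{\epsilon}$, with constant $C\cdot dim(\theta)$, so over the compact finite-dimensional set $\Theta$ the bracketing numbers of $\mathcal{H}$ are finite and $\mathcal{H}$ is Glivenko--Cantelli; this is the same argument, via Theorem 2.7.11 in \citet{VDV}, already used in the proof of the preceding lemma. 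Thus $\sup_{\theta\in\Theta}|B_M(\theta)|=o_p(1)$. Combining the two bounds and taking a union over the finite signal support and over $k=1,\dots,K$ yields the claim.

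I do not expect a real obstacle here. The two mean-value expansions and the citation of the ULLN are routine; the only point that needs care is that the $o_p(1)$ rate in $A_M$ be uniform in $\theta$ and across markets $m$, which is precisely why parts 1--2 of Assumption \ref{assumption: smooth parametric persuasion strategy} are stated with suprema over $\theta$, $s$, $\bm{\epsilon}$ and over all realizations of $(\bm{ms}^m,D^m,\mathbf{X}^m)$. Once those uniform bounds are available, the plug-in error collapses to a deterministic multiple of the first-stage error $o^*_{\alpha,P}$ and the argument closes; this lemma is the bookkeeping step that lets one replace $\mathbb{G}$ by $G$ inside $h_j^k$ in the consistency proof of Proposition \ref{prop: Consistency of theta}.
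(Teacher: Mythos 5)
Your proposal is correct and follows essentially the same route as the paper: the identical decomposition into a plug-in error (controlled by a mean-value expansion in $\bm{\epsilon}$, the uniform gradient bound from Assumption \ref{assumption: smooth parametric persuasion strategy}, and the first-stage expansion (\ref{eq: append, expansion of hat{epsilon}}) yielding a $\theta$-free $o_p(1)$ bound) plus a uniform law of large numbers over the $\theta$-indexed class. The only cosmetic difference is that you justify the ULLN via Lipschitz continuity in $\theta$ and bracketing (Theorem 2.7.11 in \citet{VDV}), whereas the paper simply asserts the class is Donsker — your version is, if anything, slightly more explicit.
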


	\begin{proof}
		We look at the following expansion 
		\begin{equation}\label{eq: append, expansion of F(s)}
		\begin{split}
		\left|\tilde{\mathbb{F}}^k(s|\theta)-\tilde{F}(s|\theta)\right|&=\frac{1}{M}\left|\sum_{m=1}^N \tilde{F}(s|\hat{\bm{\epsilon}}^m;\theta)-\tilde{F}(s|\bm{\epsilon}^m;\theta)+\tilde{F}(s|\bm{\epsilon}^m;\theta)-E_{\bm{\epsilon}}(\tilde{F}(s|\bm{\epsilon}^m;\theta))\right|\\
		&=\left|\frac{1}{M}\sum_{m=1}^M \sum_{j=1}^J\frac{\partial \tilde{F}}{\partial{\epsilon}_j}(\hat{\epsilon}^{m}_j-\epsilon^{m}_j)\right|+ \left|\frac{1}{M}\sum_{m=1}^M[ \tilde{F}(s|{\epsilon}_m;\theta)-E_{\epsilon}(\tilde{F}(s|{\epsilon}_m;\theta))]\right|\\
		&\le\frac{C}{M}\left|\sum_{m=1}^M \sum_{j=1}^J(\hat{\epsilon}^{m}_j-\epsilon^{m}_j)\right|+ \left|\frac{1}{M}\sum_{m=1}^M[ \tilde{F}(s|{\epsilon}_m;\theta)-E_{\epsilon}(\tilde{F}(s|{\epsilon}_m;\theta))]\right|,
		\end{split}
		\end{equation}
		where the last inequality holds by assumption \ref{assumption: smooth parametric persuasion strategy}. Now we use the expansion of $\epsilon_j^m$ in (\ref{eq: append, expansion of hat{epsilon}}) to get 
		\[
		\left|\frac{C}{M}\sum_{m=1}^M \sum_{j=1}^J(\hat{\epsilon}^{m}_j-\epsilon^{m}_j)\right|\le JKC^2\left|\max_{j,k}\left\{ \max \{\hat{\alpha}_j^k(l)-{\alpha}_j^k(l),\hat{\mathcal{P}}_j^{0,k}(l)-{\mathcal{P}}_j^{0,k}(l)\} \right\}\right|=o_p(1)
		\]
		Note that  $F(s|\epsilon_m;\theta)$ is a Donsker class indexed by $\theta$ by assumption \ref{assumption: smooth parametric persuasion strategy}, which implies 
		\[\sup_{\theta\in \Theta}\left|\frac{1}{M}\sum_{m=1}^M[ \tilde{F}(s|{\epsilon}_m;\theta)-E_{\epsilon}(\tilde{F}(s|{\epsilon}_m;\theta))]\right|=o_p(1).\]
		Combined the two terms in (\ref{eq: append, expansion of F(s)}) we can get $
		\sup_{\theta}|\tilde{\mathbb{F}}^k(s|\theta)-\tilde{F}^k(s|\theta))=o_p(1)$. 
	\end{proof}
	
	\begin{lem}
		$\sup_{\theta\in\Theta,j=1,...J, k=1,...K,l=1,...L} |\hat{\mathcal{P}}_{j,s}^{0,k}(\bm{x}(l);\theta)\tilde{\mathbb{F}}^k(s|\theta)-\tilde{\mathcal{P}}_{j,s}^{0,k}(\bm{x}(l);\theta) \tilde{F}^k(s|\theta)|=o_p(1)$.
	\end{lem}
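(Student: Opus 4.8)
The plan is to reduce this product statement to the two factor-level uniform convergence results just established, via the standard add-and-subtract decomposition
\[
\hat{\mathcal{P}}_{j,s}^{0,k}(\bm{x}(l);\theta)\tilde{\mathbb{F}}^k(s|\theta)-\tilde{\mathcal{P}}_{j,s}^{0,k}(\bm{x}(l);\theta)\tilde{F}^k(s|\theta)
=\hat{\mathcal{P}}_{j,s}^{0,k}(\bm{x}(l);\theta)\big[\tilde{\mathbb{F}}^k(s|\theta)-\tilde{F}^k(s|\theta)\big]
+\big[\hat{\mathcal{P}}_{j,s}^{0,k}(\bm{x}(l);\theta)-\tilde{\mathcal{P}}_{j,s}^{0,k}(\bm{x}(l);\theta)\big]\tilde{F}^k(s|\theta).
\]
First I would record the two trivial uniform bounds that make this work: $\hat{\mathcal{P}}_{j,s}^{0,k}(\bm{x}(l);\theta)\in[0,1]$ for every $\theta,j,k,l$ because it is a feasible point of the simplex-constrained program (\ref{eq: empirical alternative optimization}), and $\tilde{F}^k(s|\theta)=\int\tilde{F}^k(s|\bm{\epsilon};\theta)\,dG(\bm{\epsilon})\in[0,1]$ because $\tilde{F}^k(\cdot|\bm{\epsilon};\theta)$ is a probability distribution on the finite signal support, so the integrand lies in $[0,1]$ and so does its $G$-mixture. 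Hence, by the triangle inequality,
\[
\big|\hat{\mathcal{P}}_{j,s}^{0,k}(\bm{x}(l);\theta)\tilde{\mathbb{F}}^k(s|\theta)-\tilde{\mathcal{P}}_{j,s}^{0,k}(\bm{x}(l);\theta)\tilde{F}^k(s|\theta)\big|
\le\big|\tilde{\mathbb{F}}^k(s|\theta)-\tilde{F}^k(s|\theta)\big|+\big|\hat{\mathcal{P}}_{j,s}^{0,k}(\bm{x}(l);\theta)-\tilde{\mathcal{P}}_{j,s}^{0,k}(\bm{x}(l);\theta)\big|.
\]

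Second, I would take $\sup_{\theta\in\Theta}$ and the maximum over the finite index set $\{(j,k,l):j\le J,\,k\le K,\,l\le L\}$ on both sides. The right-hand side is then dominated by $\sup_{\theta\in\Theta}\max_{k}\big|\tilde{\mathbb{F}}^k(s|\theta)-\tilde{F}^k(s|\theta)\big|+\sup_{\theta\in\Theta}\max_{j,k,l}\big|\hat{\mathcal{P}}_{j,s}^{0,k}(\bm{x}(l);\theta)-\tilde{\mathcal{P}}_{j,s}^{0,k}(\bm{x}(l);\theta)\big|$. The first term is $o_p(1)$ by the preceding lemma comparing $\tilde{\mathbb{F}}^k$ with $\tilde{F}^k$ (stated for each fixed $k$; a union bound over the finitely many values of $k$ preserves $o_p(1)$), and the second term is exactly $\sup_{\theta\in\Theta}|\hat{\mathbf{P}}_s(\theta)-\tilde{\mathbf{P}}_s^0(\theta)|$, which is $o_p(1)$ by the earlier lemma invoking Condition \ref{assumption: M-estimation of P}. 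A sum of two $o_p(1)$ quantities is $o_p(1)$, which is the claim.

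There is essentially no substantive obstacle here beyond bookkeeping — the real work was done in the previous lemmas, which is why they were phrased with the supremum over $\theta$ already on the outside. The only points requiring a line of care are (i) that the two multiplicative factors are uniformly bounded by one, so that each cross term in the decomposition is genuinely controlled by a single factor-level discrepancy, and (ii) that the index set over which we take the maximum is finite, so that passing from the pointwise triangle-inequality bound to a bound that is uniform over $(j,k,l)$ and $\theta$ costs nothing. With those two observations the conclusion is immediate.
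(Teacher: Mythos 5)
Your proof is correct and follows the same route as the paper, which simply states that the result ``follows directly'' from the two preceding lemmas; your add-and-subtract decomposition with the uniform bounds $\hat{\mathcal{P}}_{j,s}^{0,k}(\bm{x}(l);\theta)\in[0,1]$ and $\tilde{F}^k(s|\theta)\in[0,1]$, plus the finiteness of the index set, is exactly the bookkeeping the paper leaves implicit.
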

	
	\begin{proof}
		This follows directly from Lemma C.3 and C.4.
	\end{proof}
	
	\begin{lem}\label{lem: append, o_p(1) minimizer of theta_hat}
		Consider 
		\begin{equation}\label{eq:append, moment function of theta with true }
		g^*_{l,j,k}(\theta,\tilde{\mathbf{ms}}^m,D^m,\mathbf{X}^m,\tilde{\mathbf{P}}_s)= [\tilde{ms}_j^m-\sum_{d=1}^K h_j^{*k}(\theta,\tilde{\mathbf{P}}_s,\bm{x}(l))d_k^m] d_k^m\mathbbm{1}(\mathbf{X}^m=\bm{x}(l)), 
		\end{equation}
		\begin{equation} \label{equation: append, h-estimator with}
		h_j^{*k}(\theta,\tilde{\mathbf{P}}_s,\bm{x}(l))=\sum_{s} \bigg[\tilde{\mathcal{P}}_{j,s}^{0,k}(\bm{x}(l),\theta) \tilde{F}^k(s|\theta) \bigg].
		\end{equation}
		The equation (\ref{eq:append, moment function of theta with true }) and (\ref{equation: append, h-estimator with}) differ from (\ref{eq: moment function of theta with plug in estimator}) and (\ref{equation: h-estimator}) because (\ref{eq:append, moment function of theta with true }) and (\ref{equation: append, h-estimator with}) use the true unconditional choice probability instead of the estimator.  Define \[
		L_n(\theta)=\left(\frac{1}{N}\sum_{m=1}^N\mathbf{g}^*(\theta)\right)^\prime W \left(\frac{1}{N}\sum_{m=1}^N\mathbf{g}^{*}(\theta)\right),\] where $ \mathbf{g}^*(\theta)$ collects $g_{l,j}^*$ for all $l,j,k$ indices. 
		Then $\hat{\theta}$ is an $o_p(1)$ minimizer of $L_n(\theta)$, i.e. $
		L_n(\hat{\theta})\le \min_\theta L_n(\theta) +o_p(1)$. 	
	\end{lem}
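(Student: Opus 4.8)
The plan is to show that the feasible GMM criterion defining $\hat\theta$ is uniformly (in $\theta$) within $o_p(1)$ of the infeasible criterion $L_n(\theta)$, and then to invoke the elementary fact that an exact minimizer of one of two uniformly close functions is an $o_p(1)$-minimizer of the other.

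First I would write the feasible criterion from (\ref{eq: GMM objectives theta}) as $\hat L_n(\theta)=\big(\tfrac1N\sum_{m=1}^N\mathbf g^m(\theta)\big)'W_2\big(\tfrac1N\sum_{m=1}^N\mathbf g^m(\theta)\big)$, so that $\hat\theta=\arg\min_{\theta\in\Theta}\hat L_n(\theta)$ holds by construction. The coordinates of $\mathbf g^m$ and $\mathbf g^{*m}$ differ only through the replacement of $h_j^k(\theta,\hat{\mathbf{P}}_s,\bm{x}(l))$ by $h_j^{*k}(\theta,\tilde{\mathbf{P}}_s,\bm{x}(l))$. From the uniform convergence $\sup_{\theta\in\Theta}\big|\hat{\mathcal P}_{j,s}^{0,k}(\bm{x}(l);\theta)\,\tilde{\mathbb{F}}^k(s|\theta)-\tilde{\mathcal P}_{j,s}^{0,k}(\bm{x}(l);\theta)\,\tilde F^k(s|\theta)\big|=o_p(1)$ established in the preceding lemma (Lemma C.5), summing over the finitely many signal values $s$ gives $\sup_{\theta\in\Theta}\big|h_j^k(\theta,\hat{\mathbf{P}}_s,\bm{x}(l))-h_j^{*k}(\theta,\tilde{\mathbf{P}}_s,\bm{x}(l))\big|=o_p(1)$ for each $(j,k,l)$. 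Since
\[
g^m_{l,j,k}(\theta)-g^{*m}_{l,j,k}(\theta)=-\Big(\sum_{d=1}^K\big[h_j^d(\theta,\hat{\mathbf{P}}_s,\bm{x}(l))-h_j^{*d}(\theta,\tilde{\mathbf{P}}_s,\bm{x}(l))\big]d_k^m\Big)d_k^m\,\mathbbm{1}(\mathbf X^m=\bm{x}(l)),
\]
with $|d_k^m|\le 1$ and the indicator bounded, and since the stacked moment vector is finite dimensional, I obtain $\sup_{\theta\in\Theta}\big\|\tfrac1N\sum_{m=1}^N\big(\mathbf g^m(\theta)-\mathbf g^{*m}(\theta)\big)\big\|=o_p(1)$.

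Next, both $\bar{\mathbf g}(\theta):=\tfrac1N\sum_m\mathbf g^m(\theta)$ and $\bar{\mathbf g}^*(\theta):=\tfrac1N\sum_m\mathbf g^{*m}(\theta)$ are bounded uniformly in $\theta$ by a deterministic constant, because the market shares $\tilde{ms}^m_j$, the demographic shares $d_k^m$, and the unconditional choice probabilities $\tilde{\mathcal P}_{j,s}^{0,k}$ and $\hat{\mathcal P}_{j,s}^{0,k}$ all lie in $[0,1]$ (so the $h_j^k$ and $h_j^{*k}$ do too). Writing the difference of the two quadratic forms as $(\bar{\mathbf g}-\bar{\mathbf g}^*)'W_2(\bar{\mathbf g}+\bar{\mathbf g}^*)$ and using $W_2=O_p(1)$ together with $W_2\to_p W$ (the matrix appearing in $L_n$, which is the standard requirement on the GMM weighting matrix), I conclude $\sup_{\theta\in\Theta}\big|\hat L_n(\theta)-L_n(\theta)\big|=o_p(1)$.

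Finally, pick $\theta_n^\ast\in\arg\min_{\theta\in\Theta}L_n(\theta)$. Using $\hat L_n(\hat\theta)\le\hat L_n(\theta_n^\ast)$,
\begin{align*}
L_n(\hat\theta)&\le\hat L_n(\hat\theta)+\sup_{\theta\in\Theta}\big|L_n(\theta)-\hat L_n(\theta)\big|\\
&\le\hat L_n(\theta_n^\ast)+o_p(1)\\
&\le L_n(\theta_n^\ast)+\sup_{\theta\in\Theta}\big|L_n(\theta)-\hat L_n(\theta)\big|+o_p(1)\\
&=\min_{\theta\in\Theta}L_n(\theta)+o_p(1),
\end{align*}
which is exactly the claimed $o_p(1)$-minimizer property. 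The main obstacle is the uniform-in-$\theta$ control used in the first two steps: the whole argument rests on the uniform convergences $\sup_\theta|\hat{\mathbf{P}}_s(\theta)-\tilde{\mathbf{P}}_s^0(\theta)|=o_p(1)$ and $\sup_\theta|\tilde{\mathbb{F}}^k(s|\theta)-\tilde F^k(s|\theta)|=o_p(1)$ (Lemmas C.3 and C.4), which themselves rely on the $\sqrt M$-consistency of $(\hat{\bm\alpha},\hat{\mathbf P})$ from Lemma \ref{lemma: first stage normality} and the boundedness and smoothness conditions in Assumption \ref{assumption: smooth parametric persuasion strategy}; once those uniform bounds are in hand, the passage to the quadratic criterion and the approximate-minimizer conclusion are routine.
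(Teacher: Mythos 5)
Your proposal is correct and follows essentially the same route as the paper: uniform (in $\theta$) closeness of the feasible criterion $\hat L_n$ to the infeasible $L_n$ via Lemma C.5 and boundedness of the moment vectors, followed by the standard transfer argument $L_n(\hat\theta)\le \hat L_n(\hat\theta)+o_p(1)\le \hat L_n(\theta^*)+o_p(1)\le \min_\theta L_n(\theta)+o_p(1)$. Your handling of the quadratic-form difference as $(\bar{\mathbf g}-\bar{\mathbf g}^*)'W(\bar{\mathbf g}+\bar{\mathbf g}^*)$ is in fact the algebraically correct version of the step the paper writes loosely as $\bm{\Delta}'W\bm{\Delta}$, so no gap there.
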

	\begin{proof}
		Note that $\hat{\theta}=\arg\min \hat{L}_n(\theta)$, where $\hat{L}_{n}(\theta)$ is the objective function of (\ref{eq: GMM objectives theta}).
		
		I first denote 
		$\Delta_{l,j,k}(\theta)= g^*_{l,j,k}(\theta,\tilde{\mathbf{ms}}^m,D^m,\mathbf{X}^m,\tilde{\mathbf{P}}_s)- g_{l,j,k}(\theta,\tilde{\mathbf{ms}}^m,D^m,\mathbf{X}^m,\hat{\mathbf{P}}_s)$, where $g_{l,j}$ is defined in (\ref{eq: moment function of theta with plug in estimator}). Using the expression of $g^*_{l,j,k}$ and $g_{l,j,k}$, we have  
		\begin{equation}\label{eq: append, op(1) of Delta}
		\sup_{\theta\in \Theta}|\Delta_{l,j,k}(\theta)|\le \sup_{\theta\in \Theta}\left|\sum_{d=1}^K\sum_s\left( \hat{\mathcal{P}}_{j,s}^{0,k}(\bm{x}(l);\theta)\tilde{\mathbb{F}}^k(s|\theta)-\tilde{\mathcal{P}}_{j,s}^{0,k}(\bm{x}(l);\theta) \tilde{F}^k(s|\theta)\right)\mathbbm{1}(\mathbf{X}^m=\bm{x}(l))\right| =o_p(1).
		\end{equation}
		The difference  $L_n(\theta)-\hat{L}_n(\theta)=\bm{\Delta}(\theta)'W\bm{\Delta}(\theta)$, where $\bm{\Delta}(\theta)=(\Delta_{l,j}(\theta))_{l,j}$. Then by (\ref {eq: append, op(1) of Delta}), $\sup_\theta |L_n(\theta)-\hat{L}_n(\theta)|\le ||\bm{\Delta}(\theta)||_2^2 \max eig(W)=o_p(1)$.

		Now  I look at $L_n(\hat{\theta})$. Suppose we can find $\theta^*$ such that  $L_n(\theta^*)\le \inf_{\theta\in\Theta} L_n(\theta)+o_p(1)$ 
		\begin{align*}
		L_n(\hat{\theta})&=\hat{L}_n(\hat{\theta})+L_n(\hat{\theta})-\hat{L}_n(\hat{\theta})\\
		&\le_{(1)} \hat{L}_n({\theta}^*)+L_n(\hat{\theta})-\hat{L}_n(\hat{\theta})\\
		&= {L}_n({\theta}^*)+ \underbrace{L_n(\hat{\theta})-\hat{L}_n(\hat{\theta})}_{o_p(1)}
		-[\underbrace{L_n({\theta}^*)-\hat{L}_n({\theta}^*)}_{o_p(1)}]\\
		&=_{(2)} L_n(\theta^*) +o_p(1)\le \inf_{\theta\in\Theta} L_n(\theta)+o_p(1)
		\end{align*}
		where inequality $(1)$ holds by the definition of $\hat{\theta}$, and $(2)$ equality holds because we have shown $\sup_\theta |L_n(\theta)-\hat{L}_n(\theta)|\le ||\bm{\Delta}(\theta)||_2^2 \max eig(W)=o_p(1)$. 
	\end{proof}
	
	\begin{lem}\label{lem: append, uniform convergence of L_n to L}
		Let $L(\theta)=E[\mathbf{g}^*(\theta)]'WE[\mathbf{g}^*(\theta)]$. 
		Then $\sup_{\theta\in\Theta} |L_n(\theta)-L(\theta)|=o_p(1)$
	\end{lem}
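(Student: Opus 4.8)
The plan is to prove the claim in two steps: first a uniform-in-$\theta$ weak law of large numbers for the sample moment vector $\frac1N\sum_{m=1}^N\mathbf g^*(\theta)$ over the $N$ markets with persuasion, and then a continuous-mapping step that pushes this convergence through the quadratic form $v\mapsto v'Wv$. The structural feature that makes the first step routine is that $\mathbf g^*$ contains \emph{no estimated quantities}: in (\ref{eq:append, moment function of theta with true }) and (\ref{equation: append, h-estimator with}) the dependence of $\mathbf g^*$ on $\theta$ enters only through the deterministic functions $h_j^{*k}(\theta,\bm x(l))=\sum_s\tilde{\mathcal P}_{j,s}^{0,k}(\bm x(l),\theta)\tilde F^k(s|\theta)$, which are population objects, while the data $(\tilde{\mathbf{ms}}^m,D^m,\mathbf X^m)$ appear only through $\theta$-free factors.

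First I would write each coordinate as $g^*_{l,j,k}(\theta)=A^m_{l,j,k}-\sum_{d=1}^K h_j^{*d}(\theta,\bm x(l))\,B^m_{l,d,k}$, where $A^m_{l,j,k}$ and $B^m_{l,d,k}$ are products of market shares, demographic shares, and the indicator $\mathbbm{1}(\mathbf X^m=\bm x(l))$ — hence bounded by $1$ in absolute value — and do not depend on $\theta$. Since the signal support is finite, $\tilde{\mathcal P}_{j,s}^{0,k}(\bm x(l),\theta)\in[0,1]$, and $\{\tilde F^k(s|\theta)\}_s$ is a probability distribution over that support, we have $0\le h_j^{*d}(\theta,\bm x(l))\le 1$ uniformly in $\theta$. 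Averaging over the $N$ persuasion markets,
\[
\frac1N\sum_{m=1}^N g^*_{l,j,k}(\theta)=\bar A_{l,j,k}-\sum_{d=1}^K h_j^{*d}(\theta,\bm x(l))\,\bar B_{l,d,k},
\]
so, using $h_j^{*d}\le1$,
\[
\sup_{\theta\in\Theta}\Bigl|\tfrac1N\sum_{m=1}^N g^*_{l,j,k}(\theta)-E[g^*_{l,j,k}(\theta)]\Bigr|\le\bigl|\bar A_{l,j,k}-E[A_{l,j,k}]\bigr|+\sum_{d=1}^K\bigl|\bar B_{l,d,k}-E[B_{l,d,k}]\bigr|.
\]
The right-hand side is $o_p(1)$ by the ordinary weak law of large numbers applied to the finitely many $\theta$-free averages $\bar A_{l,j,k},\bar B_{l,d,k}$ (whose summands are bounded, hence integrable). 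Summing over the finitely many indices $(l,j,k)$ yields $\sup_{\theta\in\Theta}\bigl\|\tfrac1N\sum_{m}\mathbf g^*(\theta)-E[\mathbf g^*(\theta)]\bigr\|=o_p(1)$, and both $\tfrac1N\sum_m\mathbf g^*(\theta)$ and $E[\mathbf g^*(\theta)]$ are bounded in norm, uniformly in $\theta$, by a fixed constant — which in passing shows that $L(\theta)$ is well defined and finite.

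To finish, put $a=\tfrac1N\sum_m\mathbf g^*(\theta)$ and $b=E[\mathbf g^*(\theta)]$ and use the identity $a'Wa-b'Wb=(a+b)'W(a-b)$, valid for symmetric $W$, together with $|(a+b)'W(a-b)|\le\|W\|_{\mathrm{op}}\,(\|a\|+\|b\|)\,\|a-b\|$. Taking the supremum over $\theta$, the factor $\|a-b\|$ is $o_p(1)$ uniformly by the previous step while $\|a\|+\|b\|$ is uniformly bounded with probability approaching one, giving $\sup_{\theta\in\Theta}|L_n(\theta)-L(\theta)|=o_p(1)$. I do not expect a genuine obstacle here: unlike in the earlier lemmas of this section, the $\theta$-dependent part of $\mathbf g^*$ is non-random, so no stochastic-equicontinuity or Donsker argument is required. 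The only points deserving a line of care are the uniform-in-$\theta$ bound $h_j^{*d}\in[0,1]$ (which rests on $\tilde{\mathcal P}^{0,k}_{j,s}$ being a probability and on the finiteness of the signal support assumed in the text) and the elementary quadratic-form inequality used in the last step.
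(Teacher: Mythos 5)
Your proposal is correct and follows essentially the same route as the paper's proof: both reduce the uniform convergence to an ordinary weak law of large numbers for finitely many $\theta$-free sample averages, exploiting that the $\theta$-dependent coefficients $h_j^{*k}(\theta,\bm{x}(l))=\sum_s\tilde{\mathcal P}_{j,s}^{0,k}(\bm{x}(l),\theta)\tilde F^k(s|\theta)$ are deterministic and bounded in $[0,1]$, and then push the result through the quadratic form. Your final step is in fact slightly more careful than the paper's, which asserts $L(\theta)-L_n(\theta)=\bm{\Delta}^*(\theta)'W\bm{\Delta}^*(\theta)$ --- not literally an identity, since the cross term involving $E[\mathbf{g}^*(\theta)]'W\bm{\Delta}^*(\theta)$ is dropped --- whereas your identity $a'Wa-b'Wb=(a+b)'W(a-b)$ together with the uniform boundedness of $\|a\|+\|b\|$ handles the difference of quadratic forms exactly.
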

	\begin{proof}
		Define the difference  
		\[
		\begin{split}
		\Delta_{l,j,k}^*(\theta)&= \frac{1}{N}\sum_{m=1}^N \left(\tilde{ms}_j^m\mathbbm{1}(\mathbf{X}^m=\bm{x}(l))d_k^m-E[\tilde{ms}_j^m\mathbbm{1}(\mathbf{X}^m=\bm{x}(l))d_k^m]\right)+\\
		&+ \sum_{k'=1}^K\left(\frac{1}{N}\sum_{m=1}^N d_{k'}^m \mathbbm{1}(\mathbf{X}^m=\bm{x}(l))d_k^m-E[d_{k'}^m\mathbbm{1}(\mathbf{X}^m=\bm{x}(l))d_k^m]\right)\mathcal{P}_{1,s}^{0,k}(\theta)\tilde{F}^k(s|\theta)
		\end{split}
		\]
		Observe that $\mathcal{P}_{1,s}^{0,k}(\theta)\tilde{F}^k(s|\theta)\in [0,1]$ because it is the product of two probability quantities. Moreover, the $d_k^m\in[0,1]$. Therefore, we can bound $\bm{\Delta}^*(\theta)$, which is the vector of $\Delta_{l,j,k}^*$ for all $l,j,k$ indices by 
		\begin{equation}\label{eq: append, delta^*}
		||\bm{\Delta}^*(\theta)||_2\le  JK L  \left|\frac{1}{N}\sum_{m=1}^N \tilde{ms}_j^m-E[\tilde{ms}_j^m]\right|+JK^2L \max_{k,k'} \left|\frac{1}{N}\sum_{m=1}^N d_k^m d_{k'}^m-E[d_k^md_{k'}^m]\right|.
		\end{equation}
		The right hand side of (\ref{eq: append, delta^*}) does not depend on $\theta$. By apply weak law of large numbers to the sample means of $\tilde{ms}_j^m$ and $d_k^md_{k'}^m$, we have $\sup_{\theta\in \Theta} ||\bm{\Delta}^*(\theta)||_2= o_p(1)$.
		Then notice that $L(\theta)-L_n(\theta)=\bm{\Delta}^*(\theta)'W\bm{\Delta}^*(\theta)$, so we have 
		\[\sup_{\theta\in\Theta}|L(\theta)-L_n(\theta)|\le \max eig(W) ||\bm{\Delta}^*(\theta)||_2^2=o_p(1). \]
	\end{proof}
	
	\subsection{Main Proof of Proposition \ref{prop: Consistency of theta}}
	\begin{proof}
		The consistency of $\hat{\theta}$ follows by the identification assumption 
		\[\sup_{d(\theta,\theta_0)>\zeta} L(\theta)-L(\theta_0)>0\]
		where $\hat{\theta}$ is an $o_p(1)$ minimizer of $L_n$ by Lemma  \ref{lem: append, o_p(1) minimizer of theta_hat}. Moreover  we have the uniform convergence of $\sup_{\theta\in\Theta}|L_n(\theta)- L(\theta)|=o_p(1)$ by Lemma \ref{lem: append, uniform convergence of L_n to L}. Conditions of Theorem 5.7 in \citet{van2000asymptotic} are satisfied, so $\hat{\theta}\rightarrow_p \theta$. 
	\end{proof}

	\section{Discussion of Computation} \label{Discussion of Computation}
	The estimators in the main text are constructed in two steps. While the joint estimation of $(\alpha,\beta,\mathbf{P},\theta)$ is possible, the computational burden is heavy.  Markets with persuasion also provide identification power to the first stage parameter $(\alpha,\beta,\mathbf{P})$, but this requires me to use contraction mapping each time I search over a higher dimensional parameter space when including $\theta$. Also, the estimation of $\theta$ requires solving the empirical optimization problem (\ref{eq: empirical alternative optimization}) for given first stage parameters. For the two-step estimation, I just plug in the first stage estimator and solve the (\ref{eq: empirical alternative optimization}) for different values of $\theta$, while for joint estimation the optimization problem needs to be repeated for each guessed value of $(\alpha,\beta,\mathbf{P})$. 
	
	The computational burden also comes from the contraction mapping because I need to iterate over $M$ markets. So here I use the following trick to convert the $M$ contraction mappings to one single contraction mapping.
	
	\begin{prop}
		Let $T^m(\delta):R^d\rightarrow R^d$ be a contraction mapping for each $m=1...M$. Then $T\equiv (T^1,...,T^M):R^{dM}
		\rightarrow R^{dM} $ is a contraction mapping acting on $(\delta^1,...,\delta^M)$.
	\end{prop}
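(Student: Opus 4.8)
The plan is to reduce the claim to the elementary fact that a finite maximum of contraction moduli is still strictly below one, once one observes that $T$ acts on $\mathbb{R}^{dM}$ in a block-diagonal way. Write a generic point of $\mathbb{R}^{dM}$ as $\delta=(\delta^1,\dots,\delta^M)$ with each $\delta^m\in\mathbb{R}^d$, so that by construction $[T(\delta)]_m=T^m(\delta^m)$: the $m$-th block of the output depends only on the $m$-th block of the input. This is exactly the structure produced by the market-by-market maps of Lemma~\ref{lem: Contraction Mapping Lemma}, where the fixed-point problem for $\bm{\delta}^m$ in market $m$ involves only the data of market $m$. Recall also what ``contraction'' means here: each $T^m$ is a contraction with respect to some norm $\|\cdot\|_m$ on $\mathbb{R}^d$ (in the application of Lemma~\ref{lem: Contraction Mapping Lemma}, the sup norm, via the Blackwell argument behind it), i.e.\ there is a modulus $\kappa_m\in[0,1)$ with $\|T^m(x)-T^m(y)\|_m\le\kappa_m\|x-y\|_m$ for all $x,y$. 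Since $M$ is finite, set $\kappa:=\max_{1\le m\le M}\kappa_m$, which again satisfies $\kappa<1$.

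Next I would equip $\mathbb{R}^{dM}$ with the block-wise maximum norm $\|\delta\|_{\ast}:=\max_{1\le m\le M}\|\delta^m\|_m$ (the sup norm on $\mathbb{R}^{dM}$ when all $\|\cdot\|_m$ are the sup norm). Then for any $\delta,\delta'\in\mathbb{R}^{dM}$,
\begin{equation*}
\|T(\delta)-T(\delta')\|_{\ast}=\max_{m}\|T^m(\delta^m)-T^m(\delta'^m)\|_m\le\max_{m}\kappa_m\|\delta^m-\delta'^m\|_m\le\kappa\,\max_{m}\|\delta^m-\delta'^m\|_m=\kappa\,\|\delta-\delta'\|_{\ast},
\end{equation*}
so $T$ is a contraction with modulus $\kappa<1$ on $(\mathbb{R}^{dM},\|\cdot\|_{\ast})$. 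In particular the Banach fixed point theorem applies to the single map $T$, and its unique fixed point is the stacked vector $(\bm{\delta}^{*1},\dots,\bm{\delta}^{*M})$ of the per-market fixed points, so iterating $T$ once is equivalent to iterating the $M$ maps $T^m$ separately.

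There is no genuine obstacle in this argument; the only points needing care are (i) choosing the correct norm on the product space, namely the block-wise maximum of the per-block norms rather than, say, the Euclidean norm, for which the displayed inequality need not hold with the same modulus, and (ii) using finiteness of $M$ to guarantee $\kappa=\max_m\kappa_m<1$. As an alternative one could instead verify Blackwell's sufficient conditions directly for $T$: monotonicity is inherited block by block from the $T^m$, and the discounting condition holds with constant $\kappa$ since $T(\delta+c\mathbf{1})$ applies $T^m(\delta^m+c\mathbf{1})\le T^m(\delta^m)+\kappa_m c\mathbf{1}$ in each block; but the norm computation above is the most direct route.
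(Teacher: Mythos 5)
Your proof is correct and takes essentially the same approach as the paper: the paper's own one-line argument is precisely that $C(M)=\max_m C^m<1$ serves as the contraction constant for the stacked map $T$. Your version merely adds the (sensible) explicit choice of the block-wise maximum norm on $\mathbb{R}^{dM}$, which the paper leaves implicit.
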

	\begin{proof}
		Let $C^m<1$ be the contraction constant that $|T^m(\delta_1)-T^m(\delta_2)|<C^m|\delta_1-\delta_2|$. Then  $C(M)=\max{C^m}<1$ is the contraction constant for $T$. 
	\end{proof}
	The computational burden for this combined $T$ can be potentially high because: 1. even though iteration on matrix is faster than iteration over $M$ markets, the iteration on matrix is still slow when $M$ is large; 2. the uniform contraction constant $C(M)$ can be close to one and number of iteration to achieve certain tolerance level may be large. The following algorithm is helpful to reduce the running time:
	\begin{itemize}
		\item Set up a tolerance level $tol$ and  a threshold integer $K_{thr}$. Run the iteration on $T$, and count the number of $\delta^m$ such that $|T(\delta^m)-\delta^m|<tol$, denote this number as $K_{con}$.
		\item  When $K_{con}>K_{thr}$, collect the remaining markets index and construct the new contraction $T'=\{T^m\}_{remain}$. Iterate until convergence.
		\item Multiple threshold  to decide remaining markets can be set up to further boost the speed.
	\end{itemize}
	This algorithm exploits the fact that the contraction mapping is simply the stacking of individuals. The intuition is that if $C^m\in \{C^{small},C^{large}\}$, and when the number of markets falls into $C^{large}$ group is relatively small, the algorithm avoids the slow iteration on $C^{large}$ and also avoid excessive iteration on markets with $C^{small}$. This algorithm boost the computation speed of contraction mapping by a factor of 10 in the empirical application.

\end{document}